\documentclass[11pt]{article}
\usepackage{csquotes}
\usepackage[english]{babel}
\usepackage{listings}
\usepackage{xcolor}
\usepackage{indentfirst}

\usepackage[letterpaper,top=2cm,bottom=2cm,left=3cm,right=3cm,marginparwidth=1.75cm]{geometry}
\usepackage[
  backend=biber,
  style=alphabetic,   
  sorting=nyt,
  maxalphanames=5,
  minalphanames=4,
  maxbibnames=10
]{biblatex}

\usepackage{mdframed}

\usepackage{amsmath}
\usepackage{amssymb}
\usepackage{amsmath,amssymb}
\usepackage[most]{tcolorbox}
\usepackage{amsthm}
\usepackage{quantikz}
\usepackage{bm}
\usepackage{bbm}
\usepackage{booktabs}   
\usepackage{graphicx}
\usepackage{braket}
\usepackage{comment}
\usepackage{xcolor}
\usepackage{tikz-cd}
\usepackage[colorlinks=true, allcolors=blue]{hyperref}
\usepackage{changepage}
\usepackage{cleveref}
\usepackage{url}

\newtheoremstyle{indented}
  {\topsep}   
  {\topsep}   
  {\itshape}  
  {0em}     
  {\bfseries} 
  {.}         
  {.5em}      
  {}          
\theoremstyle{indented}
\newtheorem{definition}{Definition}[section]
\newtheorem{theorem}{Theorem}[section] 
\newtheorem*{theorem*}{Theorem}

\newtheorem*{corollary*}{Corollary}
\newtheorem{remark}{Remark}[section]
\newtheorem{fact}[theorem]{Fact} 
\newtheorem{lemma}[theorem]{Lemma} 

\newtheorem{corollary}[theorem]{Corollary} 

\newtheorem{proposition}[theorem]{Proposition} 
\usepackage{caption}
\usepackage[shortlabels]{enumitem}
\usepackage{mathtools}

\usepackage{multirow}
\usepackage{upgreek}

\newcommand{\complex}{{\mathbb C}}

\newcommand{\finitefield}{{\mathbb F}}

\newcommand{\norm}[1]{\left\| #1 \right\|}

\newcommand{\size}[1]{\left| #1 \right|}

\newcommand{\abs}[1]{\left| #1 \right|}

\DeclareMathOperator{\Span}{span}
\DeclareMathOperator{\expct}{{\mathbb E}}

\newcommand{\transpose}{{\mathrm T}}

\newcommand{\complexi}{{\mathrm{i}}}

\newcommand{\id}{{\mathbb I}}

\newcommand{\linear}{{\mathsf L}}

\newcommand{\generallinear}{\mathsf{GL}}

\newcommand{\ketbra}[2]{| #1 \rangle\!\langle #2 |}
\newcommand{\density}[1]{\ketbra{#1}{#1}}

\newcommand{\tensor}{\otimes}

\newcommand{\adjoint}{*}

\newcommand{\directsum}{\oplus}

\newcommand{\suppress}[1]{}

\newcommand{\email}[1]{\href{mailto:#1}{#1}}

\newcommand{\hG}{\widehat{G}}
\newcommand{\qft}{{\mathrm F}}
\newcommand{\hQ}{\widehat{Q}}
\newcommand{\hT}{\widehat{T}}
\newcommand{\hP}{\widehat{P}}
\newcommand{\hK}{\widehat{K}}

\suppress{
\newcommand{\Qo}{\mathrm{Q}^{1,\pub}}
\newcommand{\Qoe}{\mathrm{C}^{1,*}}

\newcommand{\Aut}{\operatorname{Aut}}
\newcommand{\MEM}{\textsc{Memb}}
\newcommand{\ShiftEq}{\textsc{ShiftEq}}
\newcommand{\shift}{\mathrm{shift}}
\newcommand{\Corr}{\mathsf{Corr}}
\newcommand{\sfH}{\mathsf{H}}

\newcommand{\pub}{\mathrm{pub}}
\newcommand{\Dom}{\mathrm{Dom}}

\newcommand{\bbF}{\mathbb{F}}

\newcommand{\bbZ}{\mathbb{Z}}
\newcommand{\bbE}{\mathop{\mathbb{E}}}

\newcommand{\bg}{\mathbf{g}}
\newcommand{\bh}{\mathbf{h}}
\newcommand{\be}{\mathbf{e}}
\newcommand{\bi}{\mathbf{i}}
\newcommand{\bs}{\mathbf{s}}
\newcommand{\bu}{\mathbf{u}}
\newcommand{\bv}{\mathbf{v}}
\newcommand{\bG}{\mathbf{G}}
\newcommand{\bH}{\mathbf{H}}

\newcommand{\tS}{\widetilde{S}}
\newcommand{\tG}{\widetilde{G}}
\newcommand{\ts}{\widetilde{s}}
\newcommand{\tbg}{\widetilde{\bg}}
\newcommand{\tbh}{\widetilde{\bh}}
\newcommand{\tw}{\widetilde{w}}
\newcommand{\tT}{\widetilde{T}}

\newcommand{\tr}{\operatorname{tr}}

\newcommand{\muhard}{\mu_{\bG_1\bG_2,\bH S}}
}

\newcommand{\Qo}{\mathrm{Q}^{1,\pub}}
\newcommand{\Qoe}{\mathrm{C}^{1,*}}

\newcommand{\Aut}{\operatorname{Aut}}
\newcommand{\MEM}{\textsc{Memb}}
\newcommand{\ShiftEq}{\textsc{ShiftEq}}
\newcommand{\shift}{\mathrm{shift}}
\newcommand{\Corr}{\mathsf{Corr}}
\newcommand{\Herm}{\operatorname{Herm}}

\newcommand{\sfH}{\mathsf{H}}

\newcommand{\pub}{\mathrm{pub}}
\newcommand{\Dom}{\mathrm{Dom}}

\newcommand{\bbF}{\mathbb{F}}

\newcommand{\bbZ}{\mathbb{Z}}
\newcommand{\bbE}{\mathop{\mathbb{E}}}

\newcommand{\bg}{\bm{g}}
\newcommand{\bh}{\bm{h}}
\newcommand{\be}{\bm{e}}
\newcommand{\bi}{\bm{i}}
\newcommand{\bs}{\bm{s}}
\newcommand{\bu}{\bm{u}}
\newcommand{\bv}{\bm{v}}
\newcommand{\bG}{\mathbf{G}}
\newcommand{\bH}{\mathbf{H}}
\newcommand{\rH}{\mathrm{H}}
\newcommand{\hrH}{\widehat{\mathrm{H}}}

\newcommand{\tS}{\widetilde{S}}
\newcommand{\tG}{\widetilde{G}}
\newcommand{\ts}{\widetilde{s}}
\newcommand{\tbg}{\widetilde{\bg}}
\newcommand{\tbh}{\widetilde{\bh}}
\newcommand{\tw}{\widetilde{w}}
\newcommand{\tT}{\widetilde{T}}

\newcommand{\tr}{\operatorname{tr}}

\newcommand{\muhard}{\mu_{\bG_1\bG_2,\bH S}}

\title{An exponential separation between entanglement-assisted and unassisted one-way quantum communication}

\author{
Ryan Anselm~\thanks{%
Department of Computer Science, University of Texas at Austin.
Email: \email{ryan.anselm@utexas.edu}~.
}
\and
Srijita Kundu~\thanks{%
Hon Hai (Foxconn) Research Institute.
Email: \email{srijita.kundu@foxconn.com.sg}~.
}
\and
Olivier Lalonde~\thanks{%
School of Computer Science, and
Institute for Quantum Computing,
University of Waterloo, Canada.
Email: \email{olalonde@uwaterloo.ca}~.
}
\and
Ashwin Nayak~\thanks{%
Department of Combinatorics and Optimization,
and Institute for Quantum Computing, 
University of Waterloo, Canada.
Email: \email{academic@ashwinnayak.info}~.
}
}

\date{}

\begin{document}
\maketitle

\begin{abstract}
A longstanding question in quantum communication complexity is whether some task can be accomplished with a small amount of communication in the presence of entanglement, yet require much more quantum communication in the absence of entanglement. Separations of this nature were previously known for relational problems and, in the simultaneous message passing model, for partial functions. But it has remained unresolved whether any such separation exists for a total Boolean function. We resolve this question with an exponential separation in the one-way setting: we exhibit a family of total Boolean functions $f_n\colon \{0,1\}^n \times \{0,1\}^n \to \{0,1\}$ that can be computed with $O(\log n)$ bits of one-way classical communication given prior entanglement, but that require $\Omega(n^{1/3})$ qubits of one-way quantum communication without entanglement. Our function is a special case of the subgroup membership problem, first studied in the communication setting by Aaronson, Le Gall, Russell, and Tani. 
\end{abstract}

\section{Introduction}
The no-communication theorem says that entanglement cannot, by itself, be used to transmit information. No matter how much entanglement Alice and Bob share, Alice cannot send Bob a message by acting on her share alone. However, this says little about how much entanglement can help when communication is allowed but comes at a cost per bit (or qubit) transmitted. This leads to one of the central questions of quantum communication complexity, raised early on by Buhrman and de Wolf~\cite{buhrman2001communication} and by Brassard~\cite{brassard2003quantum}:
\begin{center}
    \textit{When can entanglement help save on (quantum) communication, and by how much?}
\end{center}
We show there can be \emph{exponential} savings for jointly computing a total Boolean function when communication is restricted to a single message from Alice to Bob. Bob's input specifies a yes-or-no question about Alice's input, but Alice must send her message without knowing the question. With entanglement shared before either input is known, a short classical message lets Bob answer correctly with high probability. Without prior entanglement, Alice must send an exponentially longer message to achieve the same accuracy, even if her message is quantum.

\subsection{Background}
Communication complexity studies the amount of communication required to solve a given distributed task. The canonical task takes the following form: given a function $f\colon \{0,1\}^n \times \{0,1\}^n \to \{0,1\}$, Alice receives an input $x\in \{0,1\}^n$, Bob receives an input $y \in \{0,1\}^n$, and one of them must output $f(x,y)$. The communication complexity of $f$ is the minimum number of bits that Alice and Bob need to exchange, in the worst case, to achieve this task. Trivially, the communication complexity of $f$ is at most $n$ bits by a protocol which sends one party's entire input, though this is often far from optimal. A central goal of the field is to determine when substantially less communication suffices, and to prove lower bounds showing when it does not.

We will compare two standard quantum analogs of Yao's classical communication model \cite{yao1979complexity}. In the first, which we call the \emph{unassisted quantum communication model}, Alice and Bob can communicate by sending quantum messages, but do not share any entanglement at the outset of the protocol. This model was introduced by Yao~\cite{yao1993quantum}. In the second, which we call the \emph{entanglement-assisted classical communication model}, Alice and Bob communicate by sending classical messages, but are allowed to share an entangled state of their choosing, independent of the inputs, at the outset of the protocol. This model was introduced by Cleve and Buhrman~\cite{Cleve_1997}. Note that quantum teleportation (\cite{bennett1993teleporting}) shows that any protocol in the first quantum model can be simulated in the second at the cost of doubling the amount of communication. Unless stated otherwise, all the communication models we consider tolerate error probability at most $1/3$ on every input, and allow Alice and Bob to use shared randomness. Shared entanglement subsumes shared randomness, since locally measuring EPR pairs produces shared random bits. Newman's theorem (\cite{newman91}) shows that shared randomness can be removed from communication protocols with only $O(\log n)$ additional communication and an arbitrarily small constant increase in error. However, it is known that Newman's proof technique does not apply to entanglement-assisted protocols (\cite{gavinsky2006rolesharedentanglement, jain2008optimaldirectsumprivacy}).

Although exponential quantum–classical communication separations have long been known (\cite{raz1999exponential, baryossef04, gavinsky2008exponentialseparationsonewayquantum, klartag2010quantumonewaycommunicationexponentially}), the extent to which shared entanglement can reduce quantum communication is a longstanding open problem (\cite{brassard2003quantum, buhrman2001communication}). Superdense coding, introduced by Bennett and Wiesner~\cite{bennett92}, gives a factor-of-$2$ savings for transmitting classical information using quantum messages, but no entanglement advantage stronger than this was previously known for total functions. A recurring obstacle is that many standard quantum communication lower-bound techniques apply equally well for entanglement-unassisted quantum protocols as for entanglement-assisted ones (\cite{linial07}), and therefore cannot establish such an advantage. 

Known exponential separations between the two models are generally confined to restricted settings. One such example is due to Gavinsky, Kempe, Regev, and de Wolf~\cite{gavinsky2005boundederrorquantumstateidentification}, who gave an exponential separation between the two models in the simultaneous message passing (SMP) setting for a relational problem. More recently, Arunachalam and Girish~\cite{arunachalam2025tradeoffsentanglementcommunication} obtained such a separation in the SMP setting for a partial function. However, these separations both collapse when direct (one-way or two-way) quantum communication between Alice and Bob is permitted, for the simple reason that their efficient entanglement-assisted protocols use roughly as much shared entanglement as communication. Permitting Alice and Bob to directly exchange quantum messages thus lets them establish sufficient shared entanglement themselves and then run the efficient entanglement-assisted protocol. In a different vein, Hasegawa, Le Gall, and Modanese~\cite{hasegawa2026maximumseparationquantumcommunication} recently showed a separation between the two models for a relational task with distributed outputs. Their task can be solved with zero communication given prior entanglement, but requires $\Omega(n)$ qubits of communication without prior entanglement, even when arbitrary two-way quantum communication is allowed. Despite progress along different dimensions of the problem, it remained open whether entanglement-assisted classical protocols can substantially outperform unassisted quantum protocols for functional problems when Alice and Bob communicate directly.

Another longstanding problem is the quest for strong quantum--classical communication separations for \emph{total} functions. The first such separation was shown for the disjointness function by Buhrman, Cleve, and Wigderson~\cite{buhrman1998quantumvsclassicalcommunication}, and was later sharpened to a quadratic separation by Aaronson and Ambainis~\cite{aaronson03spatialsearch}. Anshu et al.~\cite{Anshu_2016} subsequently obtained the first superquadratic separation, with a nearly $2.5$-power gap. Later work on quantum and randomized query complexity by Bansal and Sinha~\cite{bansal2020kforrelationoptimallyseparatesquantum} yielded, via query-to-communication lifting, a nearly cubic quantum--classical communication separation for a total function. However, until recently, no superpolynomial separation was known for any total function. Moreover, no polynomial separation for a total function was known in which the quantum protocol used only a constant number of rounds. Both of these gaps were recently closed by Gavinsky~\cite{gavinsky2026quantumcommunicationcomplexitytotal}, who gave a total function admitting a two-round quantum protocol with polylogarithmic communication, while every randomized classical protocol requires polynomial communication, even with arbitrarily many rounds. Subsequently, Le Gall~\cite{gall2026improvedseparationsquantumclassical} generalized and quantitatively strengthened Gavinsky's separation. Independently and concurrently with Gavinsky, Hasegawa and Le Gall~\cite{hasegawa2026constant} gave a total function with a polynomial quantum--classical separation in which the quantum protocol uses only a constant number of rounds. Whether any asymptotic quantum--classical separation for a total function is possible when the quantum protocol is one-way remains an open question.\footnote{Without shared randomness, the equality function admits a factor-of-$2$ quantum--classical separation using a protocol of Winter~\cite{winter2004quantum}.}

\subsection{Our results}
\label{sec-results}
We show that shared entanglement can give an exponential reduction in the one-way communication cost of computing a total function. Our main result is the following:
\begin{theorem}\label{thm:maintheorem}
    There exists a family of total Boolean functions
    \[
        f_n : \{0,1\}^n \times \{0,1\}^n \to \{0,1\}
    \]
    such that $f_n$ admits an entanglement-assisted one-way classical
    protocol from Alice to Bob with communication cost $O(\log n)$
    and worst-case error at most $1/3$. The protocol we construct uses
    $\Theta(n)$ shared EPR pairs. In contrast, without shared
    entanglement, the one-way quantum and randomized communication
    complexities of $f_n$, with the same error bound, are both
    $\Theta(n^{1/3})$.
\end{theorem}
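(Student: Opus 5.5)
The function will be an instance of subgroup membership over a vector space over $\bbF_2$, with each party given a subspace. Set $G=\bbF_2^m$ with $n=\Theta(m^2)$. Alice receives a subspace $\bH\le G$, encoded by a basis (roughly $m^2$ bits). Bob receives a subspace $S$, encoded similarly and perhaps of bounded dimension. Then $f(\bH,S)=1$ iff $S\subseteq \bH$. The macros \MEM\ and the notation $\mu_{\bG_1\bG_2,\bH S}$ point in this direction. Every pair of encodings is a legal input, so the function is total.

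\textbf{Upper bound.} Alice and Bob share the maximally entangled state $\frac{1}{\sqrt{|G|}}\sum_{g}\ket{g}\ket{g}$, which is $m$ EPR pairs. To get the advertised $\Theta(n)$ EPR pairs I would use about $m$ independent copies. On each copy Alice measures in the coset basis of $\bH$, namely the projectors onto $\Span\{\ket{g+h}:h\in\bH\}$. She obtains a coset label $c=g+\bH$, and Bob's register collapses to the uniform superposition over the coset $g+\bH$.

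By a standard transpose argument, measuring Alice's half in the basis $\{\ket{x}\}$ is equivalent to Bob's half being an orthonormal state. Equivalently, Alice can apply the QFT and measure, which gives Bob the mirrored state. Here is the key point I would exploit. Suppose Alice Fourier-measures and obtains $\chi\in\bH^\perp$ up to sign/phase data. Then Bob's state is a coset state of $\bH$ twisted by a character, and Bob can test whether $S\subseteq\bH$ by checking invariance of his state under translations by $s\in S$. For this Alice only needs to send a short classical "correction" of $O(\log n)$ bits, for instance a hash of the coset label (an index into a small list), in the spirit of ALRT.

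Concretely, I would first try the following. Bob applies the shift-by-$s$ swap test, i.e.\ the \ShiftEq\ test, between his coset state and its shift. The test accepts with certainty if $s\in\bH$ and with probability $\le1/2$ otherwise. Running it for a random $s\in S$ over $O(1)$ copies gives one-sided error. Alice's message is needed only to fix the phase or coset reference, and it costs $O(\log m)$ bits per copy for $O(1)$ copies.

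\textbf{Lower bound.} The randomized upper bound $O(n^{1/3})$ comes from a birthday-type protocol. Alice sends $O(m^{2/3})$ random elements or a random hash-based sample. I would also check that the parameters match $n^{1/3}$, possibly by choosing $n$ differently, e.g.\ with Alice's subspace of codimension $k$.

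For the quantum lower bound I would define a hard distribution. Bob gets a random 1-dimensional $S=\langle s\rangle$ and Alice a random $\bH$ of codimension about $m^{1/3}$ built from the $\bG_1,\bG_2$ structure. Unassisted one-way quantum protocols are then equivalent to Bob distinguishing states $\rho_{\bH}$ with measurements $\{M_s\}$. I would bound success via a hypercontractivity / matrix-valued Fourier argument: an $\ell$-qubit state has small total weight on low-degree Fourier coefficients, in the style of Gavinsky–Kempe–Kerenidis–Raz–de Wolf. This would show that advantage $\Omega(1)$ forces $\ell=\Omega(n^{1/3})$.

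\textbf{Main obstacle.} The crux is that the lower bound must fail for entangled protocols, so linear-algebraic lower bound methods like $\gamma_2$ are useless. The argument must exploit that Alice's message is a fixed density matrix independent of any shared state. I expect the hardest step to be the matrix Fourier/hypercontractive inequality tailored to the subspace distribution, together with showing that the correlation $\Corr$ of $\rho_{\bH}$ with membership of $s$ requires degree-$m^{1/3}$ Fourier mass.
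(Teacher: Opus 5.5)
\textbf{There is a genuine gap: the function.} Subspace membership over the abelian group $\bbF_2^m$ cannot give the separation in any parameter regime.

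First, your entanglement-assisted protocol does not cost $O(\log n)$ bits. Projecting Alice's half onto the span of a coset leaves Bob with the classical mixture over that coset, not the coset superposition. If Alice instead measures in a basis of coset states, Bob receives $|\bH|^{-1/2}\sum_{h\in\bH}(-1)^{\chi\cdot h}\ket{c+h}$, with a twist $\chi$ known only to Alice. For $s\in\bH$ this state is an eigenvector of translation by $s$ with eigenvalue $(-1)^{\chi\cdot s}$. Averaged over Alice's uniformly random outcome, Bob's Hadamard test therefore accepts with probability exactly $1/2$ whether or not $s\in\bH$ (for $s\neq 0$). A hash of the coset label $c$ does not help, since $c$ plays no role in the test. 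Removing the twist costs about $\log|\bH|$ bits. That is exactly what the paper pays via flat-state remote state preparation (Theorem~\ref{thm:rspprotocol}), and it is why the paper restricts to subgroups of order at most $k$.

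This leaves you stuck either way:
\begin{itemize}
\item If Alice's subspaces may be large (as in your hard distribution with codimension $m^{1/3}$), the function contains inner product on $\Theta(m)$ bits. Take $\bH=\{(x,0):\chi\cdot x=0\}$ and $s=(y,0)$. Inner product needs $\Omega(m)$ bits of one-way communication even with entanglement.
\item If $\dim\bH=O(\log n)$, Theorem~\ref{thm:abelian_subgroup} applies. Alice sends the indices of the first two of $8|\bH|$ shared-random characters lying in $\bH^\perp$, and Bob checks that both vanish on a basis of $S$. This also works for the test $S\subseteq\bH$, since $\langle\bH,S\rangle\supsetneq\bH$ has at least twice as many elements. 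So $O(\log n)$ classical bits suffice with no entanglement at all.
\end{itemize}
Structurally, every irrep of an abelian group is one-dimensional, so there is nothing for an entanglement-sensitive lower bound to exploit.

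\textbf{What is missing is the paper's mechanism.}
\begin{itemize}
\item A nonabelian group with a central element $\zeta$ of order $3$ that acts trivially in every irrep of dimension below a huge $D$. The paper uses $\rH_{2^{2r}}(\bbF_3)$ with $D=3^{2^{2r}}$.
\item The encoding of $\ShiftEq_{G,r}$ as $\MEM_{\tG,2^{r+1}}$ in $G^{\tS}\rtimes_{\shift}\tS$, via a conjugate of $H_0=\{(\be,\ts):\ts\in\tS\}$. This makes the subgroups tiny while the ambient group is nonabelian.
\item A lower bound that actually uses the absence of entanglement. Lemma~\ref{lem:op-norm-bound} says Alice's best message is a top eigenvector of a $2^C$-dimensional operator. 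This is followed by a Schatten-$p$ moment bound and a hybrid over phases controlled by $\|(\id-P)K\|_\infty\le 1/D$. The result is $\Omega(\min(2^r,(\log D)/2^r))=\Omega(2^r)=\Omega(n^{1/3})$ with $n=\Theta(2^{3r})$.
\end{itemize}

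Your lower-bound paragraph names hypercontractivity but states no inequality and never derives the exponent $1/3$. Your classical upper bound is also off: $O(m^{2/3})$ elements of $\bbF_2^m$ cost $\Theta(m^{5/3})$ bits, whereas $n^{1/3}=\Theta(m^{2/3})$. The paper's matching $O(n^{1/3})$ bound comes from the hashing SMP protocol for bounded-size set membership with $k=2^{r+1}$.

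Finally, the preamble notation you decoded refers to the shifted-equality inputs $(\bg_1,\bg_2)$ and $(\bh,s)$, with $S=\bbZ_2^r$ an index set, not to subspaces. Likewise $\ShiftEq$ denotes that function, not a swap test.
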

Theorem~\ref{thm:maintheorem} gives the first known asymptotic separation between entanglement-assisted classical communication and unassisted quantum communication for a functional problem, thus answering the question of \cite{brassard2003quantum, buhrman2001communication} in the one-way setting. Moreover, the separation is exponential and also holds for a total function. In particular, it rules out a direct shared-entanglement analogue of Newman's theorem, even for total functions. 

Our function is based on subgroup membership, a problem introduced by Watrous~\cite{Watrous00-quantum-proofs} and ported to the one-way communication setting by Aaronson, Le Gall, Russell, and Tani \cite{Aaronson_2011}. In the one-way communication version of subgroup membership, Alice receives a subgroup $H$ of a finite group $G$, Bob receives an element $g\in G$, and they must decide whether $g\in H$ or not with a single message from Alice to Bob. In Section~\ref{sec-subgroup-membership}, we introduce a variant called \emph{bounded-order subgroup membership} ($\MEM_{G,k}$) in which Alice's subgroup has order at most $k$ for some parameter $k$. We show an entanglement-assisted classical communication upper bound of $O(\log k)$ bits for this variant, while for a suitable family of groups we show that substantially more communication is required without entanglement.

Shi and Zhu~\cite{Shi_2008} showed that any entanglement-assisted quantum protocol using $Q$ qubits or bits of communication can be simulated, to any fixed constant accuracy, by an unassisted classical SMP protocol using shared randomness and $2^{O(Q)}$ bits of communication, independent of the amount of entanglement. Since any SMP protocol can be simulated by a one-way protocol, their result implies that an entanglement-assisted classical one-way protocol using $C$ bits can be simulated by a randomized one-way protocol using $2^{O(C)}$ bits. Combined with the $\Omega(n^{1/3})$ randomized one-way lower bound of Theorem~\ref{thm:maintheorem}, this implies that $C=\Omega(\log n)$, so the $O(\log n)$-bit entanglement-assisted protocol we give is asymptotically optimal. Theorem~\ref{thm:maintheorem} also implies that an exponential blowup in the simulation cost is \emph{necessary}, even for total functions. In Section~\ref{subsec:entanglement-assisted}, we sharpen the simulation theorem of Shi and Zhu, obtaining classical SMP simulations of entanglement-assisted classical protocols using \(C\) bits of communication with cost \(O(2^C)\) in the one-way case and \(O(2^{2C})\) in the general case. By tracking constants carefully, it turns out that our results imply that the former simulation is asymptotically optimal. 

Theorem~\ref{thm:maintheorem} implies that any one-way protocol for our function using $E$ shared EPR pairs and $C$ bits or qubits of one-way communication must satisfy a trade-off: $E + C = \Omega(n^{1/3})$. To see this, suppose a one-way protocol uses $E$ shared EPR pairs and $C$ bits or qubits of communication. By first establishing the shared entanglement, such a protocol can be converted into an unassisted quantum protocol using $E+C$ qubits of one-way communication, and our $\Omega(n^{1/3})$ unassisted quantum one-way lower bound then implies the trade-off. It remains open how much shared entanglement is necessary to achieve $O(\log n)$ communication for our function: our lower bound requires $\Omega(n^{1/3})$ EPR pairs, whereas our protocol uses $\Theta(n)$.

\subsection{Technical Overview}

Our separation is proven in several stages. In Section~\ref{sec-subgroup-membership}, we prove an entanglement-assisted classical one-way upper bound of $O(\log k)$ bits of communication for the bounded-order subgroup membership function, denoted $\MEM_{\Gamma,k}$, over any finite group $\Gamma$. In Section~\ref{subsec:upperboundsreduction}, we reduce the \emph{shifted equality} function, $\ShiftEq_{G,r}$, to $\MEM_{\tG,2^{r+1}}$ where $\tG$ is a larger group constructed from the finite group $G$ and $r \geq 1$. In Section~\ref{subsec:q-lb}, we prove an unassisted quantum one-way lower bound for $\ShiftEq_{G,r}$ conditional on some representation-theoretic properties of the group $G$ it uses as a parameter. Finally, in Section~\ref{sec:heis-group}, we instantiate an explicit family of groups with the properties needed for our main theorem.

The concrete instantiation of the bounded-order subgroup membership problem we introduce is inspired by the Boolean Hidden Matching problem \cite{baryossef04, gavinsky2008exponentialseparationsonewayquantum}, as presented in \cite{shi2012limits}. In a specific case of the problem as presented therein, Alice is given a string $x\in \{0,1\}^{2n}$, Bob is given a perfect matching $M$ on $[2n]$, and a string $w\in\{0,1\}^n$. We view~$M$ as an array of length~$n$, each of whose entries is a pair of indices. There is a promise that there is a $b\in\{0,1\}$, such that for all $k\in[n]$, if $M_k=(i,j)$, then $x_i\oplus x_j=b\oplus w_k$, and the goal is to determine $b$.

We would like a generalization of Boolean Hidden Matching which can be realized as an instance of the subgroup membership problem. This would enable the use of protocols for the (bounded-order) subgroup membership problem with entanglement, and moreover, the use of the Hidden Matching-like structure to prove lower bounds. The generalization we choose is the following: instead of Alice and Bob being given strings in $\{0,1\}^n$, they are given functions from $\bbZ_2^r$ to some group $G$. Specifically, Alice is given two functions $\bg_1, \bg_2\colon \bbZ_2^r \to G$, and Bob is given a single such function $\bh\colon \bbZ_2^r \to G$, and a string $s\in \bbZ_2^r$. They are required to determine if for all $x\in\bbZ_2^r$, $\bg_1(x)\bg_2(x+s)=\bh(x)$. We call this the shifted equality problem and denote it as $\ShiftEq_{G,r}$. 

The analogy to Boolean Hidden Matching can be seen by considering two labeled copies of $\bbZ_2^r$, one indexing the coordinates of $\bg_1$ and the other indexing the coordinates of $\bg_2$. For each $x\in\bbZ_2^r$, the pair $(x,x+s)$ connects the coordinate $x$ in the first copy to the coordinate $x+s$ in the second copy. Since translation by $s$ is a bijection on $\bbZ_2^r$, these pairs form a perfect matching between the two copies. The XOR operation is then replaced by the group operation of $G$. Unlike Boolean Hidden Matching, however, we impose no promise on the structure of a no-instance. Equivalently, defining the discrepancy
\[\delta(x) := \bh(x)^{-1}\bg_1(x)\bg_2(x+s), \]
we have $\ShiftEq_{G,r}((\bg_1,\bg_2),(\bh,s))=1$ precisely when $\delta(x)$ is the identity element of $G$ for every $x$. On a no-instance of $\ShiftEq_{G,r}$, the discrepancy function $\delta$ is not identically equal to the identity element of $G$. Its value may vary arbitrarily with $x$, and it may still equal the identity at some coordinates. Thus, $\ShiftEq_{G,r}$ is a total function.

Almost all of the arguments we give apply to $\ShiftEq_{G,r}$ for a general group $G$. We pick a specific group that suits our purposes only at the very end.

\paragraph{Upper bound with entanglement.}
In Section~\ref{sec-subgroup-membership}, we give an entanglement-assisted classical one-way protocol for the bounded-order subgroup membership function, denoted $\MEM_{G,k}$, using $O(\log k)$ bits of one-way classical communication and $O(\log |G|)$ shared EPR pairs. Here $G$ is a fixed finite group, Alice is given a subgroup $H\leq G$ of order at most $k$, and Bob receives $g\in G$ and must decide whether $g\in H$. 

Our entanglement-assisted protocol is derived from a QMA verification algorithm due to Watrous~\cite{Watrous00-quantum-proofs}, which was also used in the quantum one-way protocol for subgroup membership due Aaronson, Le Gall, Russell, and Tani~\cite{Aaronson_2011}. Essentially,
Alice sends the state $\ket{H}=|H|^{-1/2}\sum_{h\in H}\ket{h}$, and Bob performs a Hadamard test with the unitary $U_g:\ket{x}\mapsto\ket{xg}$ which permutes the group elements of $G$ according to group multiplication by $g$ on the right.
Since $\ket{Hg}=\ket{H}$ when $g\in H$, and $\braket{H|Hg}=0$ otherwise, the Hadamard test returns $0$ with probability $1$ or $1/2$, respectively.
Sending a constant number of copies gives a bounded-error protocol using $O(\log |G|)$ qubits.

The main difference in our protocol from theirs is that instead of preparing and transmitting a quantum state $\ket{H}$, Alice performs \emph{remote state preparation} (\cite{Bennett_2001}) to prepare a specified state on Bob's side. Unlike quantum teleportation (\cite{bennett1993teleporting}), remote state preparation requires the sender to know a classical description of the quantum state they would like the recipient to hold. Alice remotely prepares for Bob the uniform mixture over the left coset states of $H$. Let $m = [G : H]$ and let $T_1, \dots, T_m$ be the left cosets of $H$ in $G$. Then let
\[\rho = \frac{1}{m}\sum_{i=1}^m \ket{T_i} \bra{T_i}\quad \text{where}\quad \ket{T_i} = \frac{1}{\sqrt{|H|}}\sum_{t\in T_i}\ket{t}.\] 
Since the left coset states are orthogonal, $\rho$ is a
rank-$m$ flat state of dimension $|G|$, a uniform mixture of $m$ pure $|G|$-dimensional states. Remotely preparing $\rho$ approximately can be done using $O(\log |H|)$ bits of communication and $O(\log |G|)$ shared EPR pairs using a protocol for remote preparation of flat states given by \cite{kundu2026nearoptimalentanglementcommunicationtradeoffsremote}. Bob then applies the Hadamard test as in \cite{Watrous00-quantum-proofs,Aaronson_2011} to $\rho$. Repeating this a constant number of times gives a bounded-error protocol. 

Preparing a high-rank mixed state $\rho$ rather than a pure state $\ket{H}$ on Bob's side is the key difference which enables a reduction of the entanglement-assisted classical communication down from $O(\log |G|)$ bits to $O(\log |H|)$ bits. This does not affect correctness of the protocol because any left coset state of $H$ also would have sufficed as a quantum message in the protocol of \cite{Aaronson_2011}. When $|H| \leq k = \mathrm{polylog}(|G|)$, the entanglement-assisted communication cost is only $O(\log\log |G|)$, enabling the possibility of an exponential separation from unassisted quantum one-way communication, which requires at most $O(\log |G|)$ qubits due to the quantum one-way protocol described earlier. 

\paragraph{Lower bound without entanglement.} 
We develop a technique that gives strong lower bounds on the communication in unassisted quantum one-way protocols, but not for entangle-ment-assisted classical protocols. 
As previously mentioned, this has been a major technical barrier in the study of two-party quantum communication --- most known techniques apply equally well to protocols with and without entanglement-assistance.

By the easy direction of Yao's min-max principle, it suffices to bound the success probability of an entanglement-unassisted protocol with low communication under a particular distribution on Alice and Bob's inputs. Suppose Alice and Bob's inputs are chosen from a distribution $\mu_{XY}$. Further suppose that Alice sends a $C$-qubit quantum message to Bob without entanglement assistance, and Bob performs a binary POVM on the message to obtain the output. The success probability of the protocol for computing a function $f(x,y)$ is bounded from above by
\begin{equation}\label{eq:op-bound}
\max_{\{M^y_b\}} \bbE_{x \sim \mu_X} [\|\bbE_{y \sim \mu_{Y|X=x}} M^y_{f(x,y)}\|_\infty ]
\end{equation}
where the maximum is taken over collections $\{M^y_b\}_{y \in Y, b \in \{0,1\}}$ of $2^C$-dimensional POVMs (Lemma~\ref{lem:op-norm-bound}; see also~\cite{KGDGGGHJMNHA25-quantum-classical-separation,GLYZ26-qcc-search-problems}). We get this expression by observing that for a fixed measurement strategy~$\{M^y_b\}$ used by Bob, Alice can maximize the success probability by picking the top eigenvector of the operator~$\bbE_{y \sim \mu_{Y|X=x}} M^y_{f(x,y)}$ as her $C$-qubit message. If the protocol is entanglement-assisted, Bob's POVM~$\{M^y_b\}$ acts on the message register as well as his share of the entangled state. The entangled state may have arbitrarily large dimension, and the resulting optimization problem is vastly different.

We now describe a hard distribution for $\ShiftEq_{G,r}$ that leads to a sharp upper bound on the expression in \eqref{eq:op-bound}. For this we impose additional structure on $G$: we assume $G$ has a central element~$\zeta$, i.e, an element that commutes with all other elements of $G$, which has order exactly~$3$. Under the hard distribution Alice gets uniformly random elements $\bg_1, \bg_2$ and Bob gets a uniformly random offset $s$, and $\bh$ that satisfies
\[ \bh(x) = \zeta^\phi \bg_1(x)\bg_2(x+s) \qquad \forall x\in\bbZ_2^r,\]
where $\phi=0$ with probability $\frac12$, and equal to $1$ and $2$ each with probability $\frac14$. Evidently, $\ShiftEq_{G,r}((\bg_1,\bg_2),(\bh,s))=1$ if $\phi=0$, and $0$ otherwise. Note again the analogy of this hard distribution with the Boolean Hidden Matching problem --- instead of the bit $b$ which has order $2$ under the XOR operation in Hidden Matching, we are multiplying $\bg_1(x)\bg_2(x+s)$ for all $x$ by $\zeta$ which has order $3$ under the group operation.

Letting $\bh_0$ denote Bob's input in the $\phi=0$ case for a fixed $\bg_1,\bg_2, s$, and letting $M^{\bh,s}_b$ denote Bob's measurement operators, under this hard distribution the expression~\eqref{eq:op-bound} is bounded above by
\begin{equation}\label{eq:op-O_h}
\frac12+\max_{\{M^{\bh,s}_1\}}\bbE_{\bg_1,\bg_2}\left[\left\|\bbE_s\left[\frac12M^{\bh_0,s}_1-\frac14M^{\zeta \bh_0,s}_1 - \frac14M^{\zeta^2 \bh_0,s}_1\right]\right\|_\infty\right].
\end{equation}
In the above expression, we have eliminated the outcome-$0$ POVM elements by writing $M^{\bh,s}_1=\id-M^{\bh,s}_0$. Therefore the optimization is only over $2^C$-dimensional operators $M^{\bh,s}_1$ satisfying $0\leq M^{\bh,s}_1\leq \id$. Writing $T(\bg_1,\bg_2) = \bbE_s\left[\frac12M^{\bh_0,s}_1-\frac14M^{\zeta \bh_0,s}_1 - \frac14M^{\zeta^2\bh_0,s}_1\right]$, we bound
\[ \bbE_{\bg_1,\bg_2}\left[\|T(\bg_1,\bg_2)\|_\infty\right]\]
from above.

First we note that the operator norm can be bounded above by any Schatten norm: \[ \|T(\bg_1,\bg_2)\|_\infty \leq (\tr(T(\bg_1,\bg_2)^p))^{1/p}.\]
But even this Schatten norm is hard to upper bound for the matrix $T$ directly. This is because the expectation over $s$ makes $T$ a sum over many dependent terms. (The dependence in the terms arises as they are functions of the same elements $\bg_1, \bg_2$, and the $p$-th power of this sum of dependent terms quickly becomes intractable.) 
Instead, writing $T(\bg_1,\bg_2)= \bbE_s[O_s(\bh_0)]$, we approximate it by
\[ \widetilde T(\bg_1,\bg_2,\bm{\phi})= \bbE_s[O_s(\zeta^{\phi_s}\bh_0)], \]
for a uniformly random $\bm{\phi}=(\phi_s)_{s\in S}\in\mathbb F_3^S$. Conditional on $\bg_1,\bg_2$, the summands in $\widetilde T$ are independent, since the coordinates $\phi_s$ are independent and uniform. Essentially, in $\tT$, 
we ``rotate''~$\bh_0$ by~$\zeta^{\phi_s}$ where $\phi_s$ is independent and uniform for each $s$. The terms in $\tT$ are now independent, and each $O_s$ has dimension $2^C$ and bounded operator norm (due to the measurement operators $M^{\bh,s}_1$ satisfying $M^{\bh,s}_1\leq \id$). Therefore, we are able to more easily bound
\begin{equation}\label{eq:tT-bound}
\bbE_{\bg_1,\bg_2,\bm{\phi}}\left[\tr\left(\tT(\bg_1,\bg_2, \bm{\phi})^p\right)\right] \leq O\left(2^C\left(\frac{4p}{2^r}\right)^{p/2}\right).
\end{equation}
It thus remains to show that $\bbE[\tr(T(\bg_1,\bg_2)^p)]$ can be well approximated by the above expression.\footnote{One may think we can get an expression in terms of $\tT$ directly in the success probability by taking a different $\phi_s$ for every $s$ in the hard distribution, instead of the same $\phi$ for each $s$. However, this is not the case; the distinction is that $O_s(\zeta^{\phi_s}\bh_0)$ assigns its positive coefficient to $M^{\zeta^{\phi_s}\bh_0,s}_1$, whereas the correct output in the communication problem is still $1$ exactly when Bob's array equals $\bh_0$.
}

Showing that $\bbE[\tr(T(\bg_1,\bg_2)^p)]$ is well-approximated by $\bbE[\tr(\tT(\bg_1,\bg_2, \bm{\phi})^p)]$ essentially boils down to showing a property of scalar functions $w_s: G^{\bbZ_2^r}\to \mathbb{C}$ which represent the trace of powers of~$\tT$ and have $\|w_s\|_\infty \leq 1$. 
We are then left with showing that the following approximation of products of the functions $w_s$ holds 
\begin{equation}
\label{eq-trace-power-approx}
\bbE_{\bg_1,\bg_2}\left[\prod_{s\in \bbZ_2^r}w_s(\bh_0)\right] \approx \bbE_{\bg_1,\bg_2,\bm{\phi}}\left[\prod_{s\in \bbZ_2^r}w_s(\zeta^{\phi_s}\bh_0)\right].
\end{equation}

Via a hybrid argument, the approximation in \eqref{eq-trace-power-approx} can be expressed in terms of products of suitably bounded functions on a smaller set, $Q, R : G^2 \to \mathbb{C}$. For such functions, it suffices to show
\begin{equation}
\label{eq-mixing-approximation}
\bbE_{g_0, g_1, h_0, h_1 \in G}\big[Q(g_0h_0, g_1h_1)R(g_0h_1, g_1h_0) \big] \approx  \bbE_{\substack{g_0,g_1,h_0,h_1\in G \\ \phi \in \bbF_3}}\big[Q(\zeta^\phi g_0h_0, \zeta^\phi g_1h_1)R(g_0h_1, g_1h_0) \big],
\end{equation}
where~$ g_0,g_1,h_0,h_1,\phi$ are uniformly random over the respective domains.
The difference between the above two expressions can be upper bounded by
\[ \|(\id-P)K\|_\infty\|\ket{Q}\|_2\|\ket{R}\|_2\]
where $\ket{Q}$ and $\ket{R}$ are vectors in $\complex[G^2]$ representing the functions $Q$ and $R$, and $P$ and $K$ are linear operators on this space. Specifically, $K$ maps pairs of group elements $(a_0b_0,a_1b_1)$ to a combination of pairs $(a_0b_1,a_1b_0)$, while $P$ averages over the simultaneous phase rotations $(\zeta^\phi a_0b_0,\zeta^\phi a_1b_1)$ for $\phi\in\mathbb F_3$. It turns out that the operators~$P,K$ represent the transition matrices of certain random walks on Cayley graphs over~$G^2$. The norms $\|\ket{Q}\|_2$ and $\|\ket{R}\|_2$ are upper bounded due to the hypothesis, and therefore we need to bound $\|(\id-P)K\|_\infty$ in order to prove \eqref{eq-mixing-approximation}.

A bound on the spectral norm~$\|(\id-P)K\|_\infty$ can be proved by moving to the Fourier domain. The Fourier transform over $G$ sends a vector $\ket{g}$ to a superposition over all non-isomorphic irreps of $G$, with the coefficient corresponding to an irrep $\rho$ depending on its dimension $d_\rho$. Writing $\hP$ and $\hK$ for the Fourier transforms of $P$ and $K$, it turns out the irreps where $\rho(\zeta) = \id$ cancel out in $(\id-\hP)\hK$. In fact,
\[ (\id-\hP)\hK = \bigoplus_{\rho: \rho(\zeta)\neq\id}\frac1{d_\rho}\Pi_\rho\]
where each $\Pi_\rho$ is a permutation operator. Then clearly,
\[ \|(\id-P)K\|_\infty = \|(\id-\hP)\hK\|_\infty = \frac{1}{\min_{\rho: \rho(\zeta)\neq \id}\{d_\rho\}}.\]
This gives a bound of $\frac{1}{D}$ on the approximation in \eqref{eq-mixing-approximation}, which contributes an additive $O\left(2^{Cp}\frac{2^r-1}{D}\right)$ term to the $p$-th moment bound. After taking a $p$-th root, this contributes $O\left(2^C\left(\frac{2^r-1}{D}\right)^{1/p}\right)$ to the upper bound on the success probability, where $D = D(G,\zeta) = \min_{\rho:\rho(\zeta)\neq\id} d_\rho$. For suitable $D$, the upper bound on the success probability can then be completed by combining this with \eqref{eq:tT-bound} and choosing an appropriate value of $p$.

To complete the lower bound, it suffices to exhibit an explicit group~$G$ with a central element~$\zeta$ of order~$3$ in which~$D$ is large (so that the success probability is small). The group we use is the generalized Heisenberg group $\rH_{2^{2r}}$ over $\bbF_3$, for which we can show $D(\rH_{2^{2r}},\zeta) = 3^{2^{2r}}$ for a suitable choice of $\zeta$.

\subsection{Conclusion and open problems}

We show that a large reserve of shared entanglement can be a powerful resource in communication complexity, enabling exponentially more efficient classical communication protocols than what is possible using quantum one-way communication in the absence of sufficient shared entanglement. Remarkably, this separation already arises for the fundamental task of jointly computing a total Boolean function. We suggest a few directions for future research:
\begin{enumerate}
    \item Can an exponential separation be obtained for a total function between entanglement-assisted classical SMP protocols and unassisted one-way or two-way quantum protocols? Alternatively, can such a separation be obtained when the entanglement-assisted classical protocol is one-way and the unassisted quantum protocol is two-way, or when both models allow arbitrary two-way communication? It seems unlikely that these could be obtained using our approach, as the bounded-order subgroup membership problem has an efficient two-way classical protocol and doesn't seem to have an efficient entanglement-assisted protocol in the SMP setting. For separations against two-way quantum protocols, the main difficulty would likely be to develop lower-bound techniques that apply to two-way unassisted quantum communication but not to entanglement-assisted communication.
    
    \item The entanglement-assisted classical protocol in Theorem~\ref{thm:maintheorem} uses an amount of shared entanglement polynomial in the unassisted quantum communication complexity. Can an asymptotic separation (ideally exponential) between these models require substantially more entanglement? Specifically, is there a communication problem exhibiting a large separation between the entanglement-assisted and entanglement-unassisted models for which every entanglement-assisted classical protocol achieving such a separation requires an amount of shared entanglement superpolynomial in the unassisted quantum communication complexity, or even in the input size? This seems plausible in view of known results for nonlocal games (\cite{ji2022mipre}). 

    \item In the absence of shared entanglement, does an asymptotic separation between quantum and classical one-way communication complexity exist for a total function, using the subgroup membership problem or otherwise?

    \item Can a simpler total function which admits an exponential separation between entanglement-assisted classical communication and unassisted quantum communication be identified? For the bounded-order subgroup membership function specifically, can a substantially simpler family of groups be used to reproduce our separation?


\end{enumerate}

\subsection*{Acknowledgments}

O.L.\ thanks Gilles Brassard for having introduced him to the question of quantum communication complexity with versus without entanglement many years ago. We thank Dave Touchette, Richard Cleve and Daiki Suruga for useful conversations. O.L.\ and A.N.\ acknowledge the support of the Natural Sciences and Engineering Research Council of Canada (NSERC) grants ALLRP-578455-2022 and RGPIN-2023-03731. O.L is supported by a NSERC CGRS-D scholarship. R.A. is supported by an NSF Graduate Research Fellowship. 

\subsection*{AI disclosure}

Large language models such as GPT 5.4 through GPT-6 Astra and Claude Opus 4.8 through Claude Opus 5.5 were used extensively during the course of this work. The bounded-order subgroup membership problem as well as the corresponding upper bounds were discovered by the authors. The authors invested considerable effort into proving the hardness of another instance of subgroup membership with AI assistance. After many rounds of interaction, GPT-6 Astra eventually produced a concrete instantiation of the problem in the form of the shifted equality problem and the corresponding quantum lower bound. The authors independently verified, developed, and simplified the argument presented here. The paper itself was written by the authors themselves, who assume responsibility for all content. 

\section{Preliminaries}
In this section, we introduce notation, facts, and results that will be used throughout the paper. 
Given a nonempty set $S\subseteq [n]$, we write $\ket{S} \in \mathbb{C}^n$ to mean the state:
\[\frac{1}{\sqrt{|S|}} \sum_{s \in S} \ket{s}\]

\subsection{Linear algebra and probability theory}
We use the following fact about invertible matrices:
\begin{lemma} \label{lem:conditionforinvertibility}
    Let $A$ be an $n \times n$ invertible matrix with entries in a field $\mathbb{F}$ and let $u,v \in \mathbb{F}^n$. The matrix $A + uv^T$ is invertible if and only if $v^T A^{-1} u \neq -1$.
\end{lemma}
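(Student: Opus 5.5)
The plan is to factor $A + uv^T = A(\id + A^{-1}uv^T)$, which is valid since $A$ is invertible. Because $\det(A)\neq 0$, the matrix $A+uv^T$ is invertible if and only if $\id + wv^T$ is invertible, where $w := A^{-1}u$. This reduces the statement to the case $A=\id$, and the claim becomes that $\id + wv^T$ is invertible if and only if $v^T w \neq -1$. This is the familiar matrix determinant lemma, $\det(\id + wv^T) = 1 + v^T w$. I will argue it directly by linear algebra over an arbitrary field, so that nothing depends on $\mathbb{F}=\complex$.

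\emph{Forward direction (by contrapositive).} Suppose $v^T w = -1$. Then $w\neq 0$, since otherwise $v^Tw=0$. Compute $(\id + wv^T)w = w + w(v^T w) = w - w = 0$. So $w$ is a nonzero vector in the kernel, and $\id + wv^T$ is singular.

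\emph{Reverse direction.} Suppose $v^T w \neq -1$, and let $x$ satisfy $(\id + wv^T)x = 0$. Then $x = -(v^Tx)\,w$, so $x$ is a scalar multiple of $w$. Applying $v^T$ gives $v^T x = -(v^Tx)(v^Tw)$, that is, $(v^Tx)(1+v^Tw)=0$. Since $1+v^Tw\neq 0$, we get $v^Tx=0$, and therefore $x=0$. Hence the kernel is trivial, and the square matrix $\id+wv^T$ is invertible.

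Alternatively, one can write down the explicit inverse $\id - \frac{wv^T}{1+v^Tw}$ and check it by multiplication, which gives the Sherman--Morrison formula. Either route avoids determinants entirely, so the argument holds over any field $\mathbb{F}$.

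There is no real obstacle here. The only points needing care are the reduction step, which must factor $A$ on the correct side, and the fact that $w \ne 0$ in the singular case.
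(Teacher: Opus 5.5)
Your proof is correct, but it is more elementary than the paper's. The paper cites the matrix determinant lemma, $\det(A+uv^T) = (1+v^TA^{-1}u)\det(A)$, and reads the criterion off from $\det(A)\neq 0$.

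You make the same reduction that underlies that lemma, $A+uv^T = A(\id + wv^T)$ with $w = A^{-1}u$. You then prove the invertibility criterion for $\id + wv^T$ directly. When $v^Tw=-1$, you exhibit $w$ as a nonzero kernel vector. Otherwise, you show the kernel is trivial, or alternatively verify the Sherman--Morrison inverse.

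Your version is self-contained, and its Sherman--Morrison variant also produces an explicit inverse. The paper's citation is shorter and gives the stronger statement, namely the exact value of the determinant rather than just whether it vanishes.

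Your stated motivation for avoiding determinants is not needed. The matrix determinant lemma is a polynomial identity valid over any field, indeed over any commutative ring. So the paper's one-line proof already covers $\mathbb{F}_3$, which is the field where the lemma is used in Lemma~\ref{lem:Misinvertible}.
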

\begin{proof}
    The matrix determinant lemma gives:
    \[\det(A + uv^T) = (1+v^T A^{-1} u)\det(A)\]
    The result follows. 
\end{proof}

We will use the following fact in order to derive a moment bound for independently distributed random Hermitian matrices.
\begin{fact}[{\cite[Eq.~(4.9) in the arXiv version]{Tro15}}]\label{fact:sign-moment}
Let $A_1,\ldots,A_n$ be fixed Hermitian $d\times d$ matrices, and let $c_1,\ldots,c_n$ be independent uniform signs in $\{-1,1\}$. For every even integer $p\geq 2$,
\[
\bbE_{c_1,\ldots,c_n} \left[\tr\left(\left(\sum_{i=1}^nc_i A_i\right)^p\right)\right]
\leq
d\left(\prod_{j=1}^{p/2}(2j-1)\right)
\left\|\sum_{i=1}^n A_i^2\right\|_\infty^{p/2}.
\]
\end{fact}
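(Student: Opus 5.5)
This Fact is quoted from \cite{Tro15}, so the plan below only reconstructs a proof rather than one the paper relies on. The approach is Rademacher integration by parts followed by a recursion on $p$.

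\textbf{Setup.} Write $Y=\sum_{i=1}^n c_iA_i$ and $S=\sum_i A_i^2$. For each $i$, split $Y=c_iA_i+Y_{-i}$, where $Y_{-i}$ does not depend on $c_i$. By linearity of trace and expectation,
\[
\bbE\,\tr(Y^p)=\sum_i \bbE\,\tr\!\left(c_iA_i\,Y^{p-1}\right).
\]

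\textbf{Step 1: integration by parts.} For a uniform sign $c$ and any function $f$ we have $\bbE[c f(c)]=\tfrac12\bigl(f(1)-f(-1)\bigr)$. Set $Y_\pm=Y_{-i}\pm A_i$. Conditioning on all signs other than $c_i$, the $i$-th term equals
\[
\tfrac12\,\bbE\,\tr\!\left(A_i\bigl(Y_+^{p-1}-Y_-^{p-1}\bigr)\right).
\]
Use the telescoping identity $X^{m}-Z^{m}=\sum_{q=0}^{m-1}X^q(X-Z)Z^{m-1-q}$ together with $Y_+-Y_-=2A_i$. The term becomes
\[
\sum_{q=0}^{p-2}\bbE\,\tr\!\left(A_iY_+^{q}A_iY_-^{p-2-q}\right).
\]

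\textbf{Step 2: bound each cross term.} This is the main obstacle. I would prove that for Hermitian $A,X,Z$ and $q+r=p-2$ (an even number),
\[
\tr\!\left(AX^qAZ^r\right)\le \tfrac12\Bigl(\tr\!\left(A^2X^{p-2}\right)+\tr\!\left(A^2Z^{p-2}\right)\Bigr).
\]
First I would try Hölder's inequality for Schatten norms. Write the trace as $\tr\bigl((|X|^{q/2}A|Z|^{r/2})(\cdots)\bigr)$, up to unitary sign factors, and then apply the weighted AM--GM inequality with weights $q/(p-2)$ and $r/(p-2)$. A fallback is the convexity of $q\mapsto \tr\bigl(A|X|^qA|Z|^{p-2-q}\bigr)$ on $[0,p-2]$. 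Since this function is convex, its maximum on the interval is attained at an endpoint, and the two endpoint values are exactly the two traces in the target inequality.

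\textbf{Step 3: average over the sign and sum over $i$.} Averaging the bound from Step 2 over $c_i=\pm1$ turns $\tfrac12\bigl(\tr(A_i^2Y_+^{p-2})+\tr(A_i^2Y_-^{p-2})\bigr)$ back into $\bbE\,\tr(A_i^2Y^{p-2})$. The sum over $q$ has $p-1$ terms, each with this same bound. Summing over $i$ then gives
\[
\bbE\,\tr(Y^p)\le (p-1)\,\bbE\,\tr\!\left(S\,Y^{p-2}\right).
\]

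\textbf{Step 4: recursion.} Because $p-2$ is even, $Y^{p-2}\succeq 0$, and hence $\tr(SY^{p-2})\le \|S\|_\infty\tr(Y^{p-2})$. This yields
\[
\bbE\,\tr(Y^p)\le (p-1)\|S\|_\infty\,\bbE\,\tr(Y^{p-2}).
\]
Iterating down to $\tr(Y^0)=d$ gives
\[
\bbE\,\tr(Y^p)\le d\left(\prod_{j=1}^{p/2}(2j-1)\right)\|S\|_\infty^{p/2},
\]
which is the claimed bound.
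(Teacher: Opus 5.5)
Your overall strategy works, but Step 2 as stated is false, and Step 3 uses it term by term. The paper gives no proof of this Fact; it only cites Tropp. So the comparison below is with the standard argument, which yours follows once the gap is repaired.

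\textbf{The gap in Step 2.} Steps 1 and 4 are correct, and the inequality displayed at the end of Step 3 is true. But the per-term inequality you set as the target of Step 2 cannot hold. For $q=p-2$, $r=0$ the left side is exactly $\tr(A^2X^{p-2})$, so your claim reads $\tr(A^2X^{p-2})\le\tr(A^2Z^{p-2})$. For $q=0$ it reads the reverse. Together they would force $\tr(A^2X^{p-2})=\tr(A^2Z^{p-2})$ for all Hermitian $A,X,Z$.

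Concretely, take $d=1$, $p=4$, $q=2$, $A=1$, $X=3$, $Z=1$: the left side is $9$ and the right side is $5$. This configuration actually arises in your proof, with $n=2$, $A_1=1$, $A_2=2$, $i=1$, $c_2=1$, giving $Y_+=3$ and $Y_-=1$.

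Neither of your suggested justifications gives equal weights. Cauchy--Schwarz (with the sign unitaries) followed by weighted AM--GM in the eigenbases of $|X|$ and $|Z|$ gives only
\[
\tr\bigl(AX^qAZ^r\bigr)\le\tr\bigl(A|X|^qA|Z|^r\bigr)\le\frac{q}{p-2}\,\tr\bigl(A^2X^{p-2}\bigr)+\frac{r}{p-2}\,\tr\bigl(A^2Z^{p-2}\bigr).
\]
The argument ``a convex function attains its maximum at an endpoint'' bounds the term only by the larger of the two endpoint traces. A maximum does not average back to $\bbE\,\tr(A_i^2Y^{p-2})$ over $c_i$, so the recursion would break.

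\textbf{The repair.} Keep the weights $q/(p-2)$ and $r/(p-2)$; convexity in $q$ gives exactly this chord bound as well. Then sum over $q$ before averaging over $c_i$. Since $\sum_{q=0}^{p-2}\frac{q}{p-2}=\sum_{q=0}^{p-2}\frac{p-2-q}{p-2}=\frac{p-1}{2}$, you get
\[
\sum_{q=0}^{p-2}\tr\bigl(A_iY_+^{q}A_iY_-^{p-2-q}\bigr)\le\frac{p-1}{2}\Bigl(\tr\bigl(A_i^2Y_+^{p-2}\bigr)+\tr\bigl(A_i^2Y_-^{p-2}\bigr)\Bigr),
\]
which is exactly what Step 3 needs. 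Equivalently, pair the term $q$ with the term $p-2-q$. For $p=2$ there is nothing to prove.

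Alternatively, $Y_{-i}$ is symmetric in distribution, so $Y_+$ has the same law as $-Y_-$. Hence $\bbE\,\tr(A_i^2Y_+^{p-2})=\bbE\,\tr(A_i^2Y_-^{p-2})$, and the weighted bound then gives each term at most $\bbE\,\tr(A_i^2Y^{p-2})$ in expectation.

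With this correction your argument is a complete proof. Your route is the standard integration-by-parts proof of the matrix Khintchine inequality. Closing the recursion with $\tr(SY^{p-2})\le\|S\|_\infty\tr(Y^{p-2})$, where $S=\sum_iA_i^2$, gives directly the $d\|S\|_\infty^{p/2}$ form stated in the Fact. A H\"older step there instead would give the sharper $\tr(S^{p/2})$ version.
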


\begin{corollary}\label{cor:ind-moment}
Let $A_1,\ldots,A_n$ be independent random Hermitian $d\times d$ matrices such that $\bbE[A_i]=0$ and $\|A_i\|_\infty\leq 1$ almost surely. For every even integer $p\geq 2$,
\[
\bbE\left[\tr\left(\left(\sum_{i=1}^n A_i\right)^p\right)\right]
\leq d(4pn)^{p/2}.
\]
\end{corollary}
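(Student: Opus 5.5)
We prove Corollary~\ref{cor:ind-moment} by symmetrization, reducing to Fact~\ref{fact:sign-moment}. Let $A_1',\ldots,A_n'$ be an independent copy of $A_1,\ldots,A_n$, and set $S=\sum_i A_i$ and $S'=\sum_i A_i'$. The function $X\mapsto \tr(X^p)$ on Hermitian matrices is convex for even $p$, since it equals $\sum_j |\lambda_j(X)|^p$ and is the $p$-th power of the Schatten $p$-norm. Because $\bbE[S']=0$ and $S'$ is independent of $S$, Jensen's inequality conditional on $S$ gives
\[
\bbE\left[\tr(S^p)\right]=\bbE\left[\tr\left((S-\bbE[S'])^p\right)\right]\le \bbE\left[\tr\left((S-S')^p\right)\right].
\]

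Next we introduce signs. Write $D_i=A_i-A_i'$. Each $D_i$ is Hermitian with a symmetric distribution, and the $D_i$ are independent. Hence $(D_1,\ldots,D_n)$ has the same joint law as $(c_1D_1,\ldots,c_nD_n)$ for independent uniform signs $c_i$ drawn independently of everything else. Therefore
\[
\bbE\left[\tr\left((S-S')^p\right)\right]=\bbE_{D}\,\bbE_{c}\left[\tr\left(\left(\sum_i c_iD_i\right)^p\right)\right].
\]

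Now apply Fact~\ref{fact:sign-moment} conditionally on the $D_i$. This bounds the inner expectation by $d\prod_{j=1}^{p/2}(2j-1)\,\|\sum_i D_i^2\|_\infty^{p/2}$. Since $\|D_i\|_\infty\le 2$, we have $D_i^2\preceq 4\id$, so $\|\sum_i D_i^2\|_\infty\le 4n$. Also $\prod_{j=1}^{p/2}(2j-1)\le p^{p/2}$, because each factor is at most $p$. Combining these,
\[
\bbE\left[\tr(S^p)\right]\le d\,p^{p/2}(4n)^{p/2}=d(4pn)^{p/2}.
\]

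The only delicate step is the symmetrization. It requires the convexity of $\tr(X^p)$ for even $p$, which justifies Jensen's inequality in matrix form, and the fact that flipping the sign of an independent symmetric summand does not change the joint law. Both facts are standard.
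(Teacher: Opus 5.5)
Your proposal is correct and follows the paper's proof essentially step for step. Both arguments symmetrize with an independent copy via Jensen's inequality for the convex map $X\mapsto\tr(X^p)$, insert independent signs using the symmetry of $A_i-A_i'$, apply Fact~\ref{fact:sign-moment} with $\left\|\sum_i (A_i-A_i')^2\right\|_\infty\le 4n$, and bound $\prod_{j=1}^{p/2}(2j-1)\le p^{p/2}$.
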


\begin{proof}
We introduce $A'_1,\ldots,A'_n$, where each $A'_i$ is drawn from the same distribution as $A_i$ independently. Since these matrices also have zero expectation, we have
\begin{equation}\label{eq:expec-AA'}
\bbE_{A'_1,\ldots A'_n}\left[\sum_i(A_i-A'_i)\middle|A_1,\ldots, A_n\right] = \sum_i A_i.
\end{equation}
Moreover, $(A_i-A'_i)^2\leq 4I$, which means that the operator norm of each $(A_i-A'_i)$ is at most $2$.

For even $p$, the function $X\mapsto\tr(X^p)$ is convex on Hermitian matrices. Therefore, applying Jensen's inequality to \eqref{eq:expec-AA'} gives, 
\begin{align*}
\tr\left(\left(\sum_{i=1}^n A_i\right)^p\right) & = \tr\left(\left(\bbE_{A'_1,\ldots, A'_n}\left[\sum_{i=1}^n(A_i-A'_i)\middle|A_1,\ldots, A_n\right]\right)^p\right) \\
& \leq \bbE_{A'_1,\ldots,A'_n}\left[\tr\left(\left(\sum_{i=1}^n(A_i-A'_i)\right)^p\right)\middle|A_1,\ldots, A_n\right].
\end{align*}
Averaging this over $A_i$ we get,
\begin{equation}\label{eq:expec-AA'-p}
\bbE_{A_1,\ldots, A_n}\left[\tr\left(\left(\sum_{i=1}^n A_i\right)^p\right)\right] \leq \bbE_{\substack{A_1, \ldots, A_n \\ A'_1, \ldots, A'_n}}\left[\tr\left(\left(\sum_{i=1}^n(A_i-A'_i)\right)^p\right)\right].
\end{equation}

For each $i$, exchanging $A_i$ and $A'_i$ preserves their joint distribution and negates $A_i-A'_i$. These exchanges can be made by multiplying them by independent and uniform $c_1, \ldots, c_n\in\{-1,+1\}$. Since $p$ is an even integer, we have,
\begin{equation}\label{eq:AA'-c}
\bbE_{\substack{A_1, \ldots, A_n \\ A'_1, \ldots, A'_n}}\left[\tr\left(\left(\sum_{i=1}^n(A_i-A'_i)\right)^p\right)\right]
=
\bbE_{\substack{A_1, \ldots, A_n \\ A'_1, \ldots, A'_n \\ c_1, \ldots, c_n}}\left[\tr\left(\left(\sum_{i=1}^nc_i(A_i-A'_i)\right)^p\right)\right].
\end{equation}
We now condition on all the matrices and apply Fact~\ref{fact:sign-moment} to the fixed differences $A_i-A'_i$. Since these differences are Hermitian and have operator norm at most $2$,
\[
\left\|\sum_{i=1}^n(A_i-A'_i)^2\right\|_\infty\leq 4n.
\]
Consequently,
\[
\bbE_{c_1, \ldots, c_n}\left[\tr\left(\left(\sum_{i=1}^nc_i(A_i-A'_i)\right)^p\right)\right]
\leq
d\left(\prod_{j=1}^{p/2}(2j-1)\right)(4n)^{p/2}
\leq d(4pn)^{p/2},
\]
where the last inequality uses $2j-1\leq p$ for every factor in the product. Averaging the above inequality over the matrices, and combining with \eqref{eq:expec-AA'-p} and \eqref{eq:AA'-c} proves the claim.
\end{proof}

\subsection{Group theory}
\label{sec-group-theory}

A \emph{group} $G$ is a set equipped with an associative composition map $\cdot: G \times G \to G$ such that there exists an identity element $e \in G$ and every $g\in G$ has an inverse $g^{-1} \in G$ with $g \cdot g^{-1} = e$. We will usually drop the $\cdot$ for the group operation and write $g\cdot h$ as simply $gh$. In this work, all groups are assumed to be finite.

The group is said to be abelian if the composition map is commutative, in which case it will be written with the symbol '+'.

A subset $H$ of $G$ closed under the composition map, which is also a group, is called a \emph{subgroup} of $G$, denoted by $H \leq G$. The \emph{centre} of $G$, denoted $Z(G)$, is the set of all elements which commute with all elements of $G$; an element of $Z(G)$ is called \emph{central}. It is simple to verify that $Z(G)$ is a subgroup of $G$.

The \emph{left and right cosets} of a subgroup $H$ in $G$ are the sets $gH = \{gh : h \in H\}$ and $Hg = \{hg : h \in H\}$, respectively, for $g \in G$. The number of left cosets is called the index of $H$ and is written $[G:H]$; for finite $G$, it equals $|G|/|H|$.

The left cosets form a partition of $G$, as do the right cosets. In particular, we have that $H \cap gH = H \cap Hg = \emptyset$ if $g \notin H$. The fact that left-multiplication and right-multiplication are permutations implies that these sets are all of the same cardinality, which implies that $|H|$ divides $|G|$ if $G$ is finite. This result is known as Lagrange's theorem.

Given two groups $G, H$, a map $\varphi: G \to H$ is said to be a \emph{group homomorphism} if $\varphi(g_1g_2) =\varphi(g_1)\varphi(g_2)$ for all $g_1, g_2 \in G$. A bijective homomorphism $\varphi: G \to G$ is said to be an \emph{automorphism}. We will write $\Aut(G)$ for the set of automorphisms of $G$, with multiplication given by composition and identity given by the identity map. These operations make $\Aut(G)$ a group.

For a group $G$ and a set $X$, a \emph{group action} of $G$ on $X$ is a map $\star: G \times X \to X$ satisfying $e \star x = x$ and $g \star (h \star x) = (gh) \star x$ for all $g,h \in G$ and $x \in X$. A set equipped with such an action is called a $G$-set.

\begin{definition}[Shift action]
Given a group $S$ and a set $X$, the set $X^S$ consists of all functions from $S$ to $X$. The shift action of $S$ on $X^S$ is defined, for $s,t \in S$ and $f \in X^S$, by
\[\shift_s(f)(t) = f(ts).\]
When $X$ is a group $G$, this is an action of $S$ on the arrays $G^S$.
\end{definition}

We have the following lemma that identifies the shift action between two groups as an automorphism. 
\begin{lemma}\label{lem:ishomomorphism}
    Let $G$ be a group and let $S$ be an abelian group, written
    additively. For each $s \in S$, define
    $\shift_s: G^S \to G^S$ by
    \[
        \shift_s(\bg)(t) = \bg(s+t).
    \]
    Then $\shift_s \in \Aut(G^S)$ for every $s \in S$, and the
    map $s \mapsto \shift_s$ is a homomorphism from $S$ to
    $\Aut(G^S)$.
\end{lemma}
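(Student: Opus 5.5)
The plan is a direct verification from the definitions. Recall that $G^S$ is a group under pointwise multiplication, $(\bg\bh)(t)=\bg(t)\bh(t)$, whose identity is the constant function at $e$. There are three facts to check: each $\shift_s$ is a homomorphism of $G^S$, each $\shift_s$ is bijective, and the assignment $s\mapsto\shift_s$ respects composition.

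\textbf{Step 1: homomorphism.} For $\bg,\bh\in G^S$ and $t\in S$, evaluate both sides pointwise:
\[
\shift_s(\bg\bh)(t)=(\bg\bh)(s+t)=\bg(s+t)\bh(s+t)=\shift_s(\bg)(t)\,\shift_s(\bh)(t).
\]
This shows $\shift_s(\bg\bh)=\shift_s(\bg)\shift_s(\bh)$.

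\textbf{Step 2: composition law.} Next I check that $\shift_s\circ\shift_{s'}=\shift_{s+s'}$. Set $\bg'=\shift_{s'}(\bg)$. Then
\[
\shift_s(\bg')(t)=\bg'(s+t)=\bg(s'+s+t)=\shift_{s+s'}(\bg)(t),
\]
using associativity and commutativity in $S$. Also $\shift_0=\mathrm{id}$.

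\textbf{Step 3: bijectivity and conclusion.} By Step 2, $\shift_{-s}$ is a two-sided inverse of $\shift_s$, so $\shift_s$ is bijective and hence lies in $\Aut(G^S)$. Step 2 then says exactly that $s\mapsto\shift_s$ is a homomorphism $S\to\Aut(G^S)$, where $\Aut(G^S)$ carries composition as its product.

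No step is a real obstacle. The one point needing care is the order of composition in Step 2. Because $S$ is abelian, $s+s'=s'+s$, so there is no left-versus-right action issue, and the map is a genuine homomorphism rather than an anti-homomorphism.
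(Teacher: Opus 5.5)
Your proposal is correct and follows essentially the same argument as the paper. You verify the homomorphism property pointwise, establish $\shift_s\circ\shift_{s'}=\shift_{s+s'}$ using commutativity of $S$, and obtain bijectivity from $\shift_{-s}$ being a two-sided inverse; the only difference is that you derive the inverse from the composition law rather than stating it directly.
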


\begin{proof}
    We first check that $\shift_s$ preserves multiplication.
    For $\bg,\bh \in G^S$ and $t \in S$,
    \begin{align*}
        \shift_s(\bg\bh)(t)
        &= (\bg\bh)(s+t)\\
        &= \bg(s+t)\bh(s+t)\\
        &= \shift_s(\bg)(t)\shift_s(\bh)(t).
    \end{align*}
    Moreover, shifting by $-s$ undoes shifting by $s$ in either
    order. Thus $\shift_s$ is bijective, with inverse
    $\shift_{-s}$, and hence belongs to $\Aut(G^S)$.

    To prove the homomorphism property, we take $s,s',t \in S$
    and compute
    \begin{align*}
        \shift_s(\shift_{s'}(\bg))(t)
        &= \shift_{s'}(\bg)(s+t)\\
        &= \bg(s'+(s+t))\\
        &= \bg((s+s')+t)\\
        &= \shift_{s+s'}(\bg)(t),
    \end{align*}
    where the third equality uses that $S$ is abelian.
    Therefore $\shift_{s+s'}=\shift_s\circ\shift_{s'}$, as required.
\end{proof}

\begin{definition}[Direct products]
Given two groups $G$ and $H$, their direct product is the group $G \times H$ with multiplication defined componentwise. More generally, $G^S$ is a group under coordinatewise multiplication: for $\bg,\bh \in G^S$, we have $(\bg\bh)(s) = \bg(s)\bh(s)$ for every $s \in S$. $G^S$ can also be viewed as the set of functions from $S$ to $G$.
\end{definition}

\begin{definition}[Semidirect products]
Given two groups $G$ and $H$ along with a homomorphism $\varphi: H \to \Aut(G)$, the semidirect product of $G$ and $H$ with respect to this homomorphism, written $G \rtimes_\varphi H$, is the set $G \times H$ with multiplication given by:
\[(g,h) \cdot (g', h') = (g \cdot (\varphi(h)(g')), h \cdot h')\]
\end{definition}
It may be checked that this multiplication is associative and that inverses exist, given by the formula $(g,h)^{-1} = (\varphi(h^{-1})(g^{-1}), h^{-1})$.

\subsection{Representation theory}
\label{sec-rep-theory}
We briefly state the concepts and results from group representation theory that we use in this work. We refer the reader to a text such as~\cite{Serre77-representation-theory} for a deeper treatment of the subject.

For a complex vector space~$V$, let~$\linear(V)$ denote the set of linear operators on~$V$, and~$\generallinear(V)$ denote the set of invertible linear operators on~$V$. Recall that the set~$\generallinear(V)$ forms a group under composition of operators. 

A \emph{(linear) representation}~$\rho$ of a group~$G$ over a complex vector space~$V$ is a group homomorphism~$\rho : G \rightarrow \generallinear(V)$. We say that the \emph{dimension} or \emph{degree} of~$\rho$ is~$\dim(V)$, and denote it by~$d_\rho$.

Let~$\rho, \sigma$ be two representations of~$G$ over~$V,W$, respectively. We say that~$\rho, \sigma$ are \emph{isomorphic} and write~$\rho \sim \sigma$ if there is an invertible linear transformation~$M : V \rightarrow W$ such that~$\sigma(g) = M \rho(g) M^{-1}$ for all~$g \in G$. Every representation~$\rho$ of a finite group is isomorphic to a \emph{unitary representation}, i.e., one which satisfies~$\rho(g)^{-1} = \rho(g)^\adjoint$ for all~$g \in G$.

The \emph{direct sum} of~$\rho$ and~$\sigma$ is defined as the representation~$\tau : G \rightarrow \mathrm{GL}(V \directsum W)$ given by~$\tau(g) = \rho(g) \directsum \sigma(g)$ for all~$g$. In matrix form, $\tau(g)$ is a block-diagonal matrix with the blocks being~$\rho(g)$ and~$\sigma(g)$. A representation is called \emph{irreducible} (an \emph{irrep} in short) if it cannot be decomposed as a direct sum of two nonzero representations. Every representation of a finite group~$G$ can be decomposed into a direct sum of irreps of~$G$.

The Schur Lemma is a fundamental result in representation theory.
\begin{proposition}[Schur Lemma]
\label{prop-schur}
Let~$\rho, \sigma$ be two irreducible representations of~$G$ over~$V,W$, respectively. Let~$M : V \rightarrow W$ be a linear transformation such that~$M \circ \rho(g) = \sigma(g) \circ M$ for all~$g \in G$. Then
\begin{enumerate}[(a)]
\item If~$\rho$ and~$\sigma$ are not isomorphic, then~$M = 0$, and
\item If~$\rho = \sigma$, then~$M = \lambda \id$ for some~$\lambda \in \complex$.
\end{enumerate}
\end{proposition}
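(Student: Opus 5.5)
The plan is to use the standard eigenspace argument. Fix the intertwiner $M : V \rightarrow W$ with $M\circ\rho(g) = \sigma(g)\circ M$ for all $g\in G$. First we check that $\ker M \subseteq V$ is invariant under $\rho$. If $Mv = 0$, then $M\rho(g)v = \sigma(g)Mv = 0$. Next we check that $\operatorname{im} M \subseteq W$ is invariant under $\sigma$, since $\sigma(g)Mv = M\rho(g)v$.

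We must then turn invariance into a direct-sum statement, because irreducibility is defined here as the absence of a nontrivial direct-sum decomposition. Since $G$ is finite, we may assume $\rho$ and $\sigma$ are unitary after conjugating, as stated above; conjugation does not affect either conclusion, since $M$ transforms accordingly. For a unitary representation, the orthogonal complement of an invariant subspace is again invariant, because $\rho(g)^\adjoint = \rho(g^{-1})$. Hence any nonzero proper invariant subspace yields a decomposition into two nonzero subrepresentations. Therefore the only invariant subspaces of an irrep are $0$ and the whole space. This reduction from invariant subspaces to direct sums is the one step needing care. Unitarity, available for finite groups, resolves it; alternatively one could average an arbitrary inner product over $G$.

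For (a), suppose $M\neq 0$. Then $\ker M \neq V$, so $\ker M = 0$. Also $\operatorname{im} M \neq 0$, so $\operatorname{im} M = W$. Thus $M$ is invertible and $\sigma(g) = M\rho(g)M^{-1}$, so $\rho\sim\sigma$, a contradiction. Hence $M = 0$.

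For (b), take $\rho=\sigma$ on $V$, a nonzero finite-dimensional complex space. Then $M$ has an eigenvalue $\lambda\in\complex$, since $\complex$ is algebraically closed. The operator $M-\lambda\id$ also intertwines $\rho$ with itself and has nontrivial kernel. By the invariance argument, its kernel is all of $V$, so $M=\lambda\id$.
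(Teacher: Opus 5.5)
Your proof is correct. The paper gives no proof of its own; it states the Schur Lemma as a standard fact with a pointer to Serre. Your argument is the standard one found there: invariance of $\ker M$ and $\operatorname{im} M$ gives (a), and for an eigenvalue $\lambda$ of $M$, the non-injective intertwiner $M-\lambda\id$ gives (b).

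The extra ingredient you add is genuinely needed here. The paper defines irreducibility as indecomposability rather than as having no proper nonzero invariant subspace, and your unitarity argument bridges the two correctly: the orthogonal complement of an invariant subspace is again invariant, which is Maschke's theorem. You do not even need to transport $M$ under the conjugation, since having only trivial invariant subspaces is itself preserved under isomorphism.
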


Let~$\updelta$ denote the Kronecker delta function. An important consequence of the Schur Lemma is the following orthogonality relation for irreps. Let \(\widehat G\) denote a complete set of nonisomorphic unitary irreps of \(G\), and fix an orthonormal basis for the representation space of each \(\rho\in\widehat G\).
\begin{proposition}
\label{prop-irrep-orthogonality}
Let~$\rho, \sigma \in \hG$ be two irreps of~$G$. Then for all~$i,j \in [d_\rho]$ and~$k,l \in [d_\sigma]$, we have
\[
\frac{1}{\size{G}} \sum_{g \in G} \sqrt{d_\rho d_\sigma} \, (\rho(g)(i,j))^* \, \sigma(g)(k,l) = \updelta_{\rho, \sigma} \updelta_{i,k} \updelta_{j,l}.
\]
\end{proposition}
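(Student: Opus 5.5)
We prove the orthogonality relation from Schur's Lemma (Proposition~\ref{prop-schur}) by the standard averaging trick. Fix $\rho,\sigma\in\hG$ acting on $V,W$. For an arbitrary linear map $A\colon V\to W$, define
\[
M_A \;=\; \frac{1}{\size{G}}\sum_{g\in G}\sigma(g)\,A\,\rho(g)^{-1}.
\]
Reindexing the sum ($g\mapsto hg$) gives $\sigma(h)M_A=M_A\rho(h)$ for all $h\in G$, so $M_A$ is an intertwiner. Since the representations are unitary, $\rho(g)^{-1}=\rho(g)^\adjoint$, and therefore $\rho(g)^{-1}(j,l)=(\rho(g)(l,j))^*$. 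We take $A=E_{k,j}$, the matrix unit sending the basis vector $e_j$ of $V$ to $e_k$ of $W$. Then the $(k',l')$ entry of $M_A$ is
\[
\frac{1}{\size{G}}\sum_g \sigma(g)(k',k)\,(\rho(g)(l',j))^*.
\]
Relabelling indices gives exactly the sum in the statement, up to the factor $\sqrt{d_\rho d_\sigma}$.

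\emph{Case $\rho\neq\sigma$.} Distinct elements of $\hG$ are non-isomorphic, so Proposition~\ref{prop-schur}(a) gives $M_A=0$ for every $A$. Hence every such sum vanishes, matching $\updelta_{\rho,\sigma}=0$.

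\emph{Case $\rho=\sigma$.} Proposition~\ref{prop-schur}(b) gives $M_A=\lambda_A\id$. To find $\lambda_A$ we take traces. Conjugation preserves trace, so $\tr M_A=\tr A$, and hence $\lambda_A=\tr(A)/d_\rho$. For $A=E_{k,j}$ we have $\tr A=\updelta_{k,j}$. Therefore
\[
\frac{1}{\size{G}}\sum_g\rho(g)(k',k)\,(\rho(g)(l',j))^* \;=\; \frac{1}{d_\rho}\,\updelta_{k,j}\,\updelta_{k',l'}.
\]
Renaming indices to match the statement, the sum there, namely $\frac{1}{\size{G}}\sum_g(\rho(g)(i,j))^*\rho(g)(k,l)$, equals $\frac{1}{d_\rho}\updelta_{i,k}\updelta_{j,l}$. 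Multiplying by $\sqrt{d_\rho d_\rho}=d_\rho$ gives the claim.

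The only delicate step is the index bookkeeping: one must use unitarity to turn the inverse into a conjugate transpose, and then choose the matrix unit $A$ so that the resulting entries line up with the pattern $(i,j)$, $(k,l)$ in the statement. No genuine obstacle is expected beyond this.
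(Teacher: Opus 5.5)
Your argument is correct. The averaging map $M_A$ is an intertwiner, Schur's Lemma forces $M_A=0$ when $\rho\neq\sigma$ and $M_A=(\tr A/d_\rho)\id$ when $\rho=\sigma$, and your index bookkeeping with $A=E_{k,j}$ and unitarity does yield $\frac{1}{d_\rho}\updelta_{i,k}\updelta_{j,l}$. The paper gives no proof of its own; it states the relation as a standard consequence of the Schur Lemma with a pointer to Serre, and your derivation is exactly that standard argument.
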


Every finite group~$G$ has a finite number of nonisomorphic irreps.  Then
\[
\sum_{\rho \in \hG} d_\rho^2 = \size{G}.
\]
Thus, the vector space~$\complex^G$ is isomorphic to the complex vector space
\[
V(\hG) = \Span \left\{ \ket{\rho, i, j} : \rho \in \hG, ~ i,j \in [d_\rho]  \right\} \cong \bigoplus_{\rho \in \hG} \complex^{d_\rho} \tensor \complex^{d_\rho}.
\]
The Fourier Transform~$\qft_G$ over~$G$ is a linear transformation from~$\complex^G$ to~$V(\hG)$ defined by
\[
\qft_G \ket{g} = \frac{1}{\sqrt{\size{G}}} \sum_{\rho \in \hG; \, i,j \in [d_\rho]} \sqrt{d_\rho} \, \rho(g)(i,j) \ket{\rho, i, j}
\]
for all~$g \in G$. 
By \Cref{prop-irrep-orthogonality}, the operator~$\qft_G$ is unitary and gives us a canonical isomorphism between the vector spaces~$\complex^G$ and~$V(\hG)$. We omit the subscript~$G$ from~$\qft_G$ when it is clear from the context.

\section{Communication complexity}
\subsection{Basic definitions}
Given finite input sets $X, Y$ and $\Dom \subseteq X \times Y$ and a function $f: \Dom \to \{0,1\}$, communication complexity studies the amount of communication between Alice and Bob that is needed in a given model to compute $f(x,y)$ given that Alice is given $x$ and Bob is given $y$. We call $f$ a total function if $\Dom = X \times Y$ and a partial function otherwise. 

With $f$ as before, for $0 < \varepsilon < 1/2$, we write $\mathrm{C}_{\varepsilon}(f)$ to denote the minimum communication cost of a two-way protocol for $f$ using private randomness which computes $f$ correctly with success probability at least $1-\varepsilon$ for all inputs. It is customary to take $\varepsilon = 1/3$. We will add superscripts to further specify the communication model. The superscript `1' means that we restrict to one-way communication, that is to say, a single message from Alice to Bob. The superscript `pub' means that Alice and Bob are allowed to share public coins.

We will also consider another model of communication more restricted than one-way, which is the simultaneous message passing (SMP) model. In this model, Alice and Bob do not communicate with each other directly, but instead each sends a single message to a third party referee, who must compute the function value only from these messages. We will use the superscript `$\parallel$' to denote the SMP model. Shared randomness in the SMP model may be of three types: between the players only, between the players and the referee, and between all three parties. In this work, we exclusively consider the model where only Alice and Bob share randomness, and when we write $\mathrm{C^{\parallel, \mathrm{pub}}}$, we mean classical communication complexity in the SMP model with this type of shared randomness. 

A distribution $\mu_{XY}$ on $\Dom$ in the argument of $\mathrm{C}$ indicates that we minimize the communication cost over  protocols achieving success probability at least $1-\varepsilon$ on average when Alice and Bob's inputs are sampled from the distribution $\mu_{XY}$. It is simple to see that we may restrict to deterministic protocols in the distributional setting.

We have Yao's principle, which relates communication complexity in the worst-case correctness model to the distributional correctness model.
\begin{fact} \label{fact:yao}
    For any function $f$, 
    \[\mathrm{C}^{\mathrm{pub}}_{\varepsilon}(f) = \max_{\mu_{XY}} \mathrm{C}_{\varepsilon}(f,\mu_{XY})\]
    The same holds in the one-way setting.
\end{fact}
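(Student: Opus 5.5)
We prove the two inequalities separately. The statement is Yao's minimax principle for communication protocols.

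\emph{Easy direction} ($\geq$). Fix a public-coin protocol $\Pi$ of cost $c=\mathrm{C}^{\mathrm{pub}}_{\varepsilon}(f)$ that errs with probability at most $\varepsilon$ on every input in $\Dom$, and fix any distribution $\mu_{XY}$ on $\Dom$. Averaging the worst-case guarantee over $(x,y)\sim\mu_{XY}$ shows that $\Pi$ has error at most $\varepsilon$ when inputs and coins are drawn together. Hence $\mathrm{C}_{\varepsilon}(f,\mu_{XY})\leq c$, and taking the maximum over $\mu_{XY}$ gives this direction. If one insists on deterministic distributional protocols, note that the distributional error is an average over the public coin $r$ of the errors of the deterministic protocols $\Pi_r$. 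So some $\Pi_r$ has distributional error at most $\varepsilon$ and cost at most $c$.

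\emph{Hard direction} ($\leq$). Let $c=\max_{\mu}\mathrm{C}_{\varepsilon}(f,\mu)$. Let $\mathcal{P}$ be the finite set of deterministic protocols of cost at most $c$. This set is finite because the inputs are finite and the cost is bounded: a protocol is a finite tree labelled by functions of the inputs. In the one-way case it is just a pair consisting of Alice's message map $X\to\{0,1\}^{c}$ and Bob's output map $Y\times\{0,1\}^c\to\{0,1\}$.

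Consider the zero-sum game in which the minimiser picks $P\in\mathcal{P}$, the maximiser picks $(x,y)\in\Dom$, and the payoff is $\mathbf{1}[P(x,y)\neq f(x,y)]$. The hypothesis says that for every mixed strategy $\mu$ of the maximiser there is a pure $P$ with expected payoff at most $\varepsilon$. Hence
\[
\max_{\mu}\min_{P}\, \mathrm{err}_\mu(P)\leq\varepsilon .
\]
By von Neumann's minimax theorem for finite games, equivalently LP duality, there is a distribution $\nu$ on $\mathcal{P}$ such that for every $(x,y)\in\Dom$ we have
\[
\Pr_{P\sim\nu}\bigl[P(x,y)\neq f(x,y)\bigr]\leq\varepsilon .
\]
Using public coins to sample $P\sim\nu$ and then running $P$ gives a public-coin protocol of cost $c$ with worst-case error at most $\varepsilon$. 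This proves $\mathrm{C}^{\mathrm{pub}}_{\varepsilon}(f)\leq c$.

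\emph{One-way case.} The identical argument applies with $\mathcal{P}$ restricted to deterministic one-way protocols. The set is still finite, and sampling from a mixture of one-way protocols via public coins is still one-way.

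\emph{Main obstacle.} The only nontrivial step is invoking minimax. That step requires finiteness of the strategy sets, which is why the restriction to deterministic protocols of bounded cost, justified above, matters.
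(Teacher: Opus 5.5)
Your proof is correct and is the standard argument. The paper cites this fact as Yao's principle without proof, and your two steps are exactly its usual justification: averaging over the public coins for the easy direction, and von Neumann's minimax theorem (equivalently LP duality) over the finite set of bounded-cost deterministic protocols for the hard direction. This includes the fixing-the-coins step behind the paper's remark that distributional protocols may be taken deterministic, which is what justifies your claim that each $\mu$ admits a pure $P$ with error at most $\varepsilon$.
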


\subsection{Quantum communication complexity without entanglement}
In the first quantum communication model, Alice and Bob are allowed to exchange quantum messages, but may not share entanglement at the outset of the protocol. The communication complexity of $f$ in this model will be denoted in the same way as in the classical model with the letter `Q' being used instead of `C'. For example, $\Qo(f)$ refers to the one-way quantum communication complexity of $f$ in the presence of a public coin. Yao's principle (Fact \ref{fact:yao}) holds for any communication model where public randomness is allowed, so it holds in this model as well. That is,
\[ \Qo_\varepsilon(f) = \max_{\mu_{XY}}\mathrm{Q}^1_\varepsilon(f, \mu_{XY}).\]
Here again it is sufficient to restrict to protocols without shared randomness in the distributional setting.

We have the following useful characterization of the best success probability achievable for a distributional task by a one-way protocol in this model which uses a given amount of quantum communication. Independently of our work, Gao, Liu, Yang, and Zhang~\cite[Sec.~1.2, and Proof of Theorem~1.1, Eq.(3)]{GLYZ26-qcc-search-problems} also used this as the starting point of their lower bounds on one-way quantum communication. A similar characterization was used for \emph{classical} one-way protocols in~\cite[Eq.~(S2)]{KGDGGGHJMNHA25-quantum-classical-separation}. 

\begin{lemma}\label{lem:op-norm-bound}
    Let $X,Y$ be input sets. Take $\Dom \subseteq X \times Y$, a function $f: \Dom \to \{0,1\}$, and a distribution $\mu$ on $\Dom$. The best success probability achievable under the distribution $\mu_{XY}$ by a one-way quantum protocol without entanglement, communicating a $C$-qubit quantum system, is:
    \[\max_{\{M^y_b\}} \bbE_{x \sim \mu_X} [\|\bbE_{y \sim \mu_{Y|X=x}} M^y_{f(x,y)}\|_\infty ]\]
    where the maximum is taken over collections $\{M^y_b\}_{y \in Y, b \in \{0,1\}}$ of $2^C$-dimensional POVMs, i.e. $2^C\times 2^C$ positive semidefinite matrices with $M^y_0 + M^y_1 = \id$ for all $y$.
\end{lemma}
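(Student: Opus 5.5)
We prove the two inequalities separately. The argument rests on a normal form for one-way unassisted protocols in the distributional setting. We may assume the protocol is deterministic apart from its inherent quantum randomness, since Yao's principle lets us drop shared randomness. Such a protocol then consists of two things. First, for each $x\in X$, Alice sends a state $\sigma_x$ on a $C$-qubit register, that is, a $2^C\times 2^C$ density matrix. Any private randomness or workspace she uses can be absorbed into $\sigma_x$ being mixed. Second, for each $y$, Bob applies a two-outcome measurement to the message and outputs the result. Bob may also use private ancillas and a general quantum operation, but by Naimark's theorem his overall procedure restricted to the message register is still a POVM $\{M^y_0,M^y_1\}$ on $\complex^{2^C}$. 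Under this description the success probability is
\[
\bbE_{x\sim\mu_X}\,\bbE_{y\sim\mu_{Y|X=x}} \tr\bigl(\sigma_x M^y_{f(x,y)}\bigr) \;=\; \bbE_{x\sim\mu_X} \tr\Bigl(\sigma_x \,\bbE_{y\sim \mu_{Y|X=x}} M^y_{f(x,y)}\Bigr),
\]
where the equality is linearity of the trace.

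\emph{Upper bound.} Fix Bob's POVMs. For each $x$ the operator $A_x=\bbE_{y\sim\mu_{Y|X=x}} M^y_{f(x,y)}$ is positive semidefinite. Hence $\tr(\sigma_x A_x)\le \lambda_{\max}(A_x)=\|A_x\|_\infty$ for every density matrix $\sigma_x$. Averaging over $x$ and maximizing over Bob's POVMs shows that every protocol has success probability at most the stated expression.

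\emph{Achievability.} Conversely, take POVMs $\{M^y_b\}$ attaining the maximum. The maximum exists because the set of POVM collections is compact and the objective is continuous. Let Alice send a top eigenvector $\ket{v_x}$ of $A_x$, a $C$-qubit pure state that depends only on $x$, and let Bob measure with $\{M^y_b\}$ and output $b$. This protocol succeeds with probability $\bbE_x\bra{v_x}A_x\ket{v_x}=\bbE_x\|A_x\|_\infty$, so the bound is attained and equality holds.

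The main point needing care is the reduction to this normal form. Bob's general processing, including ancillas, must reduce to a binary POVM on the $2^C$-dimensional message space. This works precisely because no entanglement is present: Bob's ancillas begin in a fixed state independent of $x$, so they can be absorbed into $M^y_b$ by tracing them out. With shared entanglement this step fails, since the POVM would act on a larger space whose state is correlated with Alice's operations.
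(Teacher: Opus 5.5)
Your proof is correct and follows the paper's argument essentially step for step. The upper bound uses $\tr(\sigma_x A_x)\le\|A_x\|_\infty$ for positive semidefinite $A_x$, achievability has Alice send a top eigenvector of $A_x$, and Bob's fixed-state ancillas are absorbed into a POVM on the message register. Two attributions are slightly off: that absorption, $M^y_b=(\id\otimes\bra{0})\,E^y_b\,(\id\otimes\ket{0})$ for Bob's effective measurement $\{E^y_b\}$ on message and ancilla, is a direct computation rather than Naimark's theorem (which goes the other way), and dropping shared randomness in the distributional setting is an averaging argument rather than Yao's principle.
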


\begin{proof}
On input $x$, Alice sends a density matrix $\rho_x$ on a $2^C$-dimensional register. Since there is no shared entanglement, Bob's output on input $y$ is described by a binary POVM $\{M^y_0,M^y_1\}$ on this register. Note that these POVMs also model attaching an ancilla to the message sent by Alice and doing a joint measurement on the message and ancilla.

The success probability of the protocol is
\begin{align*}
p_{\mathrm{succ}}
&=\bbE_{(x,y)\sim\mu_{XY}}\left[\tr\left(\rho_x M^y_{f(x,y)}\right)\right]\\
&=\bbE_{x\sim\mu_X}\left[
\tr\left(\rho_x\,\bbE_{y\sim\mu_{Y|X=x}}\left[M^y_{f(x,y)}\right]\right)
\right] \\
& \leq
\bbE_{x\sim\mu_X}\left[
\left\|\bbE_{y\sim\mu_{Y|X=x}}\left[M^y_{f(x,y)}\right]\right\|_\infty
\right]
\end{align*}
where the inequality is from the definition of operator norm. Maximizing over Bob's POVMs gives the upper bound direction.

To get that this quantity is also achievable as the best success probability, we fix any collection of binary POVMs $\{M^y_b\}$. For each $x$ with $\mu_X(x)>0$, Alice can choose a unit eigenvector $\ket{\psi_x}$ corresponding to the largest eigenvalue of
\[
\bbE_{y\sim\mu_{Y|X=x}}\left[M^y_{f(x,y)}\right].
\]
She sends $\ket{\psi_x}$, and Bob measures using $\{M^y_0,M^y_1\}$ on input $y$. The success probability conditioned on Alice's input being $x$ is then
\[
\bbE_{y\sim\mu_{Y|X=x}}\left[
\bra{\psi_x}M^y_{f(x,y)}\ket{\psi_x}
\right]
=
\left\|\bbE_{y\sim\mu_{Y|X=x}}\left[M^y_{f(x,y)}\right]\right\|_\infty.
\]
For inputs with $\mu_X(x)=0$, Alice's message can be chosen arbitrarily. Averaging over $x$ shows that the upper bound for this collection of POVMs is achievable. Maximizing over the collections proves the equality.
\end{proof}

\subsection{Entanglement-assisted classical communication complexity} \label{subsec:entanglement-assisted}

In the second quantum communication model, Alice and Bob are allowed to share an entangled state at the start of the protocol, and do quantum operations locally, but they communicate by exchanging classical messages. The entanglement is uncorrelated with the inputs, and Alice and Bob's local operations (depending on their inputs and classical transcript) during the protocol act on their parts of the entangled state. We denote the one-way version of quantum communication complexity in this model by $\Qoe_\varepsilon(f)$. We do not additionally indicate Alice and Bob sharing public randomness via the superscript `pub' here, since shared randomness can be simulated by shared entanglement. Since Alice and Bob implicitly have shared randomness, we have once again by Yao's principle,
\[ \Qoe_\varepsilon(f) = \max_{\mu_{XY}}\Qoe_\varepsilon(f, \mu_{XY}).\]

There is a third model of communication complexity one could consider, where in addition to sharing entanglement, Alice and Bob are additionally allowed to exchange quantum messages. This model turns out to be equivalent to the second model up to a factor of $2$ in the communication complexity. This is because one qubit of communication can be simulated by $2$ bits of classical communication and one shared EPR pair.

We will now prove a theorem that shows that entanglement-assisted protocols can be simulated by classical protocols with shared randomness with an exponential blowup in communication. The fact that there exists such a simulation with exponential blowup in communication was previously known: Shi and Zhu \cite{Shi_2008} showed that $C$ bits of entanglement-assisted communication can be simulated by $2^{O(C)}$ bits of classical communication in the SMP model. The result we show here gives the precise constant in big $O$ in the exponent.

In fact we will use the same technique as \cite{Shi_2008}: reducing the simulation of an entanglement-assisted protocol to the computation of the real inner product between two vectors. We use the following result for computing inner products by them.

\begin{theorem}[{\cite[Section~2]{Shi_2008}}]\label{fact:ip-est}
Let $u,v$ be real vectors in the same finite-dimensional space with $\|u\|_2,\|v\|_2$ $\leq 1$, held by Alice and Bob respectively. For any $0<\delta,\eta<1/2$, there is a classical SMP protocol, with shared randomness between Alice and Bob only, using
\[
O\left(\frac{\log(1/\eta)}{\delta^2}\right)
\]
bits of communication in which the referee obtains an estimate of $\langle u|v\rangle$ with additive error at most $\delta$, except with probability at most $\eta$.
\end{theorem}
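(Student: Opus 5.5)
The inner-product estimation protocol of \cite{Shi_2008} rests on random projections, and I would prove it directly that way. Alice and Bob use their shared randomness to sample $N$ independent standard Gaussian vectors $r_1,\dots,r_N$ in the common space, where $N = O(\log(1/\eta)/\delta^2)$. The referee shares no randomness with them, so each player sends one bit per sample: Alice sends $a_i = \mathrm{sign}(\langle r_i|u\rangle)$ and Bob sends $b_i = \mathrm{sign}(\langle r_i|v\rangle)$. The total communication is $2N$ bits. If $u=0$ or $v=0$ the referee can output $0$; to handle this within the protocol, each player also sends one bit saying whether their vector is zero.

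The basic geometric fact is Grothendieck's identity. Write $\theta$ for the angle between $u$ and $v$. Then $\bbE[a_ib_i] = 1 - 2\theta/\pi = \frac{2}{\pi}\arcsin\langle \hat u|\hat v\rangle$, where $\hat u$ and $\hat v$ are the normalizations of $u$ and $v$. The products $a_ib_i\in\{\pm1\}$ are i.i.d., so by Hoeffding their empirical mean is within $\delta'$ of this expectation except with probability $2e^{-N\delta'^2/2}$. Taking $N = O(\log(1/\eta)/\delta'^2)$ makes this failure probability at most $\eta$. Applying $\sin(\pi\,\cdot/2)$ gives an estimate of the normalized inner product $\langle \hat u|\hat v\rangle$. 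Since $\sin$ is $1$-Lipschitz, the error is at most $\pi\delta'/2$.

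The main obstacle is that the statement concerns $\langle u|v\rangle$ itself, not the normalized inner product, and the norms are not known to the referee. I see two ways around this. The first is to have each player also send their norm, rounded to precision $\delta/4$. Because $\|u\|,\|v\|\le 1$, this costs only $O(\log(1/\delta))$ bits, which is absorbed into $O(1/\delta^2)$. The referee multiplies the rounded norms by the normalized estimate. Since all three factors are bounded by $1$, the additive errors combine to at most $\delta$ once $\delta'$ is chosen as a suitable constant times $\delta$.

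The second, cleaner option avoids sending norms altogether. Embed $u\mapsto (u, \sqrt{1-\|u\|^2}, 0)$ and $v\mapsto (v,0,\sqrt{1-\|v\|^2})$ in a space with two extra coordinates. The embedded vectors are unit vectors whose inner product is exactly $\langle u|v\rangle$, so running the sign protocol on them estimates $\langle u|v\rangle$ directly with no rescaling. This also removes the need to treat zero vectors separately.

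One final technicality is that Gaussian vectors require infinite-precision shared randomness. Because only the signs matter, either model is acceptable: shared randomness is allowed to be continuous, or it can be discretized to finite precision with negligible additional error. Altogether the referee obtains a $\delta$-accurate estimate except with probability $\eta$, using $O(\log(1/\eta)/\delta^2)$ bits.
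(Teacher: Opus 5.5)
Your proposal is correct. The paper gives no proof of its own and simply cites Shi and Zhu; your protocol (shared Gaussian vectors, one sign bit per sample from each player, Hoeffding on the $\pm1$ products, and inverting $\bbE[a_ib_i]=\tfrac{2}{\pi}\arcsin\langle \hat u|\hat v\rangle$ via the $\tfrac{\pi}{2}$-Lipschitz map $x\mapsto\sin(\pi x/2)$) is, to my understanding, the standard random-hyperplane estimator behind that result, and your padding $u\mapsto(u,\sqrt{1-\|u\|_2^2},0)$, $v\mapsto(v,0,\sqrt{1-\|v\|_2^2})$ cleanly gives the norm-at-most-one version the paper actually uses, without sending norms or special-casing zero vectors.
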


\begin{theorem}\label{thm:c-sim}
For any fixed constant $\eta>0$, any entanglement-assisted protocol using $C$ bits of communication can be simulated by a classical SMP protocol with shared randomness using $O(2^{2C})$ bits of communication, while changing the acceptance probability on every input by at most $\eta$. If the original protocol is one-way, the simulation uses $O(2^C)$ bits.
\end{theorem}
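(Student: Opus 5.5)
Following \cite{Shi_2008}, I will write the acceptance probability on each input as a real inner product $\langle u_x|v_y\rangle$ of vectors of norm at most $1$ held by Alice and Bob. The referee estimates it with Theorem~\ref{fact:ip-est}, using $\delta=\eta/2$ and failure probability $\eta/2$, and accepts with probability equal to the clipped estimate. The acceptance probability then moves by at most $\eta$. Everything reduces to making the embedding dimension-free and counting how many inner products are needed. The count is $2^C$ in the one-way case and $2^{2C}$ in general, and that is where the stated costs come from.

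\emph{One-way case.} Let the shared state be $\ket{\psi}_{AB}$. On input $x$, Alice measures a POVM $\{A^x_m\}_{m\in\{0,1\}^C}$ and sends $m$. Bob, on $(y,m)$, measures $\{B^{y,m},\id-B^{y,m}\}$ and accepts on the first outcome. The acceptance probability is
\[
p(x,y)=\sum_{m}\bra{\psi}A^x_m\otimes B^{y,m}\ket{\psi}=\sum_m \langle a_{x,m}|b_{y,m}\rangle ,
\]
with $\ket{a_{x,m}}=(\sqrt{A^x_m}\otimes\id)\ket\psi\otimes\ldots$. More cleanly, I will use the standard identity $\bra\psi X\otimes Y\ket\psi=\tr(X\sqrt{\rho}Y^{\transpose}\sqrt{\rho})$ in a Schmidt basis. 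This turns $\bra\psi A\otimes B\ket\psi$ into the Hilbert--Schmidt inner product $\langle \rho^{1/4}A\rho^{1/4},\,\rho^{1/4}B^{\transpose}\rho^{1/4}\rangle$ of Hermitian matrices, which is real. Set $u_{x,m}=\rho^{1/4}A^x_m\rho^{1/4}$ and $v_{y,m}=\rho^{1/4}(B^{y,m})^{\transpose}\rho^{1/4}$. Since $0\le B\le\id$, we get $\|v_{y,m}\|_2^2\le\tr(\rho^{1/2}B\rho^{1/2}B)\le 1$. For Alice, $\sum_m\|u_{x,m}\|_2^2\le\sum_m\tr(\rho^{1/2}A_m\rho^{1/2}A_m)\le\sum_m\tr(\rho A_m)=1$. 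Concatenating gives $u_x=\bigoplus_m u_{x,m}$ with norm at most $1$ and $v_y=\bigoplus_m v_{y,m}$ with norm at most $2^{C/2}$. Hence $p=\langle u_x|v_y\rangle=2^{C/2}\langle u_x|v_y/2^{C/2}\rangle$. Estimating the normalized inner product to accuracy $\delta 2^{-C/2}$ costs $O(2^C\log(1/\eta)/\delta^2)=O(2^C)$ bits. Real Hermitian matrices embed isometrically into real vectors of dimension $d^2$, and the dimension does not enter the cost. If the shared state is not finite-dimensional, I will first truncate it with negligible error.

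\emph{General case.} I will write the protocol as alternating rounds and expand over full transcripts $t\in\{0,1\}^C$. For a fixed transcript, Alice's operations compose into a single operator $A^x_t=K^\dagger K$, where $K$ is the product of her Kraus-type operators for her messages in $t$. Bob's operations likewise compose into $B^y_t$, including his final accept effect, because the two parties' operations on their own halves commute. This gives $p(x,y)=\sum_t\bra\psi A^x_t\otimes B^y_t\ket\psi$. Alice's side satisfies a normalization only after summing over her own bits, for each fixed sequence of Bob's bits, and symmetrically for Bob. In the worst case, each party's concatenated vector therefore has norm up to $2^{C/2}$. The rescaling factor becomes $2^C$, and the cost becomes $O(2^{2C})$ bits.

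The main obstacle is this norm accounting in the general case. I have to verify that $\sum_t\|u_{x,t}\|_2^2\le 2^{C_B}$ and $\sum_t\|v_{y,t}\|_2^2\le 2^{C_A}$, with $C_A+C_B=C$. This should follow by induction over rounds, using $\sum_{m}A_m=\id$ inside each of Alice's rounds together with the inequality $\tr(\rho^{1/2}X\rho^{1/2}X)\le\tr(\rho X^2)$, but the bound needs care. Once it holds, the product of the two norms is at most $2^{C/2}$. That improves on the $2^{C}$ rescaling factor counted above, so the estimation costs $O(2^{C})\le O(2^{2C})$ bits and the general-case claim follows.
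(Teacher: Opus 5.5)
Your argument is correct. In the one-way case it is essentially the paper's proof. The paper uses $\sum_z\ket{z}\otimes(M^x_z\otimes\id)\ket{\Psi}$ and $2^{-C/2}\sum_z\ket{z}\otimes(\id\otimes N^{y,z}_1)\ket{\Psi}$, then splits real and imaginary parts. Your symmetric Hilbert--Schmidt vectors $\rho^{1/4}A\rho^{1/4}$ and $\rho^{1/4}B^{\transpose}\rho^{1/4}$ are real from the start, at the price of the extra step $\tr(\rho^{1/2}X\rho^{1/2}X)\le\tr(\rho X^2)$. The norm bookkeeping is the same: Alice is normalized by POVM completeness and Bob by $2^{C/2}$.

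In the general case your route is genuinely different. The paper does not expand over transcripts; it invokes \cite[Lemma~72]{KL26b} as a black box, which supplies unit vectors with a $2^C$ rescaling. Your transcript expansion is sound.

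The step you flag as the main obstacle is not needed for the stated $O(2^{2C})$. Each $A^x_t$ and $B^y_t$ has the form $K^\dagger EK$ with $\|K\|_\infty\le 1$ and $0\le E\le\id$, so $0\le A^x_t,B^y_t\le\id$. Hence each block has squared norm at most $1$, and summing over the $2^C$ transcripts bounds both concatenated norms by $2^{C/2}$ directly.

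The refined bound you want is also true and easy to prove. Sum Alice's bits from her last speaking round backwards, using that her Kraus operators depend only on the transcript prefix and that $\sum_b K_b^\dagger K_b=\id$. This gives $\sum_{t_A}A^x_{(t_A,t_B)}=\id$ for every fixed string $t_B$ of Bob's bits, hence $\sum_t\|u_{x,t}\|_2^2\le 2^{C_B}$, and symmetrically for Bob. This argument presumes a fixed speaking order. If the order depends on the transcript, pad all leaves to depth $C$ and use the weighted identities $\sum_t 2^{-b(t)}A^x_t=\id$ and $\sum_t 2^{-a(t)}B^y_t\le\id$, where $a(t),b(t)$ count Alice's and Bob's bits on $t$. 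Then rescale $u_{x,t}\mapsto 2^{-b(t)/2}u_{x,t}$ and $v_{y,t}\mapsto 2^{b(t)/2}v_{y,t}$.

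Either way, your method gives $O(2^C)$ bits for interactive protocols as well, which is stronger than the theorem's general-case bound. The paper's route is shorter because it outsources the vector construction; yours is self-contained and sharper.
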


\begin{proof}
We will show that there is a classical SMP protocol that approximates the original protocol's success probability up to any fixed $\varepsilon>0$. Our proof will consist of showing that the success probability of the original protocol can be approximated by vectors of the appropriate size, and then applying Theorem~\ref{fact:ip-est}.

For the general interactive case, \cite[Lemma~72]{KL26b}, applied with approximation error $\varepsilon/3$, gives real vectors $v_x,w_y$ satisfying $\|v_x\|_2,\|w_y\|_2\leq 1$ and
\[
\left|\Pr[b=1\mid x,y]-2^C\langle v_x|w_y\rangle\right|
\leq \frac{\varepsilon}{3},
\]
where $b$ is the original protocol's output. Alice can compute $v_x$ from $x$, and Bob can compute $w_y$ from $y$. By Theorem~\ref{fact:ip-est}, the referee can estimate $\langle v_x|w_y\rangle$ to additive error $\varepsilon/(3\cdot 2^C)$, with failure probability at most $\varepsilon/3$, using
\[
O\left(\frac{2^{2C}}{\varepsilon^2}\log\frac{1}{\varepsilon}\right)
\]
bits of communication. Multiplying this estimate by $2^C$ gives an estimate of $\Pr[b=1\mid x,y]$ with additive error at most $2\varepsilon/3$ whenever the estimation succeeds. The referee truncates the scaled inner product estimate to $[0,1]$ if it happens to lie outside $[0,1]$, then outputs a bit $b'$ whose probability of being $1$ is equal to the resulting value. Overall we get,
\[
\left|\Pr[b'=1\mid x,y]-\Pr[b=1\mid x,y]\right|
\leq \frac{2\varepsilon}{3}+\frac{\varepsilon}{3}
=\varepsilon.
\]
Since $\varepsilon$ is fixed, this proves the $O(2^{2C})$ communication bound.

We now consider a one-way protocol from Alice to Bob. We may assume that the shared state is pure, denoted by $\ket{\Psi}$, by including a purifying register on Alice's side. Alice's procedure for generating her message is described by a POVM
\[
\{M^x_z\}_{z\in\{0,1\}^C},
\qquad
\sum_{z\in\{0,1\}^C}M^x_z=I.
\]
After receiving the message $z$, Bob uses a binary POVM $\{N^{y,z}_0,N^{y,z}_1\}$. Thus
\[
\Pr[b=1\mid x,y]
=
\sum_{z\in\{0,1\}^C}
\bra{\Psi}M^x_z\otimes N^{y,z}_1\ket{\Psi}.
\]

We construct the vectors for this case directly:
\begin{align*}
\ket{v_x}
&=\sum_{z\in\{0,1\}^C}\ket{z}\otimes
\left((M^x_z\otimes I)\ket{\Psi}\right),\\
\ket{w_y}
&=\frac1{2^{C/2}}\sum_{z\in\{0,1\}^C}\ket{z}\otimes
\left((I\otimes N^{y,z}_1)\ket{\Psi}\right).
\end{align*}
Taking their inner product gives
\[
\Pr[b=1\mid x,y]=2^{C/2}\braket{v_x|w_y}.
\]

The improvement over the interactive case comes from the normalization of Alice's POVM: in the interactive case, her vectors also need to be normalized by the $2^{-C/2}$ factor. Here since $0\preceq M^x_z\preceq I$, we have $(M^x_z)^2\preceq M^x_z$, and consequently
\[
\|v_x\|_2^2 =\sum_z\bra{\Psi}(M^x_z)^2\otimes I\ket{\Psi} \leq \bra{\Psi}\left(\sum_z M^x_z\right)\otimes I\ket{\Psi}
=1.
\]
For Bob's vector, each $N^{y,z}_1$ satisfies $0\preceq N^{y,z}_1\preceq I$, so
\[
\|w_y\|_2^2
=
2^{-C}\sum_z\bra{\Psi}I\otimes(N^{y,z}_1)^2\ket{\Psi}
\leq 2^{-C}\sum_z 1
=1.
\]
If these vectors have complex entries, replacing each vector by the concatenation of its real and imaginary parts preserves its norm and the real part of its inner product with another vector. The inner product above is real, so this gives real vectors with the same norms and the same acceptance-probability identity.

It now suffices to estimate this inner product to additive error $\varepsilon/(3\cdot 2^{C/2})$, with failure probability at most $\varepsilon/3$. Theorem~\ref{fact:ip-est} gives communication
\[
O\left(\frac{2^C}{\varepsilon^2}\log\frac{1}{\varepsilon}\right).
\]
The referee multiplies the estimate by $2^{C/2}$ and uses the same truncation and randomized output procedure as above. The resulting acceptance probability differs from the original one by at most $\varepsilon$ on every input. For fixed $\varepsilon$, the communication is $O(2^C)$.
\end{proof}

\subsubsection{Remote state preparation}
We will be interested in the following task.
\begin{definition}[Remote preparation of flat states] \label{def:rsp}
Let $d,r \in \mathbb{N}$ with $d \geq r$ and $\varepsilon > 0$. Alice is given the classical description of a rank-$r$ projector $P$ on $\mathbb{C}^d$. Alice and Bob are to exchange classical communication so that, at the end of the protocol, there is a dimension-$d$ register $B$ in Bob's lab such that, except with probability $\varepsilon$, 
\[\|\rho_B - P/r\|_{\tr} < \varepsilon.\]
\end{definition}
This task was previously studied in various works, beginning with \cite{jain2003directsumtheoremcommunication}. The following result, due to \cite{kundu2026nearoptimalentanglementcommunicationtradeoffsremote}, describes the state-of-the-art protocol for achieving it given access to prior entanglement. 
\begin{theorem}[Theorem 1.2 of \cite{kundu2026nearoptimalentanglementcommunicationtradeoffsremote}] \label{thm:rspprotocol}
    It is possible to achieve the task in Definition \ref{def:rsp} given that Alice and Bob share $\lceil \log_2 d \rceil$ EPR pairs, with
    \[\log({d/r}) + O(\log(1/\varepsilon))\]
    bits of one-way communication from Alice to Bob. 
\end{theorem}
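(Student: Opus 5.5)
The plan is to use the transpose trick on maximally entangled states, turning remote preparation into a measurement on Alice's side, and to reduce the communication to the problem of finding a short POVM built from rotated copies of $P$.

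\textbf{Setup and reduction.} Embed $\mathbb{C}^d$ into the $2^{\lceil\log_2 d\rceil}$-dimensional space of the EPR pairs. Project onto the $d$-dimensional subspace, or absorb this into the failure event; this is harmless once $d$ is fixed and we work with $\ket{\Phi}=d^{-1/2}\sum_i\ket{i}\ket{i}$. The key identity is $(A\otimes\id)\ket{\Phi}=(\id\otimes A^{\transpose})\ket{\Phi}$. It implies that if Alice measures her half with a POVM element $E_k$ and obtains outcome $k$, Bob's register collapses to $E_k^{\transpose}/\tr(E_k)$, up to the usual normalisation with $\tr$ of the reduced operator. So Alice will use shared randomness, obtained by measuring extra EPR pairs or absorbed into the entanglement, to sample unitaries $U_1,\dots,U_K$ independently of $P$. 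She then measures with
\[
E_k=\frac{(1-\varepsilon)d}{rK}\,\overline{U_kPU_k^{*}}\quad (k\in[K]),\qquad E_{\bot}=\id-\sum_k E_k .
\]
She sends $k$, or a failure flag. On receiving $k$, Bob holds exactly $U_kPU_k^{*}/r$ and applies $U_k^{*}$, which he knows from the shared randomness. He then holds exactly $P/r$. The message length is $\log K$, so the goal is $K=(d/r)\cdot\mathrm{poly}(1/\varepsilon)$.

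\textbf{Validity and failure probability.} Two things must hold with high probability over the shared unitaries: $E_\bot\succeq 0$, and the failure probability $\tr(E_\bot)/d$ is at most $\varepsilon$. The second is easy. Since $\tr(E_k)=(1-\varepsilon)d/K$ deterministically, we get $\tr(E_\bot)/d=\varepsilon$ exactly. The first requires the operator bound
\[
\frac{1}{K}\sum_k U_kPU_k^{*}\preceq \frac{r}{(1-\varepsilon)d}\,\id .
\]
With Haar-random $U_k$ this follows from a matrix Chernoff bound, since the mean is $(r/d)\id$. However, that route needs $K\approx (d/r)\log d/\varepsilon^2$, which yields $\log(d/r)+O(\log\log d+\log(1/\varepsilon))$ bits.

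\textbf{Main obstacle: removing the $\log\log d$ term.} First I would relax exact positivity. I would truncate: replace $\sum_kE_k$ by its compression onto the subspace where it is $\preceq\id$, or rescale individual outcomes that overflow. I would then bound the resulting trace-distance error by $\bbE\,\tr\bigl((\sum_kE_k-\id)_+\bigr)/d$, which is an average quantity and not a worst-case eigenvalue. Controlling an average trace of the positive part only needs second-moment (variance) bounds on $\sum_k U_kPU_k^{*}$. The variance per direction is about $r/(dK)$ relative to the mean, so $K=(d/r)/\varepsilon^{O(1)}$ suffices and no union bound over $d$ eigen-directions is needed.

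If this averaging argument fails, the fallback is structured rather than Haar randomness. Take a random unitary $W$ followed by the $d/r$ Weyl shifts $X^{jr}$, together with a $1/\varepsilon^{O(1)}$-size family of random phase twirls. The aim is that $\sum_j X^{jr}WPW^{*}X^{-jr}$ is close to $\id$ in the required average sense, which would give the claimed $\log(d/r)+O(\log(1/\varepsilon))$ bits with $\lceil\log_2 d\rceil$ EPR pairs.
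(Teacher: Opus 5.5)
The paper gives no proof of this statement. It imports it as Theorem~1.2 of Kundu--Lalonde, so your proposal has to stand on its own.

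Your framework is sound. Alice measures her half of the maximally entangled state with $E_k=\frac{(1-\varepsilon)d}{rK}\overline{U_kPU_k^{*}}$ and sends $k$; Bob undoes $U_k$ and holds exactly $P/r$. Your computation $\tr(E_\bot)/d=\varepsilon$ is also correct.

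There is a genuine gap, though. The one nontrivial quantitative step is never established: that $K=(d/r)\,\mathrm{poly}(1/\varepsilon)$ rotations make $\sum_kE_k\preceq\id$, or nearly so.
\begin{itemize}
\item The matrix Chernoff route you actually carry out gives only $\log(d/r)+O(\log\log d+\log(1/\varepsilon))$ bits, which is weaker than the statement.
\item The relaxation rests on an inequality you assert but do not prove: that the trace-distance error is at most $\bbE\,\tr\bigl((\sum_kE_k-\id)_+\bigr)/d$.
\item The Weyl-shift fallback is not analysed.
\end{itemize}
As written, the proposal therefore proves only the weaker bound.

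The missing observation is that the $\log d$ is an artifact of matrix Chernoff. That inequality only uses $\norm{U_kPU_k^*}_\infty\le 1$, so its exponent is $\varepsilon^2Kr/d$. For a fixed unit vector $x$, the numbers $\bra{x}U_kPU_k^*\ket{x}=\norm{PU_k^*x}_2^2$ are i.i.d.\ $\mathrm{Beta}(r,d-r)$. Their moment generating function $\sum_n\frac{(r)_n}{(d)_n}\frac{\lambda^n}{n!}$ is termwise at most $(1-\lambda/d)^{-r}$. Hence
\[
\Pr\Bigl[\textstyle\sum_k\bra{x}U_kPU_k^*\ket{x}\ge(1+\eta)\frac{Kr}{d}\Bigr]\le e^{-Kr(\eta-\ln(1+\eta))},
\]
whose exponent is of order $\eta^2Kr$ rather than $\eta^2Kr/d$.

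This makes a union bound over a $\delta$-net $\mathcal N$ of the unit sphere of $\mathbb{C}^d$ affordable; the net has size $(O(1/\varepsilon))^{2d}$. Take $\delta,\eta=\Theta(\varepsilon)$ and use $\lambda_{\max}(A)\le(1-2\delta)^{-1}\max_{x\in\mathcal N}\bra{x}A\ket{x}$ for $A\succeq0$. Then $\sum_kE_k\preceq\id$ holds exactly, except with probability $\varepsilon$ over the rotations (Alice can detect this and abort), once $K=O\bigl(\frac{d}{r}\cdot\frac{\log(1/\varepsilon)}{\varepsilon^2}\bigr)$. That gives $\log(d/r)+O(\log(1/\varepsilon))$ bits with no relaxation at all.

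Your relaxation can also be completed, but the natural argument loses a square root rather than giving the linear bound you assert. Let $S=\sum_kE_k$ and $T=\min(\id,S^{-1/2})$. The valid POVM $\{TE_kT\}$ satisfies $\frac1d\sum_k\norm{E_k-TE_kT}_1\le 2\sqrt{\tr((S-\id)_+)/d}$. Your variance estimate gives $\bbE\,\tr((S-\id)_+)/d\le d/(\varepsilon rK)$, which still suffices with $K=(d/r)\,\mathrm{poly}(1/\varepsilon)$.

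Two smaller issues remain.
\begin{itemize}
\item Projecting onto a $d$-dimensional subspace of the $2^{\lceil\log_2 d\rceil}$-dimensional EPR space fails with probability up to nearly $1/2$. That cannot be absorbed into the $\varepsilon$ failure event. Instead, work directly in dimension $d'=2^{\lceil\log_2 d\rceil}<2d$ with $P\oplus0$, at a cost of at most one bit.
\item The Haar rotations are shared randomness beyond the stated budget of $\lceil\log_2 d\rceil$ EPR pairs, and all of those pairs are already used as the carrier. The randomness cannot simply be ``absorbed'': for a fixed family with uniform weights, taking $P$ to project onto a space containing $U_1^*\phi,\dots,U_r^*\phi$ already forces $K\ge(1-\varepsilon)d$. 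Unless the model grants free shared randomness, you must account for it separately.
\end{itemize}
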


\section{Bounded-order subgroup membership}
\label{sec-subgroup-membership}

The function we consider is based on the subgroup membership problem, which was first studied by Watrous~\cite{Watrous00-quantum-proofs} in the context of quantum complexity theory. A distributed version of the problem was studied as a candidate total function for a quadratic separation between one-way quantum and classical communication complexity by Aaronson, Le Gall, Russell, and Tani~\cite{Aaronson_2011}. Our variant adds the restriction that the subgroups are guaranteed to have order at most $k$.
\begin{definition}[Bounded-order subgroup membership]
\label{def:bounded-subgroup}
Let $G$ be a group. For $1\leq k \leq |G|$, let $\mathcal{H}_{G,k}$ be the set of subgroups of $G$ of order at most $k$. Define the bounded-order subgroup membership function $\MEM_{G,k}: \mathcal{H}_{G,k} \times G \to \{0,1\}$ by
\[\MEM_{G,k}(H, g) = \begin{cases}
    1 & \textrm{if } g \in H,\\
    0 & \textrm{if } g \notin H.
\end{cases}\]
\end{definition}
In the two-party communication setting, the subgroup~$H$ is given to Alice, and the group element~$g$ is given to Bob. Subgroup membership has a simple classical one-way protocol where Alice sends the label of her subgroup, which uses $O(\log^2{|G|})$ bits of communication because any group $G$ has at most $2^{(\log_2 |G|)^2}$ subgroups. This follows from the fact that any subgroup of $G$ is generated by at most $\log_2 |G|$ elements. It also admits a quantum one-way protocol using $O(\log |G|)$ qubits where Alice sends $\ket{H}$, a uniform superposition over elements of her subgroup, and Bob performs a Hadamard test on $\ket{H}$ with multiplication by the element $g$~\cite{Watrous00-quantum-proofs}. (See \Cref{sec-entanglement-assisted-protocol} for the details.)

Aaronson, Le Gall, Russell, and Tani~\cite{Aaronson_2011} showed that the quantum one-way communication complexity of subgroup membership (without a restriction on the subgroup order) is $\Omega(\log |G|)$ for a variety of groups, such as $G = \bbZ_2^r$ for $r\geq 1$. Their lower bound also holds in the presence of shared entanglement, so without a bound on the subgroup order, entanglement gives no asymptotic advantage for these groups. With an upper bound on the subgroup order however, we show in \Cref{sec-entanglement-assisted-protocol} that subgroup membership has an efficient entanglement-assisted protocol for all finite groups.

\begin{theorem} \label{thm:upperboundsubgroupmembership}
    For all finite groups $G$ and $1 \leq k \leq |G|$, there exists an entanglement-assisted one-way communication protocol that uses $2\lceil\log_2 |G|\rceil$ shared EPR pairs and $O(\log k)$ bits of communication and computes $\MEM_{G,k}$ with error at most $1/3$. In particular,
    \[C^{1, *}_{1/3}(\MEM_{G,k}) = O(\log k).\]
\end{theorem}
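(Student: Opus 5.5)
We follow the outline in the technical overview: Alice remotely prepares on Bob's side the flat state $\rho$ that is the uniform mixture of left coset states of $H$, and Bob runs Watrous's Hadamard test on it. Let $m=[G:H]$, let $T_1,\dots,T_m$ be the left cosets, and let $P=\sum_i \ket{T_i}\bra{T_i}$. Since the left cosets partition $G$, the states $\ket{T_i}$ are orthonormal, so $P$ is a rank-$m$ projector on $\complex^{G}$ and $\rho=P/m$. Alice knows $H$, so she knows a classical description of $P$. We apply Theorem~\ref{thm:rspprotocol} with $d=|G|$, $r=m$, and a small constant $\varepsilon$. This uses $\lceil\log_2|G|\rceil$ EPR pairs and $\log(|G|/m)+O(1)=\log|H|+O(1)\le \log k+O(1)$ bits. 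Bob then holds a register $B$ with $\|\rho_B-\rho\|_{\tr}<\varepsilon$, except with probability $\varepsilon$.

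The key step is the analysis of the Hadamard test on $\rho$. Bob prepares a control qubit $\ket{+}$, applies controlled-$U_g$ with $U_g\ket{x}=\ket{xg}$, measures the control in the $\ket{\pm}$ basis, and accepts on outcome $+$. The acceptance probability on a state $\sigma$ is $\frac12+\frac12\mathrm{Re}\,\tr(U_g\sigma)$. For a left coset $T_i=aH$, we have $U_g\ket{T_i}=\ket{aHg}$. If $g\in H$ then $aHg=aH$, so $U_g\ket{T_i}=\ket{T_i}$. If $g\notin H$ then $aHg\cap aH=\emptyset$, because $hg=h'$ would force $g\in H$; hence $\braket{T_i|U_gT_i}=0$. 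Averaging over $i$ gives $\tr(U_g\rho)=1$ if $g\in H$ and $0$ otherwise. So the test accepts with probability $1$ in the first case and $1/2$ in the second. This is exactly the property that any coset state works, which is why a high-rank mixture suffices and the communication drops to $O(\log|H|)$.

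To handle errors, each acceptance probability moves by at most $\varepsilon$ (trace distance) plus $\varepsilon$ (failure of the remote state preparation), so it stays within $2\varepsilon$ of the ideal value. Bob repeats the procedure a constant number $t$ of times on fresh remotely prepared copies. He accepts iff all $t$ runs accept, or alternatively thresholds the fraction of accepting runs at $3/4$. With $\varepsilon=1/20$ and $t$ a suitable constant, a Chernoff or direct bound gives error at most $1/3$. The total communication is $t(\log k+O(1))=O(\log k)$.

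The stated entanglement count of $2\lceil\log_2|G|\rceil$ EPR pairs has to be reconciled with $t$ runs, each consuming $\lceil\log_2|G|\rceil$ pairs. The fix I would try first is the all-accept rule with $t=2$. With a sufficiently small constant $\varepsilon$, the probability of accepting in the $g\notin H$ case is at most $(1/2+2\varepsilon)^2<1/3$. In the $g\in H$ case, it is at least $(1-2\varepsilon)^2>2/3$. This uses exactly $2\lceil\log_2|G|\rceil$ pairs and $O(\log k)$ bits. If the constants are tight, shrinking $\varepsilon$ only changes the additive $O(\log(1/\varepsilon))$ term. The main obstacle is purely bookkeeping: the orthogonality of coset states and the applicability of Theorem~\ref{thm:rspprotocol} to the projector $P$ are immediate.
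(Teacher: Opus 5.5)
Your proposal is correct and follows the paper's proof essentially step for step. Alice remotely prepares two independent copies of the flat coset mixture $P/m$ via Theorem~\ref{thm:rspprotocol}, Bob runs a Hadamard test with $U_g$ on each copy, and he accepts iff both tests accept, which is exactly what gives the $2\lceil\log_2|G|\rceil$ EPR pairs. The paper adds only two details: Alice first sends $|H|$ (at most $\lceil\log_2(k+1)\rceil$ bits), so Bob knows the rank parameter $r=|G|/|H|$ of the remote-state-preparation instance he is running, and the case $k=1$ is handled separately with zero communication, since $t(\log k+O(1))$ is not $O(\log k)$ there.
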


In Section~\ref{sec:shifteq}, we present a problem that reduces to bounded-order subgroup membership on a particular group $\tG$, and whose one-way quantum and classical communication complexities without entanglement are exponentially larger than its entanglement-assisted one-way communication complexity.
We also show in \Cref{sec-unassisted-protocols} that for abelian groups, the upper bound of Theorem~\ref{thm:upperboundsubgroupmembership} can be matched without shared entanglement.
\begin{theorem}
    \label{thm:abelian_subgroup}
    Let $G$ be a finite abelian group and $1\leq k\leq |G|$. Then
    \[C^{1,\mathrm{pub}}_{1/3}(\MEM_{G,k}) = O(\log k).\]
\end{theorem}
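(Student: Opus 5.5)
The plan is to use character theory of the finite abelian group $G$ together with a pigeonhole collision trick. Let $\hG$ be the (abelian) group of characters $\chi\colon G\to\complex^\times$, and for a subgroup $H\leq G$ let $H^\perp=\{\chi\in\hG : \chi|_H\equiv 1\}$ be its annihilator. I will use two standard facts.

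\begin{enumerate}[(i)]
\item $\hG/H^\perp\cong\widehat H$, so $H^\perp$ has exactly $|H|\leq k$ cosets in $\hG$. Two characters lie in the same coset iff they agree on $H$.
\item $g\in H$ iff $\chi(g)=1$ for every $\chi\in H^\perp$. This is duality: $(H^\perp)^\perp=H$ under the canonical identification of $G$ with its double dual.
\end{enumerate}

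The basic test is as follows. If $g\notin H$, the map $\chi\mapsto\chi(g)$ is a nontrivial homomorphism on $H^\perp$. Its kernel therefore has index at least $2$ in $H^\perp$, so a uniformly random $\chi\in H^\perp$ satisfies $\chi(g)\neq 1$ with probability at least $1/2$. If $g\in H$, every $\chi\in H^\perp$ gives $\chi(g)=1$. So it suffices for Alice and Bob to obtain a common uniform sample from $H^\perp$ using only $O(\log k)$ bits of communication from Alice.

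To get such a sample, let the public coin be $t=k+1$ independent uniform characters $\chi_1,\dots,\chi_t\in\hG$. Alice knows $H$, so she can compute the coset $c_i=\chi_iH^\perp$ of each $\chi_i$. There are at most $k$ cosets, so by pigeonhole some pair $i<j$ has $c_i=c_j$. Alice picks such a pair by a fixed rule depending only on $(c_1,\dots,c_t)$, say lexicographically first, and sends $(i,j)$ using $2\lceil\log_2(k+1)\rceil$ bits. Bob accepts iff $\chi_i(g)=\chi_j(g)$, that is, iff $(\chi_i\chi_j^{-1})(g)=1$. He can evaluate this because he knows $g$ and the shared characters, and $\chi_i\chi_j^{-1}\in H^\perp$ because $c_i=c_j$.

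The step that needs care is showing that $\chi_i\chi_j^{-1}$ is uniform on $H^\perp$ despite Alice's data-dependent choice of $(i,j)$.
\begin{itemize}
\item Condition on the coset vector $(c_1,\dots,c_t)$.
\item Given this, the $\chi_\ell$ are independent, and each is uniform within its own coset $c_\ell$.
\item The pair $(i,j)$ is then fixed, so $\chi_j$ is uniform in $c_j$ independently of $\chi_i$.
\item Hence $\chi_i\chi_j^{-1}$ is uniform in $H^\perp$ conditionally, and therefore unconditionally.
\end{itemize}

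Combining this with the basic test gives the error bounds. Membership $g\in H$ is always accepted. Non-membership $g\notin H$ is accepted with probability at most $1/2$. Running two independent copies with fresh public coins, and having Bob accept iff both copies accept, reduces the error to $1/4\leq1/3$ using $O(\log k)$ bits in total. The degenerate case $k=1$ is covered as well, since then $H=\{e\}$ and the same protocol works. This gives $C^{1,\mathrm{pub}}_{1/3}(\MEM_{G,k})=O(\log k)$.
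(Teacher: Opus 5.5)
Your proof is correct, but it obtains the sample from $H^\perp$ differently from the paper.

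The paper draws $n=8k$ public characters and has Alice send the first two indices whose characters lie in $H^\perp$ itself. Each character does so independently with probability $1/|H|$, and Alice sends $\bot$ if fewer than two do. Bob accepts iff both characters are trivial on $g$. That protocol needs a binomial tail estimate to show that $\bot$ occurs with probability below $0.01$, so it errs slightly on yes-instances. In exchange, a single run already yields two independent uniform elements of $H^\perp$, and hence soundness error $1/4$.

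Your pigeonhole collision over the $|H|\le k$ cosets of $H^\perp$ instead guarantees a usable pair with only $k+1$ characters. There is no failure event, so your protocol has perfect completeness (one-sided error). The price is a somewhat more delicate conditioning argument: you must show that $\chi_i\chi_j^{-1}$ is uniform on $H^\perp$ despite Alice's data-dependent choice of $(i,j)$. You handle this correctly by making that choice a function of the coset vector only. You also need a second independent repetition, since each run produces just one sample from $H^\perp$.

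Both routes rest on the same duality facts, namely $|H^\perp|=|G|/|H|$ and $(H^\perp)^\perp=H$. Both give $O(\log k)$ bits with comparable constants.
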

Thus, $\MEM_{G,k}$ cannot yield an asymptotic quantum-classical separation if $G$ is abelian.

For completeness, we give upper bounds on the classical communication complexity of the bounded-order subgroup membership function without shared entanglement in the two-way and simultaneous message passing communication models in \Cref{sec-unassisted-protocols}.
\begin{theorem}
    \label{thm:subgroup_classical_other}
    For all finite groups $G$ and $1\leq k\leq |G|$,
    \[C^{\mathrm{pub}}_{1/3}(\MEM_{G,k}) = O(\log k), \qquad C^{\parallel, \mathrm{pub}}_{1/3}(\MEM_{G,k}) = O(k).\]
\end{theorem}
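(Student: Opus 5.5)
The plan is to reduce both bounds to a single hashing primitive that uses only the public coins shared by Alice and Bob. Set $R = 3k$. Alice and Bob use their shared randomness to pick a uniformly random function $\chi\colon G \to [R]$; this is possible since $G$ is finite. Bob computes $\chi(g)$ and Alice computes the image set $\chi(H) = \{\chi(h) : h \in H\}$. Both protocols accept exactly when $\chi(g) \in \chi(H)$, so they have identical correctness analyses.

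\emph{Correctness.} The error is one-sided.
\begin{itemize}
\item If $g \in H$, then $\chi(g) \in \chi(H)$ with certainty, so the protocol always accepts.
\item If $g \notin H$, the values $\chi(g)$ and $\chi(h)$ for $h \in H$ are independent and uniform, since $\chi$ is a random function and $g \neq h$.
\item Hence $\Pr[\chi(g) = \chi(h)] = 1/R$ for each $h \in H$.
\item A union bound over the at most $k$ elements of $H$ gives $\Pr[\chi(g) \in \chi(H)] \leq |H|/R \leq k/(3k) = 1/3$.
\end{itemize}
So the error is at most $1/3$ on every input.

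\emph{Two-way bound.} Bob sends $\chi(g)$ to Alice using $\lceil \log_2(3k) \rceil$ bits. Alice replies with the single bit $[\chi(g) \in \chi(H)]$, which is the output. The total communication is $O(\log k)$, which gives $C^{\mathrm{pub}}_{1/3}(\MEM_{G,k}) = O(\log k)$. This is the reverse direction of the one-way setting: the short message travels from Bob to Alice rather than from Alice to Bob, which is why it does not contradict the one-way lower bounds.

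\emph{SMP bound.} Alice sends the referee the characteristic vector of $\chi(H) \subseteq [R]$, which is $R = 3k$ bits. Bob sends $\chi(g)$, which is $O(\log k)$ bits. The referee looks up the coordinate $\chi(g)$ in Alice's vector and outputs that bit. The referee never needs $\chi$ itself, only Bob's index, so randomness shared between Alice and Bob alone suffices, as the SMP model in this paper requires. The total communication is $O(k)$, which gives $C^{\parallel,\mathrm{pub}}_{1/3}(\MEM_{G,k}) = O(k)$.

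There is no real obstacle here. The only point to check is the one above: that the SMP referee needs only Bob's hashed index and Alice's indicator vector, not the shared coins. Naively sending hashes of every element of $H$ would cost $O(k \log k)$; sending an indicator vector over the range $[R]$ instead is what brings the cost down to $O(k)$.
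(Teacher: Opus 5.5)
Your proposal is correct and matches the paper's proof: a shared random hash $G \to [3k]$ with a union bound over the at most $k$ elements of $H$ (the paper phrases this via bounded-size subset membership). Both the two-way protocol, where Bob sends the hash of $g$ to Alice, and the SMP protocol, where Alice sends the $3k$-bit indicator vector of the hashed image of $H$, are the same as in the paper.
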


\subsection{The entanglement-assisted protocol}
\label{sec-entanglement-assisted-protocol}

We now describe the entanglement-assisted protocol which shows Theorem \ref{thm:upperboundsubgroupmembership}. This protocol is similar in spirit to the quantum one-way protocol of \cite{Aaronson_2011} but uses remote state preparation~\cite{Bennett_2001,kundu2026nearoptimalentanglementcommunicationtradeoffsremote} instead of sending a quantum message directly. Kundu and Lalonde~\cite{kundu2026nearoptimalentanglementcommunicationtradeoffsremote} show that the amount of shared EPR pairs this protocol uses for remote state preparation is essentially optimal. Alice remotely prepares two copies of a mixed state $\rho$ on Bob's side, where $\rho$ is the uniform mixture of all of the left coset states of $H$. Bob then performs a Hadamard test (see Figure~\ref{fig:hadamardtest}) on each copy of $\rho$ with a unitary $U_g$ which applies right multiplication by $g$, given by $\ket{h} \mapsto \ket{hg}$ for all $h\in G$. Bob accepts if and only if both Hadamard tests measure $0$.

\begin{figure}[h]
    \centering
    \begin{quantikz}
    \lstick{$\rho$} & \qw      & \gate{U_g}    & \qw      & \qw \\
    \lstick{$\ket{0}$}    & \gate{\sfH} & \ctrl{-1}   & \gate{\sfH} & \meter{}
    \end{quantikz}
    \caption{The Hadamard test circuit}
    \label{fig:hadamardtest}
\end{figure}

\begin{figure}[htbp]
\begin{tcolorbox}[colback=white,colframe=black,sharp corners,boxrule=0.5pt] 
\textbf{Given.}

\begin{itemize}
    \item A finite group $G$ and a parameter $1 \leq k\leq |G|$.
\end{itemize}

Alice receives $H \in \mathcal{H}_{G,k}$ and Bob receives $g \in G$. 
\medskip

Let $m = [G : H]$ and let $T_1, \dots, T_m$ be the left cosets of $H$ in $G$. Define the state $\rho$ by
\[\rho = \frac{1}{m}\sum_{i=1}^m \ket{T_i} \bra{T_i}\quad \text{where}\quad \ket{T_i} = \frac{1}{\sqrt{|H|}}\sum_{t\in T_i}\ket{t}.\] 
\medskip

\noindent\textbf{Protocol.}
\begin{enumerate}
\item Alice sends Bob the value of $|H|$. Alice and Bob then run two independent instances of the protocol from \Cref{thm:rspprotocol} with $\varepsilon=0.01$ to remotely prepare mixed states $\hat\rho_j$ for $j\in\{1,2\}$ which are approximately $\rho$ on Bob's entangled registers. The two instances use disjoint EPR registers and independent randomness.
\item Bob performs a Hadamard test (shown in Figure~\ref{fig:hadamardtest}) on each $\hat{\rho}_j$ with respect to the unitary $U_g$, getting measurement bits $b_1, b_2$. He outputs $1$ if $b_1 = b_2 = 0$, and $0$ otherwise.
\end{enumerate}

\end{tcolorbox}
\caption{An efficient entanglement-assisted protocol for $\MEM_{G,k}$ using $2\lceil \log_2 |G|\rceil$ EPR pairs and $O(\log k)$ bits of communication.}
\label{fig:theprotocol}
\end{figure}

\begin{proof}[Proof of Theorem~\ref{thm:upperboundsubgroupmembership}]
The case $k=1$ is trivial, since then $H=\{e\}$ and Bob can determine membership without communication. We henceforth assume $k\ge 2$. We refer to the protocol in Figure~\ref{fig:theprotocol}. $U_g$ applies the map $\ket{h} \mapsto \ket{hg}$ for all $h\in G$. If $g \in H$, then $Hg = H$ and so $T_ig = T_i$, but if $g \notin H$, then $Hg$ will be a distinct right coset of $H$ such that $Hg \cap H = \emptyset$. Left-multiplying both sides by any $t\in T_i$ gives that $tHg \cap t H = \emptyset$. This implies that for all left cosets $T_i$,
\[ \braket{T_i | U_g|T_i} = \begin{cases}
    1 & \text{if } g \in H,\\
    0 & \textrm{if } g \notin H
\end{cases}\quad \textrm{and so}\quad \tr(\rho U_g) =
  \frac{1}{m}\sum_{i=1}^{m} \braket{T_i|U_g|T_i} =
  \begin{cases}
    1 & \text{if } g \in H,\\
    0 & \text{if } g \notin H.
  \end{cases}\]
Condition on the event $E$ that $\lVert \hat{\rho}_j - \rho\rVert_{\tr} < 0.01$ for each $j\in \{1,2\}$. Then 
\[\lvert \tr(\hat{\rho}_j U_g) - \tr(\rho U_g)\rvert = \lvert \tr((\hat{\rho}_j - \rho)U_g)\rvert \leq \lVert \hat{\rho}_j - \rho\rVert_{\tr} < 0.01.\] 
Note that on a state $\sigma$ the Hadamard test accepts (i.e. outputs measurement bit $0$) with probability 
\[\frac{1}{2} + \frac{1}{2}\text{Re}(\tr(\sigma U_g)).\]
Then conditional on $E$, 
\[ \left\lvert \Pr[\text{accept on } \hat{\rho}_j] - \Pr[\text{accept on } \rho] \right\rvert < 0.005,\]
and so if $g\in H$ then a Hadamard test on each $\hat{\rho}_j$ rejects with probability at most $0.005$, while if $g\notin H$ then it accepts with probability at most $0.505$. $E$ fails with probability at most $0.02$, so when $g\in H$ the protocol output is $0$ with probability at most $2(0.005) + 0.02 < 1/3$, and when $g \notin H$ the output is $1$ with probability at most $(0.505)^2 + 0.02 < 1/3$. On every input, the error probability is therefore at most $1/3$.

$\rho$ is a mixed state of rank $|G|/|H|$ and so using the remote state preparation protocol of Theorem~\ref{thm:rspprotocol} requires \[\log \frac{|G|}{|G|/|H|} + O(\log(1/\varepsilon)) = \log |H| + O(\log(1/\varepsilon)) \leq \log k + O(\log(1/\varepsilon)) \]
bits of communication from Alice to Bob each time. The protocol calls it twice, which in total requires $2\lceil \log_2 |G| \rceil$ EPR pairs and $O(\log k)$ bits of one-way communication.
\end{proof}

\subsection{Protocols without shared entanglement}
\label{sec-unassisted-protocols}

We now describe the one-way protocol for abelian groups from Theorem~\ref{thm:abelian_subgroup}. Throughout, a character of an abelian group $G$ is a homomorphism $\chi\colon G \to \mathbb{C}^\times$, we write $\widehat{G}$ for the group of characters of $G$, and for a subgroup
$H \le G$ we write 
\[H^\perp = \{\chi \in \widehat{G} : \chi(h) = 1
\text{ for all } h \in H\}\]
for its annihilator. Note that $|H^\perp| = |G|/|H|$. 
The idea of the protocol is that $H$ is \emph{exactly} the set of elements on which every character in $H^\perp$ is trivial, so Alice can tell Bob the indices of a few randomly selected characters in $H^\perp$, and Bob tests whether they are trivial on $g$.

\begin{figure}[htbp]
\begin{tcolorbox}[colback=white,colframe=black,sharp corners,boxrule=0.5pt] 
\textbf{Given.}

\begin{itemize}
    \item A finite abelian group $G$ and a parameter $1 \leq k \leq |G|$.
\end{itemize}

Alice receives $H \in \mathcal{H}_{G,k}$ and Bob receives $g \in G$. 
Let $n = 8k$.
\medskip

\noindent\textbf{Protocol.}
\begin{enumerate}
\item Using shared randomness, Alice and Bob jointly sample characters $\chi_1, \dots, \chi_n \in \widehat{G}$ independently and uniformly at random. Alice sends Bob the first two indices $i<j\in [n]$ such that $\chi_i,\chi_j \in H^\perp$, or $\bot$ if there are fewer than two.
\item Bob outputs $1$ if $\chi_i(g) = \chi_j(g) = 1$ and he did not receive $\bot$, and $0$ otherwise. 
\end{enumerate}

\end{tcolorbox}
\caption{An efficient classical one-way protocol for $\MEM_{G,k}$ when $G$ is abelian.}
\label{fig:abelianprotocol}
\end{figure}

\begin{proof}[Proof of Theorem~\ref{thm:abelian_subgroup}]
We refer to the protocol in Figure~\ref{fig:abelianprotocol}. Since $|H^\perp| = |G|/|H|$ and there are $|\widehat{G}| = |G|$ possible characters, the probability of each independently and uniformly sampled character being in $H^\perp$ is $q = 1/|H|$.  Thus with $n = 8k \geq 8 |H|$ samples, the probability that Alice sends $\bot$ is
\[
  (1 - q)^n + nq(1 - q)^{n-1}.
\]
If $|H| = 1$, then $q = 1$ and this probability is $0$. Otherwise
$q \le 1/2$, so $(1 - q)^{-1} \le 2$. Using $1 - x \le e^{-x}$ gives
\[
  (1 - q)^n + nq(1 - q)^{n-1} \le e^{-nq} + 2nq\, e^{-nq} \le 17 e^{-8} < 0.01,
\]
where the last step uses $nq \ge 8$ and that $xe^{-x}$ is decreasing for $x\geq 1$.

If $g \in H$, then every $\chi \in H^\perp$ satisfies $\chi(g) = 1$, so Bob errs only when Alice sends $\bot$, which happens with probability less than $0.01$.

If $g \notin H$, Bob errs only when Alice does not send $\bot$ and $\chi_i(g) = \chi_j(g) = 1$. Conditioned on Alice not sending $\bot$, $\chi_i$ and $\chi_j$ are independent and uniform on $H^\perp$. Let
\[K = \{ \chi \in H^\perp: \chi(g) = 1\}.\]
Then $K$ is the annihilator of $L = \langle H, g\rangle$, so $|K| = |G|/|L|$. Since $L$ contains the disjoint sets $H$ and $gH$, we have $|L| \ge 2|H|$, and hence $|K| \leq |H^\perp|/2$. So for $\chi$ uniform on $H^\perp$, $\Pr[\chi(g) = 1] \leq 1/2$, and Bob errs with probability at most $1/4$.

On every input, the error probability is therefore at most $1/4 < 1/3$. Alice sends either two indices in $[n]$ or $\bot$, which for $n = 8k$ takes $O(\log k)$ bits of one-way communication.  
\end{proof}

We now show the remaining classical communication upper bounds in Theorem~\ref{thm:subgroup_classical_other}. Bounded-order subgroup membership is a special case of bounded-size subset membership, defined as follows.
\begin{definition}[Bounded-size subset membership]
    Let $U$ be a finite set and let $1\leq k \leq |U|$. Alice receives a subset $A \subseteq U$ with $|A|\leq k$, and Bob receives an element $x\in U$. Define the bounded-size subset membership function by
    \[ \textsc{SetMemb}_k(A,x) = \begin{cases}
        1 & \text{if } x\in A,\\
        0 & \text{if }x\notin A.
    \end{cases}\]
\end{definition}

\begin{proof}[Proof of Theorem~\ref{thm:subgroup_classical_other}]
    Every $H \in \mathcal{H}_{G,k}$ is a subset of $G$ of size at most $k$, so any protocol for $\textsc{SetMemb}_k$ with $U = G$ also solves $\MEM_{G,k}$ at the same cost. It therefore suffices to give protocols for $\textsc{SetMemb}_k$.
    In both protocols, Alice and Bob use shared randomness to pick a uniformly random hash function $h\colon U \to [3k]$. The key property is that if $x \notin A$, then by a union bound over the at most $k$ elements of $A$,
    \[
      \Pr\big[\exists\, a \in A : h(a) = h(x)\big] \le \frac{k}{3k} = \frac13 .
    \]
    \textit{Two-way protocol.} Bob sends $h(x)$ to Alice, which costs $\log k + O(1)$ bits of communication. Alice outputs $1$ if there exists some $a\in A$ such that $h(a) = h(x)$ and $0$ otherwise. If $x\in A$, this always succeeds. If $x\notin A$, then the probability that some element of $A$ hashes to the same value as $x$ is at most $1/3$ by the above bound. Note that this is really a one-way protocol in the reverse direction from Bob to Alice.

    \textit{SMP protocol.} Bob sends $h(x)$ to the referee using $\log k + O(1)$ bits. Alice sends the $3k$-bit indicator vector of $h(A) =\{h(a): a\in A\} \subseteq [3k]$ to the referee. The referee outputs $1$ if $h(x) \in h(A)$, and $0$ otherwise. The analysis is the same as in the two-way protocol.
\end{proof}

\section{Upper and lower bounds for the shifted equality problem}
\label{sec:shifteq}

Let $G$ be a finite group and let $S = \bbZ_2^r$ for some $r \geq 1$.
The communication problem which will prove Theorem \ref{thm:maintheorem} will be an instance of the following problem:
\begin{definition}[Shifted equality] \label{def:shiftedequality}
Let $G^S$ denote the set of all functions $\bg\colon S \to G$. We define the function $\ShiftEq_{G,r}: (G^S \times G^S) \times (G^S \times S) \to \{0,1\}$ by:
\[\ShiftEq_{G,r}((\bg_1,\bg_2), (\bh,s)) = \begin{cases}
    1 & \text{if }\bg_1(x)\bg_2(x+s) = \bh(x)\quad \forall x\in S,\\
    0 & \text{otherwise}.
\end{cases}\] 
\end{definition}
Going forward, we view elements of $G^S$ as arrays and omit the input $x$, for example writing $\bg_1(\shift_s(\bg_2)) = \bh$ for the success condition on inputs $\bg_1,\bg_2,\bh\in G^S$ and $s\in S$ where $\shift_s(\bg)(x) := \bg(x+s)\ \forall x \in S$.

The shifted equality problem turns out to be reducible to a specific instantiation of the bounded-order subgroup membership problem, and hence admits an efficient entanglement-assisted protocol in view of Theorem \ref{thm:upperboundsubgroupmembership}. This is summarized by the following theorem, which will be proved in Section \ref{subsec:upperboundsreduction}.
\begin{theorem} \label{thm:reductiontoMEMB}
    For any finite group $G$ and any $r \geq 1$, setting $k=2^{r+1}$, there exists a finite group $\tG$ with $|\tG| = k|G|^{k}$ along with mappings 
    \[X\colon G^S \times G^S\to \mathcal{H}_{\tG, k}, \; Y\colon G^S \times S \to \tG\]
    such that, for all Alice inputs $\bg_1, \bg_2 \in G^S$ and all Bob inputs $\bh \in G^S, s \in S$, we have:
\[\ShiftEq_{G,r}((\bg_1,\bg_2), (\bh,s)) = \MEM_{\tG, k}(X(\bg_1,\bg_2), Y(\bh, s)).\]    
\end{theorem}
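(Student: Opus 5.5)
The plan is to take $\tG$ to be a semidirect product of the base group $B = G^S \times G^S$ with the elementary abelian group $W = S \times \bbZ_2 \cong \bbZ_2^{r+1}$. Then $|\tG| = |G|^{2\cdot 2^r}\cdot 2^{r+1} = k|G|^k$ with $k = 2^{r+1}$, as required. Alice's subgroup will be a conjugate of the complement $\{(e,w) : w\in W\}$. This conjugate has order exactly $|W| = k$, so it lies in $\mathcal{H}_{\tG,k}$.

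\textbf{Step 1: the action.} Define $\varphi\colon W\to\Aut(B)$ by
\[
\varphi(s,0)(\bm a,\bm b) = (\shift_s\bm a,\shift_s\bm b), \qquad \varphi(s,1)(\bm a,\bm b) = (\shift_s\bm b,\shift_s\bm a).
\]
By Lemma~\ref{lem:ishomomorphism}, $s\mapsto\shift_s$ is a homomorphism into $\Aut(G^S)$. The coordinate swap is an automorphism of $B$ of order $2$ that commutes with the diagonal shifts. Hence $\varphi$ is a homomorphism, and we set $\tG = B\rtimes_\varphi W$.

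\textbf{Step 2: the mappings.} Given Alice's input $(\bg_1,\bg_2)$, put $c = ((\bg_1,\bg_2^{-1}),0)\in\tG$, where the inverse is pointwise. Define
\[
X(\bg_1,\bg_2) = c\,\{(e,w):w\in W\}\,c^{-1}.
\]
Using the multiplication and inverse formulas for semidirect products, an element of this subgroup has the form $c(e,w)c^{-1} = (\beta\,\varphi(w)(\beta^{-1}), w)$ with $\beta = (\bg_1,\bg_2^{-1})$. Hence for each $w$ there is exactly one element of $X(\bg_1,\bg_2)$ with second component $w$. For $w=(s,1)$ we have $\beta^{-1} = (\bg_1^{-1},\bg_2)$ and $\varphi(s,1)(\beta^{-1}) = (\shift_s\bg_2,\shift_s\bg_1^{-1})$, so the base component is
\[
\bigl(\bg_1\,\shift_s(\bg_2),\ \bg_2^{-1}\,\shift_s(\bg_1^{-1})\bigr).
\]
Evaluating the second coordinate at $x$ gives $(\bg_1(x+s)\bg_2(x))^{-1}$. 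Substituting $y=x+s$ and using that $S$ is an elementary $2$-group, this equals $\shift_s\bigl(\bg_1\shift_s(\bg_2)\bigr)(x)^{-1}$. So the second coordinate is determined by the first coordinate and $s$ alone. Bob can therefore compute it without knowing $\bg_1,\bg_2$, and we define
\[
Y(\bh,s) = \bigl((\bh,\ \shift_s(\bh)^{-1}),\ (s,1)\bigr).
\]

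\textbf{Step 3: correctness.} By Step 2, $Y(\bh,s)\in X(\bg_1,\bg_2)$ if and only if $\bh = \bg_1\shift_s(\bg_2)$ and $\shift_s(\bh)^{-1} = \shift_s(\bg_1\shift_s\bg_2)^{-1}$. The second equation follows from the first, so membership holds exactly when $\ShiftEq_{G,r}((\bg_1,\bg_2),(\bh,s))=1$.

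The main obstacle is bookkeeping rather than ideas. One must confirm that $\varphi$ is a genuine homomorphism, in particular that swap and shift commute. One must also get the conjugation formula right under the paper's convention $(g,h)(g',h') = (g\,\varphi(h)(g'),hh')$, choosing $c$ so that the first coordinate comes out as $\bg_1\shift_s(\bg_2)$ rather than a twisted variant. If the signs come out wrong, I would adjust by replacing $\bg_2$ by $\bg_2^{-1}$ in $c$ or by conjugating with $c^{-1}$ instead of $c$.
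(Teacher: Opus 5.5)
Your proposal is correct and is essentially the paper's proof. Under the identification $G^{\tS}\cong G^S\times G^S$ with $\tS=S\times\bbZ_2$, your action $\varphi$ is exactly the shift action of $\tS$ on $G^{\tS}$, your conjugator $c$ is the paper's $(\tbg,0)$, and your $Y(\bh,s)$ is the paper's $(\tbh,(s,1))$. The only cosmetic difference is that you check the homomorphism property by hand (the swap commutes with diagonal shifts), whereas the paper gets it directly from Lemma~\ref{lem:ishomomorphism} applied to the abelian group $\tS$.
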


On the other hand, we will show the following lower bound for one-way entanglement-unassisted quantum protocols in Section \ref{subsec:q-lb}.
\begin{theorem} \label{thm:quantumlowerbound}
    Suppose $D \geq 1$ is such that there exists $\zeta \in Z(G)$ of order 3 which is mapped to $\id$ in any representation of $G$ of dimension less than $D$. Then
    \[Q^{1, \pub}_{1/3}(\ShiftEq_{G,r}) = \Omega(\min(2^r, (\log D)/2^r )). \]
\end{theorem}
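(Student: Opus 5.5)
We follow the route sketched in the technical overview. By Yao's principle and Lemma~\ref{lem:op-norm-bound}, it suffices to fix the hard distribution: $\bg_1,\bg_2$ uniform in $G^S$, $s$ uniform in $S$, $\phi\in\bbF_3$ with $\Pr[\phi=0]=1/2$ and $\Pr[\phi=1]=\Pr[\phi=2]=1/4$, and $\bh=\zeta^\phi\bh_0$ where $\bh_0=\bg_1\shift_s(\bg_2)$. We then show that any collection of $2^C$-dimensional POVMs achieves success probability at most $1/2+o(1)$ unless $C=\Omega(\min(2^r,(\log D)/2^r))$. Writing $M^{\bh,s}_0=\id-M^{\bh,s}_1$, the success probability is at most $\frac12+\bbE_{\bg_1,\bg_2}\|T(\bg_1,\bg_2)\|_\infty$, with $T=\bbE_s[O_s(\bh_0)]$ and $O_s(\bh_0)=\frac12M^{\bh_0,s}_1-\frac14M^{\zeta\bh_0,s}_1-\frac14M^{\zeta^2\bh_0,s}_1$. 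Each $O_s$ is Hermitian with $\|O_s\|_\infty\le 1$. For even $p$ we bound $\|T\|_\infty\le(\tr T^p)^{1/p}$ and, via Jensen, $\bbE\|T\|_\infty\le(\bbE\tr T^p)^{1/p}$.

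\textbf{Step 1: the decoupled model.} Define $\tT=\bbE_s[O_s(\zeta^{\phi_s}\bh_0)]$ with $\bm\phi$ uniform in $\bbF_3^S$. Averaging over uniform $\phi_s$ gives $\bbE_{\phi_s}O_s(\zeta^{\phi_s}\bh_0)=0$, since the coefficients $\frac12,-\frac14,-\frac14$ are cyclically summed. Conditional on $\bg_1,\bg_2$, the $2^r$ summands are therefore independent, mean zero and of norm at most $1$. Corollary~\ref{cor:ind-moment} then yields $\bbE\tr\tT^p\le 2^C(4p\cdot 2^r)^{p/2}/2^{rp}=2^C(4p/2^r)^{p/2}$.

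\textbf{Step 2: the hybrid and mixing argument (main obstacle).} We must show $|\bbE\tr T^p-\bbE\tr\tT^p|\le O(2^{Cp}\,2^r/D)$. Expanding $\tr T^p$ as a $2^{-rp}$-average over $s_1,\dots,s_p$ of $\tr(O_{s_1}\cdots O_{s_p})$, we group the result by $s$-values into a scalar function $\prod_s w_s(\cdot)$ with $|w_s|\le 1$ after normalizing by $2^{C}$ factors; the crude bound gives each trace magnitude $\le 2^C$, and we accept a loss of $2^{Cp}$. We then replace $\bh_0$ by $\zeta^{\phi_s}\bh_0$ one shift $s$ at a time, giving $2^r-1$ nontrivial hybrids ($s=0$ can be handled since $\zeta$ is central, so a global rotation is absorbed). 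For a single hybrid step at shift $s\ne0$, we pair coordinates $x,x+s$ and condition on everything else. The step then reduces to comparing $\bbE[Q(g_0h_0,g_1h_1)R(g_0h_1,g_1h_0)]$ with its $\zeta^\phi$-rotated version, as in \eqref{eq-mixing-approximation}. Here $Q$ and $R$ are bounded functions on $G^2$ coming from the one shift and the rest respectively. We write the difference as $\langle Q|(\id-P)K|R\rangle/|G|^2$. Here $K$ is the averaging operator realizing the change of variables, $P$ is the average over simultaneous central rotations, and we use $\|\ket{Q}\|_2,\|\ket R\|_2\le|G|$.

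\textbf{Step 3: the Fourier computation.} In the Fourier basis of $G^2$, $P$ projects onto the irrep pairs $(\rho,\sigma)$ with $\rho(\zeta)\sigma(\zeta)=\id$. Since $\zeta$ is central, Schur's lemma (Proposition~\ref{prop-schur}) makes $\rho(\zeta)$ a scalar cube root of unity. Computing $\hK$ from Proposition~\ref{prop-irrep-orthogonality} should show that $\hK$ is block diagonal. It should be supported on pairs $\sigma=\rho$, or $\rho^*$ depending on conventions, and act there as $\frac1{d_\rho}$ times a permutation of the matrix-index basis. Pairs with $\rho(\zeta)=\id$ are killed by $\id-\hP$. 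The remaining pairs involve irreps with $\rho(\zeta)\ne\id$, which by hypothesis have $d_\rho\ge D$. Hence $\|(\id-P)K\|_\infty\le 1/D$. Verifying the exact form of $K$ and this block structure is the delicate computation.

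\textbf{Step 4: combining.} We get $\bbE\|T\|_\infty\le 2^{C/p}\sqrt{4p/2^r}+O(2^C(2^r/D)^{1/p})$. Choose $p\approx 2^r/64$, rounded to an even integer. If $C\le c\,2^r$, the first term is a small constant. The second term is small provided $C+(r-\log D)/p\le -\Omega(1)$, i.e.\ $C\le c'(\log D)/2^r$. So if $C\le c\min(2^r,(\log D)/2^r)$, the success probability is below $2/3$, which proves the bound. When $\log D<2^{2r}$ the claimed bound is at most $2^r$ and is consistent with this; for very small values we use the trivial $\Omega(1)$ bound.
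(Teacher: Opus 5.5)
\textbf{Gap in Step 2.} Your overall architecture matches the paper: hard distribution, Lemma~\ref{lem:op-norm-bound}, $p$-th moments, the decoupled model bounded by Corollary~\ref{cor:ind-moment}, a hybrid reduced to the $G^2$ mixing lemma, and the Fourier bound $\|(\id-P)K\|_\infty\le 1/D$. The gap is in how you set up the hybrid. You rotate one shift $s$ at a time and claim that, after pairing coordinates $x,x+s$ and conditioning, each step becomes $\bbE[Q(g_0h_0,g_1h_1)R(g_0h_1,g_1h_0)]$, with $Q$ coming from the single shift and $R$ from all the others. For $r\ge 2$ no such decomposition exists.

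Suppose you condition on everything except $\bg_1(x),\bg_1(x+s),\bg_2(x),\bg_2(x+s)$. Then a shift $s'\notin\{0,s\}$ produces entries such as $\bg_1(x)\bg_2(x+s')$ and $\bg_1(x+s')\bg_2(x)$. Each is one free entry times a fixed one, not a function of the products $g_0h_1,g_1h_0$. Moreover, rotating the shift-$s$ array means rotating all $2^r$ of its coordinates, not one pair.

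The change of variables that treats all coordinates at once is the paper's: $\bu(x)=g_{\ell(x)}$, $\bv(x)=h_{\ell(x)}$ for a nonzero linear form $\ell$. By \eqref{eq:shifted-coordinate}, the middle factor of the shift-$s$ product then depends only on $\ell(s)$. So the $Q$-side is forced to be the whole hyperplane $\ker\ell$, which contains $0$ and has $2^{r-1}$ elements, and the $R$-side is its complement. A single application of Lemma~\ref{lem:G2lemma} can therefore only add a fresh phase to all shifts of a hyperplane simultaneously, never to one shift.

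\textbf{Missing idea.} What you still need is a way to pass from such hyperplane rotations to fully independent phases. The paper enumerates $S=\{t_1=0,t_2,\dots\}$ and, at step $k$, adds an independent uniform $\phi'_{t_k}$ to every $s$ with $t_k\cdot s=0$. The first step is a global rotation, which is free by centrality of $\zeta$. After all steps, shift $s$ carries phase $(M\bm{\phi}')_s$, where $M_{s,t}=[s\cdot t=0]$. Lemma~\ref{lem:Misinvertible} then says $M$ is invertible over $\bbF_3$; it is proved by writing $M=-(\mathsf{J}+\sfH^{\otimes r})$ and applying the matrix determinant lemma. Invertibility is exactly what makes $M\bm{\phi}'$ uniform on $\bbF_3^S$, so the endpoint is $\Corr_{\mathrm{ind}}$. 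This leaves $2^r-1$ steps, each costing $1/D$. Without this ingredient, your $2^r-1$ single-shift steps, each claimed to cost $1/D$, are unjustified.

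\textbf{Minor fix in Step 4.} With $p\approx 2^r/64$, your first term is $2^{C/p}\sqrt{4p/2^r}\ge 1/4$, so the success bound is at least $3/4>2/3$ and gives no contradiction. You need $p=\kappa 2^r$ with $\kappa$ well below $1/144$, as in the paper's choice of a sufficiently small $\kappa$.
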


Theorem \ref{thm:maintheorem} will be derived from Theorems \ref{thm:reductiontoMEMB} and \ref{thm:quantumlowerbound} in \Cref{sec:heis-group} by providing explicit choices for $G$ and $r$.

\subsection{Upper bounds by reduction to the subgroup membership problem} \label{subsec:upperboundsreduction}

We start by establishing Theorem \ref{thm:reductiontoMEMB}. 

\begin{proof}[Proof of Theorem \ref{thm:reductiontoMEMB}]
We first construct a group ($G^{\tS}$ below) each element~$\tbg$ of which encodes a pair of elements~$\bg_1, \bg_2 \in G^S$. The underlying group~$\tG$ for the subgroup membership problem builds on~$G^{\tS}$ to include the shift~$s \in S$ in the encoding. With each pair of elements~$\bg_1, \bg_2 \in G^S$ we associate a subgroup~$H_{\tbg}$ of the group~$\tG$, where~$\tbg$ encodes the pair~$\bg_1, \bg_2$. The elements of the subgroup~$H_{\tbg}$ encode exactly the shifted products~$\bg_1 \shift_s(\bg_2)$ for~$s \in S$. We encode a pair~$\bh \in G^S, \, s \in S$ into a suitable element~$(\tbh, \ts\,) \in \tG$ to complete the reduction. Thus~$X(\bg_1, \bg_2) = H_{\tbg}$ and~$Y(\bh, s) = (\tbh, \ts\,)$. We provide the details below.

    First set
    \[\tS = S \times \bbZ_2\]
    \[\tG = G^{\tS} \rtimes_{\shift} \tS\]
    using a semidirect product, where $\shift$ denotes the homomorphism $\ts \mapsto \shift_{\ts}$ from $\tS$ to $\Aut(G^{\tS})$ given by Lemma \ref{lem:ishomomorphism}. 

    Given Alice's inputs $\bg_1, \bg_2 \in G^S$, and Bob's inputs $\bh \in G^S, s \in S$, define $\tbg, \tbh \in G^{\tS}$ by:
    \begin{align*}
    \tbg(s', b) & = \begin{cases} \bg_1(s') & \text{ if }b = 0, \\ 
                                  \bg_2(s')^{-1} & \text{ otherwise}; 
                    \end{cases}\\
    \tbh(s', b) &= \begin{cases} \bh(s') & \text{ if }b = 0, \\ 
                                 \bh(s'+s)^{-1} & \text{ otherwise}. 
                    \end{cases}  
    \end{align*}
    Thus~$\tbg$ encodes the pair~$\bg_1, \bg_2$, and~$\tbh$ encodes~$\bh$.
    
    Write $\be \in G^{\tS}$ for the array all of whose entries are~$e$, the identity of $G$. Denoting the identity element in~$\tS$ by~$0$ and the group operation in~$\tG$ by `$\cdot$', we compute, for $\ts \in \tS$:
    \begin{align*}
    (\tbg, 0) \cdot (\be, \ts\,) \cdot (\tbg, 0)^{-1} &=      (\tbg, 0) \cdot ((\be, \ts\,) \cdot  (\tbg^{-1}, 0))\\
            &= (\tbg, 0) \cdot (\shift_{\ts}(\tbg^{-1}), \ts\,)\\
            &=(\tbg \cdot \shift_{\ts}(\tbg^{-1}), \ts\,).
    \end{align*}
    Set $\ts = (s, 1)$, where~$s \in S$ is the shift offset given to Bob. We find, for $s' \in S$:
    \[(\tbg \cdot \shift_{\ts}(\tbg^{-1}))(s',0) = \tbg(s', 0) \tbg(s'+s, 1)^{-1} = \bg_1(s') \bg_2(s+s')\]
    \[(\tbg \cdot \shift_{\ts}(\tbg^{-1}))(s',1) = \tbg(s', 1) \tbg(s'+s, 0)^{-1} = \bg_2((s'+s)+s)^{-1} \bg_1(s+s')^{-1} . \]
    Thus the conjugate of~$(\be, \ts\,)$ given by~$(\tbg, 0)$ encodes the shifted product~$\bg_1 \shift_s(\bg_2)$ when~$\tbg$ encodes the pair~$\bg_1, \bg_2$.
    
    Define $H_0$ to be the following subgroup of $\tG$: 
    \[H_0 = \{(\be, \ts\,) \mid \ts \in \tS\}.\]
    The subgroup~$H_0$ has size $2^{r+1}$.
    We set:
    \begin{align*}
        X(\bg_1, \bg_2) & = (\tbg, 0)H_0(\tbg, 0)^{-1}, \qquad \text{and} \\
        Y(\bh, s) & = (\tbh, \ts \,).
    \end{align*}
    The above calculations show that $\bg_1 \shift_s(\bg_2) = \bh$ if and only if $Y(\bh, s) \in X(\bg_1, \bg_2)$, as desired. 
\end{proof}

\subsection{Lower bound without entanglement}\label{subsec:q-lb}

 We now turn to proving Theorem \ref{thm:quantumlowerbound}. For the rest of this subsection, assume that the group $G$  has an element~$\zeta \in Z(G)$ of order~$3$ (i.e., $\zeta^3 = e$ and~$\zeta \neq e$). Set $D$ to be the least dimension~$d$ of an irrep~$\rho$ of $G$ in which $\zeta$ is not trivial (i.e., $\rho(\zeta) \neq \id$). We will show the following result in Section \ref{subsubsec:finishingproof}, using an analytic theorem to be proven in Section \ref{subsubsec:preliminaryresult}.
 
 \begin{theorem} \label{thm:intermediatequantumlowerbound}
 There exists an input distribution $\mu$ such that, for all $C \geq 1$ and all even $p \geq 2$,  any one-way quantum protocol for $\ShiftEq_{G,r}$ with $C$ qubits of communication has average success probability under $\mu$ at most:
 \[\frac{1}{2} + O \!\! \left(2^{C/p}\sqrt\frac{p}{2^r} + 2^{C}\left(\frac{2^r-1}{D}\right)^{1/p}\right) . \]
 \end{theorem}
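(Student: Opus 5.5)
We take $\mu$ to be the hard distribution from the technical overview. Alice gets uniform $\bg_1,\bg_2\in G^S$ and Bob gets uniform $s\in S$, with $\bh=\zeta^\phi\bh_0$ where $\bh_0=\bg_1\shift_s(\bg_2)$. Here $\phi=0$ with probability $1/2$, and $\phi=1$ and $\phi=2$ each with probability $1/4$. By Lemma~\ref{lem:op-norm-bound}, after substituting $M^{\bh,s}_0=\id-M^{\bh,s}_1$, the success probability is at most
\[
\frac12+\max\bbE_{\bg_1,\bg_2}\|T(\bg_1,\bg_2)\|_\infty, \qquad T=\bbE_s[O_s(\bh_0)],
\]
where $O_s(\bh)=\tfrac12M_1^{\bh,s}-\tfrac14M_1^{\zeta\bh,s}-\tfrac14M_1^{\zeta^2\bh,s}$. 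Each $O_s$ is Hermitian with $\|O_s\|_\infty\le 1$.

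For even $p$ we use $\|T\|_\infty\le(\tr T^p)^{1/p}$ and Jensen to get $\bbE\|T\|_\infty\le(\bbE\tr T^p)^{1/p}$. The task is then to show
\[
\bbE\,\tr T^p\le \bbE\,\tr\tT^p+\mathrm{err}, \qquad \tT=\bbE_s[O_s(\zeta^{\phi_s}\bh_0)]
\]
with independent uniform $\phi_s\in\bbF_3$.

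\textbf{Moment bound for $\tT$.} Condition on $\bg_1,\bg_2$. For fixed $\bh_0$ and $s$, the average of $O_s(\zeta^{\phi}\bh_0)$ over uniform $\phi$ is $0$, because the coefficients $\tfrac12,-\tfrac14,-\tfrac14$ sum to zero over the three rotations. So the summands are independent, mean zero and of norm at most $1$. Corollary~\ref{cor:ind-moment} with $n=2^r$ and the $2^{-r}$ normalization gives
\[
\bbE\,\tr\tT^p\le 2^C(4p/2^r)^{p/2}.
\]
Taking the $p$-th root gives the $2^{C/p}\sqrt{p/2^r}$ term.

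\textbf{Comparing $T$ with $\tT$.} Expand $\tr T^p$ as $2^{-rp}\sum_{s_1,\dots,s_p}\tr(O_{s_1}\cdots O_{s_p})$, and do the same for $\tT$. We then replace the common rotation $\phi=0$ by independent rotations $\phi_s$ one shift value at a time, a hybrid over the $2^r$ values of $s$, or over the $2^r-1$ nonzero ones relative to a reference. At each hybrid step we isolate the dependence on a single shift $s$. Fixing all randomness except the pair of coordinates $\{x,x+s\}$ linking $\bg_1$ and $\bg_2$ reduces the difference to an expression of the form~\eqref{eq-mixing-approximation}, with bounded functions $Q,R$ on $G^2$. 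The bounded function is the trace of a product of at most $p$ operators of norm at most $1$ on a $2^C$-dimensional space, so it has sup norm at most $2^C$ per factor grouping, up to $2^{Cp}$ overall. The difference is bounded by $\|(\id-P)K\|_\infty\,\|Q\|_2\|R\|_2$. Moving to the Fourier basis via $\qft_{G}^{\otimes2}$, we show that $P$ projects onto the irreps with $\rho(\zeta)=\id$, using centrality of $\zeta$ and Schur's lemma (Proposition~\ref{prop-schur}). We also show that $K$ acts on each block $\rho$ as $d_\rho^{-1}$ times a permutation (a swap of tensor indices), using Proposition~\ref{prop-irrep-orthogonality}. Together these give $\|(\id-P)K\|_\infty\le 1/D$. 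Summing over the $2^r-1$ hybrid steps gives an error of $O(2^{Cp}(2^r-1)/D)$, and its $p$-th root gives the second term.

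\textbf{Main obstacle.} The hardest step is the hybrid reduction: showing that swapping the rotation at one shift really factors into the two-coordinate mixing expression~\eqref{eq-mixing-approximation}. This requires identifying which coordinates of $\bg_1,\bg_2$ enter $\bh_0$ at shift $s$ versus the other shifts. I would first try fixing a matching pair $(x,x+s)$ and using that $\bh_0(x)=\bg_1(x)\bg_2(x+s)$ and $\bh_0(x+s)=\bg_1(x+s)\bg_2(x)$ give exactly the pattern $(g_0h_0,g_1h_1)$ versus $(g_0h_1,g_1h_0)$.

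Finally, since $(a+b)^{1/p}\le a^{1/p}+b^{1/p}$, the two bounds combine to give the stated estimate.
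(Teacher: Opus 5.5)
Your outer structure matches the paper: the same hard distribution, Lemma~\ref{lem:op-norm-bound}, the Schatten/Jensen step, Corollary~\ref{cor:ind-moment} for $\tT$, and the Fourier computation giving $\|(\id-P)K\|_\infty\le 1/D$. The gap is in the step you flagged. A hybrid that rotates one shift at a time, localized to the coordinates $\{x,x+s\}$, does not reduce to the form \eqref{eq-mixing-approximation} when $r\ge 2$.

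Suppose you fix everything except $g_0=\bg_1(x)$, $g_1=\bg_1(x+s)$, $h_0=\bg_2(x)$, $h_1=\bg_2(x+s)$. Then for every other shift $s'\notin\{0,s\}$, the array $\bg_1\shift_{s'}(\bg_2)$ has entry $g_0\,\bg_2(x+s')$ at $x$ and entry $\bg_1(x+s')\,h_0$ at $x+s'$, and similarly for $g_1,h_1$. So its factor depends on the four free variables individually, not through $(g_0h_0,g_1h_1)$ or $(g_0h_1,g_1h_0)$, and it belongs to neither $Q$ nor $R$. Moreover, rotating an array by $\zeta^\phi$ multiplies all $2^r$ of its coordinates, but only two of them carry free variables. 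So the rotation is not the same as rotating the arguments of a fixed $Q$. Your localization works only for $r=1$, where $S=\{0,s\}$.

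The missing idea is a global change of variables along the cosets of a hyperplane. This forces each hybrid step to rotate a whole hyperplane of shifts rather than a single shift. For nonzero $t$ and $\ell(s)=t\cdot s$, the paper writes $\bg_1(x)=\tbg_1(x)g_{\ell(x)}$ and $\bg_2(x)=h_{\ell(x)}\tbg_2(x)$. Since $\ell(x+s)=\ell(x)+\ell(s)$, every coordinate of $\bg_1\shift_s(\bg_2)$ equals $\tbg_1(x)g_{\ell(x)}h_{\ell(x)+\ell(s)}\tbg_2(x+s)$. Hence all shifts with $\ell(s)=0$ depend only on $(g_0h_0,g_1h_1)$, and all others only on $(g_0h_1,g_1h_0)$. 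By centrality of $\zeta$, rotating the first pair rotates every coordinate of every array with $\ell(s)=0$.

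So step $t$ adds a fresh uniform phase $\phi'_t$ to all $s$ with $s\cdot t=0$. After all $t$, the accumulated phase vector is $M\bm{\phi}'$, where $M_{s,t}=1$ if $s\cdot t=0$ and $0$ otherwise. To land exactly on $\tT$, you need $M$ invertible over $\bbF_3$, so that $M\bm{\phi}'$ is uniform. The paper proves this in Lemma~\ref{lem:Misinvertible}, via $M=-(\mathsf{J}+\sfH^{\otimes r})$ and the matrix determinant lemma; this ingredient is absent from your plan.

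A smaller point: to get product-form $Q,R$ you must expand $\tr(O_{s_1}\cdots O_{s_p})$ into matrix entries indexed by $\bi\in[2^C]^p$, because a trace of interleaved operators is not a product over shifts. The factor $2^{Cp}$ then comes from counting these $d^p$ terms, each a product of scalars bounded by $1$, not from sup norms of traces.
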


Theorem \ref{thm:intermediatequantumlowerbound} readily implies Theorem \ref{thm:quantumlowerbound}.
\begin{proof}[Proof of Theorem \ref{thm:quantumlowerbound} assuming Theorem \ref{thm:intermediatequantumlowerbound}]
Suppose there is a one-way quantum protocol for the shifted equality problem $\ShiftEq_{G,r}$ with $C$ qubits of communication under the distribution~$\mu$ over inputs, and with success probability at least~$2/3$. By \Cref{thm:intermediatequantumlowerbound}, it suffices to make both terms in its upper bound on the bias sufficiently small. Take~$p = \kappa 2^r$, even, for a sufficiently small constant~$\kappa > 0$ and~$r$ sufficiently large so that~$p \ge 2$.

Suppose first that~$C \le \kappa 2^r / 2$. Since~$p=\kappa 2^r$, we have
\[
2^{C/p} \sqrt{\frac{p}{2^r}} \le \sqrt{2\kappa}.
\]
Thus the first term can be made arbitrarily small by choosing~$\kappa$ sufficiently small.

Now let~$\kappa'>0$ be a sufficiently small constant and suppose~$C \le \kappa'(\log D)/2^r$. Using~$2^r-1\le 2^r$ and~$p=\kappa 2^r$, we have
\[
2^C\left(\frac{2^r-1}{D}\right)^{1/p} \le 2^{r/(\kappa 2^r)}2^{C-(\log D)/(\kappa 2^r)}.
\]
For~$C\ge 1$, the assumption~$C \le \kappa'(\log D)/2^r$ implies~$(\log D)/(\kappa 2^r)\ge C/(\kappa\kappa')$, and hence
\[
2^C\left(\frac{2^r-1}{D}\right)^{1/p} \le 2^{r/(\kappa 2^r)}2^{-(1/(\kappa\kappa')-1)C} \le 2^{r/(\kappa 2^r)}2^{-(1/(\kappa\kappa')-1)}.
\]
The first factor tends to~$1$ as~$r$ grows, while the second can be made arbitrarily small by choosing~$\kappa'$ sufficiently small. The case~$C=0$ is trivial.

Therefore, choosing~$\kappa$ and then~$\kappa'$ sufficiently small, with~$\kappa'\le \kappa/2$, \Cref{thm:intermediatequantumlowerbound} implies that whenever
\[
C \le \kappa' \min \{2^r,(\log D)/2^r\},
\]
the success probability of any one-way quantum protocol for~$\ShiftEq_{G,r}$ with~$C$ qubits of communication under~$\mu$ is at most~$6/10$, for sufficiently large~$r$. This contradicts the existence of such a protocol with success probability at least~$2/3$.
\end{proof}

 \subsubsection{An analytic theorem} \label{subsubsec:preliminaryresult}

The main result of this subsection will be the following estimate, which will play a key role in the proof of Theorem \ref{thm:intermediatequantumlowerbound}. 
\begin{theorem}\label{thm:ind-corr}
    Let $\{w_s\}_{s \in S}$ be a collection of functions $G^S \to \mathbb{C}$ with $\|w_s\|_\infty \leq 1$. For $\bg_1, \bg_2 \in G^S$, write:
    \[\Corr(\bg_1, \bg_2) = \prod_{s \in S} w_s(\bg_1\shift_s(\bg_2)).\]
    Also, for a phase vector $\bm{\phi} = (\phi_s)_{s \in S} \in \mathbb{F}_3^S$, write:
    \[\Corr_{\mathrm{ind}}(\bg_1, \bg_2, \bm{\phi}) = \prod_{s \in S} w_s(\zeta^{\phi_s}\bg_1\shift_s(\bg_2)).\]
    Then, if $\bg_1, \bg_2, \bm{\phi}$ are taken to be independent and uniformly random over their respective domains:
    \[\left|\bbE_{\bg_1,\bg_2}[\Corr(\bg_1,\bg_2)] - \bbE_{\bg_1,\bg_2,\bm{\phi}}[\Corr_{\mathrm{ind}}(\bg_1,\bg_2, \bm{\phi})]\right| \leq \frac{2^r-1}{D}.\]
\end{theorem}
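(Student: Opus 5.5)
The plan is a hybrid argument that changes the phase of one shift at a time, with each step costing at most $1/D$. The shift $s=0$ needs no step, which is where the count $2^r-1$ comes from. To see this, substitute $\bg_1\mapsto\zeta^{-\phi_0}\bg_1$ in $\Corr_{\mathrm{ind}}$. Since $\zeta$ is central, this multiplies every argument $\zeta^{\phi_s}\bg_1\shift_s(\bg_2)$ by $\zeta^{-\phi_0}$, and it preserves the uniform distribution of $\bg_1$. Hence $\bbE[\Corr_{\mathrm{ind}}]$ equals the same expectation with $\phi_0=0$ and the remaining $(\phi_s)_{s\neq0}$ still independent and uniform. Enumerate $S\setminus\{0\}=\{s_1,\dots,s_{2^r-1}\}$ and let $H_j$ be the expectation in which the shifts $s_1,\dots,s_j$ carry independent uniform phases and all other shifts (including $0$) carry phase $0$. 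Then $H_0=\bbE[\Corr]$ and $H_{2^r-1}=\bbE[\Corr_{\mathrm{ind}}]$, so by the triangle inequality it suffices to prove $|H_j-H_{j-1}|\le 1/D$ for each $j$.

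For a single step, fix $s=s_j\neq0$ and all phases other than $\phi_s$; the claim must hold uniformly over these fixed values. I would exhibit a hidden two-coordinate structure as follows. Choose a set $A\subset S$ of representatives of the cosets of $\{0,s\}$, so that $S$ is partitioned into pairs $\{a,a+s\}$. I would then rerandomize using the symmetries of the uniform measure on $(\bg_1,\bg_2)$: right-multiply $\bg_1$ and left-multiply $\bg_2$ by independent uniform group elements, coordinated on the pairs $\{a,a+s\}$. The aim is to isolate, for the shift-$0$ term and the shift-$s$ term, two uniform "left" elements $g_0,g_1$ and two uniform "right" elements $h_0,h_1$. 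The shift-$0$ factor would then see entries $(g_0h_0,g_1h_1)$ and the shift-$s$ factor would see $(g_0h_1,g_1h_0)$. After conditioning, the remaining factors would be absorbed into bounded functions, giving $Q,R\colon G^2\to\complex$. Under this reduction, $H_{j-1}-H_j$ becomes (after the same global phase substitution as above, which moves the phase from the shift-$s$ factor onto the shift-$0$ factor) the difference in \eqref{eq-mixing-approximation}, averaged over the conditioned variables. Since each $w_t$ satisfies $\|w_t\|_\infty\le1$, the normalized norms $\|\ket{Q}\|_2$ and $\|\ket{R}\|_2$ are at most $1$, and then Cauchy--Schwarz over the conditioned variables finishes the step.

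The two-variable estimate itself I would prove by writing the difference as $\braket{Q|(\id-P)K|R}$, up to normalization. Here $K$ is the averaging operator sending $(g_0h_0,g_1h_1)$ to the distribution of $(g_0h_1,g_1h_0)$, and $P=\frac13\sum_{\phi\in\bbF_3}(\text{multiplication by }(\zeta^\phi,\zeta^\phi))$. Both operators commute with the relevant translations, so I would diagonalize them with $\qft_{G}\otimes\qft_{G}$ using \Cref{prop-irrep-orthogonality} and \Cref{prop-schur}.

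The Fourier computation goes as follows. Write $u=g_0h_0$ and $v=g_1h_1$. Then $(g_0h_1,g_1h_0)=(u\,h_0^{-1}h_1,\;v\,h_1^{-1}h_0)$, so $K$ is a convolution-type averaging over $k=h_0^{-1}h_1$ together with a uniform conjugating twist. On the $\rho\otimes\rho'$ block, $K$ should act as $\frac1{d_\rho}$ times a swap-type permutation, and it vanishes unless $\rho'$ is dual to $\rho$. The operator $\id-P$ kills exactly the blocks where $\rho(\zeta)\rho'(\zeta)=\id$. Because $\zeta$ is central, $\rho(\zeta)$ is a cube root of unity times the identity, so the surviving blocks all have $\rho(\zeta)\neq\id$, and on them $\|(\id-P)K\|_\infty\le\max 1/d_\rho\le 1/D$.

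I expect the main obstacle to be the rerandomization step: arranging the auxiliary uniform elements so that exactly two factors depend on the four group elements in the crossed pattern $(g_0h_0,g_1h_1)$ and $(g_0h_1,g_1h_0)$, while every other factor stays independent of them or is absorbed into $Q$ or $R$. My first attempt would use the pair structure $\{a,a+s\}$ applied to all coordinates simultaneously, treating the whole arrays restricted to $A$ and to $A+s$ as the group elements in $G^A$ (rather than in $G$). I would then run the Fourier argument over $G^A$, whose irreps are tensor products of irreps of $G$. An irrep of $G^A$ on which the diagonal copy of $\zeta$ acts nontrivially must contain a tensor factor with $\rho(\zeta)\neq\id$, so its dimension is still at least $D$, and the same $1/D$ bound should go through.
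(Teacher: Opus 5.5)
\textbf{The gap is in the single hybrid step}, where you expected the obstacle, and the proposed fix does not resolve it.

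Build $\bu,\bv$ from $g_0,g_1,h_0,h_1$ on the pairs $\{a,a+s\}$, whether these are single elements of $G$ or arrays in $G^A$. The shift-$0$ and shift-$s$ factors do then see $(g_0h_0,g_1h_1)$ and $(g_0h_1,g_1h_0)$. But for $t\notin\{0,s\}$ the point $x+t$ lies in a different pair than $x$. So the argument of $w_t$ mixes the auxiliary variables across both patterns: in the $G^A$ version it contains entries like $g_0(a)h_1(a')$ with $a\neq a'$, and in the single-element version it contains $g_0h_0$ at some coordinates and $g_0h_1$ at others. Such a factor is a function of neither crossed pair. It also cannot be conditioned away, since the auxiliary variables are exactly what you average over. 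Passing to $G^A$ does not change this.

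The obstruction is structural. Suppose $\bu(x)=g_{\ell(x)}$ and $\bv(x)=h_{\ell(x)}$ for a two-colouring $\ell$ of $S$. Then shift $t$ sees a single pattern only if $\ell(x+t)-\ell(x)$ is independent of $x$. Requiring this for every $t$ forces $\ell$ to be affine, so the ``uncrossed'' shifts form an index-$2$ subgroup. Hence for $r\ge 2$ no step of the form in \Cref{lem:G2lemma} can rotate a single shift; only a whole class $\{t:\ell(t)=0\}$, or its complement, can be rotated at once. Relatedly, your global substitution moves the phase of shift $s$ onto all $2^r-1$ other shifts, not onto shift $0$ alone.

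\textbf{The missing idea}, which is what the paper does, is to rotate an entire hyperplane of shifts at each step. Enumerate $S=\{t_1=0,t_2,\dots,t_{2^r}\}$. At step $k+1$, add a fresh uniform phase to every $s$ with $\ell(s)=t_{k+1}\cdot s=0$, and use single elements $g_0,g_1,h_0,h_1\in G$ with $\bu(x)=g_{\ell(x)}$ and $\bv(x)=h_{\ell(x)}$. Since $\ell(x+s)=\ell(x)+\ell(s)$, every shift with $\ell(s)=0$ sees $(g_0h_0,g_1h_1)$ and every shift with $\ell(s)=1$ sees $(g_0h_1,g_1h_0)$. The products over these two classes, with earlier phases absorbed into $\tw_s$, are then exactly the $Q,R$ of \Cref{lem:G2lemma}, and the new phase rotates both arguments of $Q$.

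The price is paid at the endpoint. The accumulated phase vector is $M\bm{\phi}'$ with $M_{s,t}=1$ iff $s\cdot t=0$. You need $M$ invertible over $\mathbb{F}_3$ (\Cref{lem:Misinvertible}, via $M=-(\mathsf{J}+\sfH^{\otimes r})$ and the matrix determinant lemma) to conclude that the phases end up independent and uniform. The first column is all-ones and is free by the global rotation, which is where the count $2^r-1$ comes from.

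\textbf{A smaller slip in your Fourier sketch:} $K$ is supported on blocks carrying the same irrep $\rho$ in both tensor slots, not $\rho$ and its dual. Had it been $\rho\otimes\rho^*$, then $\rho(\zeta)\rho^*(\zeta)=\id$ and $(\id-P)K$ would vanish identically. With $\rho'=\rho$, one has $\rho(\zeta)^2\neq\id$ iff $\rho(\zeta)\neq\id$, which gives the $1/D$ bound of \Cref{lem-mixing-bound}.
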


The proof of Theorem \ref{thm:ind-corr} will be obtained by iterating the following lemma, which we prove first.
\begin{lemma}
\label{lem:G2lemma}
    Let $Q,R: G^2 \to \mathbb{C}$ be functions with~$\norm{Q}_\infty, \norm{R}_\infty \le 1$. Let $g_0,g_1,h_0,h_1$ be uniformly random elements of $G$ and $\phi$ be a uniformly random element of $\mathbb{F}_3$. We have
\begin{equation}
\label{eq-mixing-bound}
\Big| \mathbb{E} \big[Q(g_0h_0, g_1h_1)R(g_0h_1, g_1h_0) \big] - \mathbb{E} \big[Q(\zeta^\phi g_0h_0, \zeta^\phi g_1h_1)R(g_0h_1, g_1h_0) \big] \Big| \leq \frac{1}{D} \enspace.
\end{equation}
\end{lemma}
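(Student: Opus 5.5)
The plan is to rewrite both expectations as bilinear forms in $Q$ and $R$ involving two averaging operators on $\complex[G^2]$, and then bound the relevant operator norm via the Fourier transform over $G^2$.

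\textbf{Step 1: the operator $K$.} Substitute $a=g_0h_0$ and $b=g_1h_1$. For fixed $(g_0,g_1)$ the pair $(a,b)$ is uniform on $G^2$. Conversely, given $(a,b)$, we have $h_0=g_0^{-1}a$ and $h_1=g_1^{-1}b$. Then $g_0h_1=u b$ and $g_1h_0=u^{-1}a$, where $u=g_0g_1^{-1}$ is uniform and independent of $(a,b)$. Hence the first expectation equals $\bbE_{a,b}[Q(a,b)(KR)(a,b)]$, where
\[(KR)(a,b)=\bbE_{u\in G}R(ub,u^{-1}a).\]

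\textbf{Step 2: the operator $P$.} The second expectation is obtained by replacing $Q(a,b)$ with $Q(\zeta^\phi a,\zeta^\phi b)$. Since $(a,b)\mapsto(\zeta^\phi a,\zeta^\phi b)$ is a bijection of $G^2$, this equals $\bbE_{a,b}[Q(a,b)(PKR)(a,b)]$, where
\[(PF)(a,b)=\bbE_{\phi\in\bbF_3}F(\zeta^{-\phi}a,\zeta^{-\phi}b).\]
The difference is therefore $|G|^{-2}\,\big|\langle \overline{Q},(\id-P)KR\rangle\big|$. This is at most $\|(\id-P)K\|_\infty$, because $\|Q\|_2,\|R\|_2\le |G|$ from the sup-norm hypothesis. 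It remains to prove $\|(\id-P)K\|_\infty\le 1/D$.

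\textbf{Step 3: Fourier analysis over $G^2$.} The irreps of $G^2$ are $\rho\tensor\sigma$ with $\rho,\sigma\in\hG$. Write $K=A\circ\mathrm{SW}$, where $\mathrm{SW}F(a,b)=F(b,a)$ and $A$ averages the translations $(a,b)\mapsto(ua,u^{-1}b)$.
\begin{itemize}
\item Under $\qft_{G^2}$, the swap $\mathrm{SW}$ maps block $(\rho,\sigma)$ unitarily onto block $(\sigma,\rho)$.
\item Translation is diagonal on Fourier blocks. So $A$ acts on block $(\rho,\sigma)$ by the operator $\bbE_u\rho(u)\tensor\sigma(u^{-1})$ on the pair of row indices, tensored with the identity on the column indices.
\item By Schur's lemma / Proposition~\ref{prop-irrep-orthogonality}, this average vanishes unless $\sigma\cong\rho$. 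When $\sigma=\rho$, its matrix entries are $\bbE_u\rho(u)_{ik}\,\rho(u^{-1})_{jl}=\updelta_{il}\updelta_{kj}/d_\rho$. This is $\frac1{d_\rho}$ times the swap of the two tensor factors, a permutation operator of norm $1/d_\rho$.
\end{itemize}
Hence $\hK=\bigoplus_\rho \frac1{d_\rho}\Pi_\rho$, supported only on the diagonal blocks $(\rho,\rho)$.

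\textbf{Step 4: the factor $\id-P$.} Since $\zeta$ is central, Schur's lemma gives $\rho(\zeta)=\omega_\rho\id$ for a cube root of unity $\omega_\rho$. On block $(\rho,\rho)$ the map $(a,b)\mapsto(\zeta^{-\phi}a,\zeta^{-\phi}b)$ acts as the scalar $\omega_\rho^{-2\phi}$. Averaging over $\phi\in\bbF_3$, $\hP$ is the identity on that block if $\omega_\rho^2=1$, i.e.\ if $\rho(\zeta)=\id$ (using that $\zeta$ has order $3$), and is $0$ otherwise. Thus
\[(\id-\hP)\hK=\bigoplus_{\rho(\zeta)\ne\id}\frac1{d_\rho}\Pi_\rho,\]
whose norm is $\max_{\rho(\zeta)\neq\id}1/d_\rho=1/D$. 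Unitarity of $\qft$ transfers this bound back to $\|(\id-P)K\|_\infty$.

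\textbf{Main obstacle.} The delicate part is the index bookkeeping in Step 3. One must check exactly which blocks the composition swap-then-average lands in, confirm that it is the partial transpose of the invariant projector, and hence of norm $1/d_\rho$ rather than rank-one of norm $1$. I would verify this by computing $\qft\,K\,\qft^{-1}$ directly on basis vectors $\ket{\rho,i,j}\tensor\ket{\sigma,k,l}$ using Proposition~\ref{prop-irrep-orthogonality}.
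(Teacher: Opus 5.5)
Your proposal is correct and follows essentially the same route as the paper. Your operator $K$, written via $u=g_0g_1^{-1}$, is identical to the paper's $K$, written via $z=h_0^{-1}h_1$ (the two are related by $z=a^{-1}ub$), and the paper likewise bounds the difference by $\|(\id-P)K\|_\infty$ via Cauchy--Schwarz, showing $(\id-\hP)\hK=\bigoplus_{\rho(\zeta)\neq\id}\frac{1}{d_\rho}\Pi_\rho$ in the Fourier basis; the only cosmetic difference is that you obtain $\hK$ by factoring $K$ as a swap followed by a translation average, whereas the paper computes its matrix entries directly from the orthogonality relations.
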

In order to prove Lemma \ref{lem:G2lemma}, we view functions~$Q : G^2 \to \complex$ as vectors in~$\complex^G \tensor \complex^G$, and represent them, in scaled form, in the bra-ket notation as possibly unnormalized states:
\[
\ket{Q} = \frac{1}{|G|} \sum_{g_1, g_2 \in G} Q(g_1, g_2) \ket{g_1 g_2}.
\]
The standard inner product of two such vectors~$Q,R$ then evaluates to an expectation over uniformly random elements as
\[
\langle Q | R \rangle = \frac{1}{|G|^2} \sum_{g_1, g_2 \in G} \overline{Q(g_1, g_2)} R(g_1, g_2),
\]
and induces the~$\ell_2$ norm
\[
\norm{ \ket{Q} }_2^2 = \frac{1}{|G|^2} \sum_{g_1, g_2 \in G} \abs{ Q(g_1, g_2)}^2.
\]

The expression in \cref{eq-mixing-bound} which \Cref{lem:G2lemma} bounds may be written in terms of a quadratic form involving the two operators~$K,P$ that we define below. The bound may then be derived by estimating the operator norm of~$(\id - P)K$. It turns out that the operator~$(\id - P)K$ has a simple form when viewed in the Fourier basis. Its operator norm may then be inferred readily. We develop these ideas next.

Recall from \Cref{sec-rep-theory} that~$\hG$ denotes a complete set of nonisomorphic unitary irreps of~$G$, and~$d_\rho$ denotes the dimension of a representation~$\rho$. Recall that~$D$ is the least integer~$d$ such that there is an irrep~$\rho$ of $G$ of dimension~$d$ in which $\zeta$ is not trivial:
\begin{equation}
\label{eq-min-irrep-dim}
D = \min \left\{ d_\rho : \rho \in \hG, \; \rho(\zeta) \neq \id \right\}.
\end{equation}
Also recall that the Fourier Transform~$F_G$ gives an isomorphism between~$\complex^G$ and~$V(\hG)$, and is unitary. We omit the subscript~$G$ from~$\qft_G$ for ease of notation.

For a function~$Q : G^2 \to \complex$, let~$\ket{\hQ} = (\qft \tensor \qft) \ket{Q}$ denote its Fourier transform. By unitarity of~$\qft$, we get the Parseval Identity
\[
\norm{ \ket{\hQ}}_2 = \norm{ \ket{Q}}_2 
\]
in the~$\ell_2$ norm induced by the standard inner product in the respective spaces. Similarly, for a linear operator~$T$ on~$\complex^G \tensor \complex^G$, let~$\hT = (F \tensor F) T (F^\adjoint \tensor F^\adjoint)$ be an operator on~$V(\hG) \tensor V(\hG)$. We have
\begin{align}
\nonumber
\norm{\hT}_\infty & = \max \left\{ \norm{\hT \ket{\hQ}}_2 : \ket{\hQ} \in V(\hG) \tensor V(\hG), ~ \norm{ \ket{\hQ}}_2 = 1 \right\} \\
\nonumber
    & = \max \left\{ \norm{T \ket{Q}}_2 : Q \in \complex^G \tensor \complex^G, ~ \norm{ \ket{Q}}_2 = 1 \right\} \\
\label{eq-operator-parseval}
    & = \norm{T}_\infty.
\end{align}

We define two linear operators~$K,P$ on~$\complex^G \tensor \complex^G$ mentioned above. Let~$K$ be defined by
\[
K = \frac{1}{|G|} \sum_{u,v,w \in G} \ket{u,v} \! \bra{uw, v w^{-1}},
\]
and~$P$ be defined by
\[
P = \frac{1}{3} \sum_{u,v \in G} \sum_{j \in \finitefield_3} \ket{u,v} \! \bra{ \zeta^j u, \zeta^j v}.
\]
Note that~$K$ and~$P$ are both Hermitian. We have
\begin{align}
\nonumber
K^\adjoint & = \frac{1}{|G|} \sum_{u,v,w \in G} \ket{uw, v w^{-1}} \!\! \bra{u,v} \\
\label{eq-k-hermitian}
  & = \frac{1}{|G|} \sum_{u',v',w \in G} \ket{u', v'} \!\! \bra{u' w^{-1}, v' w} = K,
\end{align}
and similarly, $P^\adjoint = P$. We derive the following bound on operator norm in anticipation of \Cref{lem:G2lemma}.

\begin{lemma}
\label{lem-mixing-bound}
$\norm{(\id - P)  K}_\infty \le \tfrac{1}{D} \,$.
\end{lemma}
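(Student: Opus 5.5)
The plan is to conjugate by $\qft \tensor \qft$ and compute $\hK$ and $\hP$ explicitly. By \eqref{eq-operator-parseval}, $\norm{(\id-P)K}_\infty = \norm{(\id-\hP)\hK}_\infty$. I will show that both operators are block diagonal with respect to the decomposition $V(\hG)\tensor V(\hG) = \bigoplus_{\rho,\sigma\in\hG} (\complex^{d_\rho}\tensor\complex^{d_\rho})\tensor(\complex^{d_\sigma}\tensor\complex^{d_\sigma})$. On each block, $\hP$ should be $0$ or $\id$ and $\hK$ should be $0$ or a scaled unitary.

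\emph{Step 1: the operator $P$.} Since $\zeta\in Z(G)$, the Schur Lemma (\Cref{prop-schur}(b)) gives $\rho(\zeta)=\omega_\rho \id$ for every $\rho\in\hG$. Here $\omega_\rho$ is a cube root of unity, and $\omega_\rho=1$ exactly when $\rho(\zeta)=\id$. From the definition of $\qft$, the map $\ket{u}\mapsto\ket{\zeta^j u}$ acts on $\ket{\rho,i,k}$ as multiplication by $\omega_\rho^j$. Hence $\hP$ acts on the $(\rho,\sigma)$ block as the scalar $\frac13\sum_{j\in\finitefield_3}(\omega_\rho\omega_\sigma)^{j}$. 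This equals $1$ if $\omega_\rho\omega_\sigma=1$ and $0$ otherwise, so $\hP$ is the projector onto the blocks with $\omega_\rho\omega_\sigma = 1$.

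\emph{Step 2: the operator $K$.} Write $(K f)(u,v)=\frac1{|G|}\sum_w f(uw,vw^{-1})$ and use $\rho(uw)=\rho(u)\rho(w)$. Right multiplication by $w$ in the first factor then acts only on the column index of $\ket{\rho,i,j}$, through the entries of $\rho(w)$. Right multiplication by $w^{-1}$ in the second factor acts on the column index of $\ket{\sigma,k,l}$ through $\sigma(w^{-1})=\sigma(w)^\adjoint$, i.e.\ through entries $\overline{\sigma(w)(l',l)}$. Averaging over $w$ with \Cref{prop-irrep-orthogonality} then gives two facts:
\begin{itemize}
\item[(i)] the $(\rho,\sigma)$ block of $\hK$ vanishes unless $\rho=\sigma$;
\item[(ii)] on the $(\rho,\rho)$ block, $\hK$ equals $\frac1{d_\rho}$ times the identity on the two row indices tensored with the swap of the two column indices.
\end{itemize}
The swap is a permutation operator $\Pi_\rho$. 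I expect this bookkeeping to be the main obstacle. One has to get the index conventions right, and in particular check that the $\frac{\sqrt{d_\rho d_\sigma}}{|G|}$ normalizations combine to give exactly $\frac1{d_\rho}$ times a permutation, rather than something that is merely bounded. I would first verify the computation on the matrix entries $\bra{\rho,i,j;\sigma,k,l}\hK\ket{\rho',i',j';\sigma',k',l'}$.

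\emph{Step 3: combine.} Both operators are block diagonal and scalar or unitary on blocks, so they commute. On a block $(\rho,\rho)$, the factor $\id-\hP$ is nonzero only if $\omega_\rho^2\neq1$. Since $\omega_\rho^3=1$, the condition $\omega_\rho^2=1$ forces $\omega_\rho=1$, so $\id-\hP$ is nonzero on that block exactly when $\rho(\zeta)\neq\id$. Therefore
\[(\id-\hP)\hK=\bigoplus_{\rho:\,\rho(\zeta)\neq\id}\frac1{d_\rho}\Pi_\rho,\]
and its operator norm is $\max_{\rho(\zeta)\neq \id} 1/d_\rho = 1/D$ by \eqref{eq-min-irrep-dim}. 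This gives the claimed bound. If no such $\rho$ exists the operator is $0$, but a faithful representation exists, so $D$ is well defined anyway.
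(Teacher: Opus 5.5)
Your proposal is correct and follows essentially the same route as the paper: conjugate by $\qft\tensor\qft$, use Schur's Lemma to see that $\hP$ is the projector onto the blocks $(\rho,\sigma)$ with $\omega_\rho\omega_\sigma=1$, and use the orthogonality relations to see that $\hK$ lives on the diagonal blocks $(\rho,\rho)$, where it is $\frac{1}{d_\rho}$ times the swap of the two column indices. This yields $(\id-\hP)\hK=\bigoplus_{\rho(\zeta)\neq\id}\frac{1}{d_\rho}\Pi_\rho$. The normalization you flagged does work out: averaging over $w$ gives exactly $\hK\ket{\rho,i,l}\ket{\rho,k,m}=\frac{1}{d_\rho}\ket{\rho,i,m}\ket{\rho,k,l}$, which matches the paper's entrywise computation.
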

\begin{proof}
Let~$\hP = (\qft \tensor \qft) P (\qft^\adjoint \tensor \qft^\adjoint)$ and~$\hK = (\qft \tensor \qft) K (\qft^\adjoint \tensor \qft^\adjoint) $. By \cref{eq-operator-parseval}, we need only bound~$\norm{(\id - \hP)  \hK}_\infty$.

Since~$\zeta \in Z(G)$, for any irrep~$\rho$, $\rho(\zeta)$ commutes with~$\rho(g)$ for all~$g \in G$. As~$\zeta$ has order~$3$, the Schur Lemma (\Cref{prop-schur}) gives~$\rho(\zeta) = \upomega^{s(\rho)} \id$ for some~$s(\rho) \in \finitefield_3$, where~$\upomega = \exp(2\pi \complexi / 3)$.

By the definition of the Fourier transform~$\qft$, the entries of~$\hP$ and~$\hK$ are indexed by quadruples of irreps of~$G$ and their row and column indices. For irreps~$\rho_k$ and indices~$i_k, j_k \in [d_{\rho_k}]$, where~$k \in [4]$, by the orthogonality relations between irreps (\Cref{prop-irrep-orthogonality}) we have
\begin{align*}
\lefteqn{ \bra{\rho_1, i_1, j_1} \bra{\rho_2, i_2, j_2} \hP \ket{\rho_3, i_3, j_3} \ket{\rho_4, i_4, j_4} } \\
  & = \frac{1}{3 |G|^2} \sum_{u,v \in G; j \in \finitefield_3} \sqrt{d_{\rho_1} d_{\rho_2}} \, \rho_1(u)(i_1, j_1) \, \rho_2(v)(i_2, j_2) \\
  & \qquad \mbox{} \times  \sqrt{d_{\rho_3} d_{\rho_4}} \, (\rho_3( \zeta^j u)(i_3, j_3))^* (\rho_4(\zeta^j v)(i_4, j_4))^* \\
  & = \sum_{l_3 \in [d_{\rho_3}]} \left[ \frac{1}{|G|} \sum_{u \in G} \sqrt{d_{\rho_1} d_{\rho_3}} \, \rho_1(u)(i_1, j_1) \, (\rho_3(u)(l_3, j_3))^* \right] \\
  & \qquad \mbox{} \times \sum_{l_4 \in [d_{\rho_4}]} \left[ \frac{1}{|G|} \sum_{v \in G} \sqrt{d_{\rho_2} d_{\rho_4}} \, \rho_2(v)(i_2, j_2) (\rho_4(v)(l_4, j_4))^* \right] \\
  & \qquad \mbox{} \times \left[ \frac{1}{3}  \sum_{j \in \finitefield_3} (\rho_3(\zeta^j)(i_3, l_3))^* (\rho_4(\zeta^j)(i_4, l_4))^* \right] \\
  & = \sum_{l_3 \in [d_{\rho_3}]} \sum_{l_4 \in [d_{\rho_4}]} \updelta_{\rho_1 \rho_3} \updelta_{i_1 l_3} \updelta_{j_1 j_3} \times \updelta_{\rho_2 \rho_4} \updelta_{i_2 l_4} \updelta_{j_2 j_4} \times \frac{1}{3}  \sum_{j \in \finitefield_3} \upomega^{-j (s(\rho_3) + s(\rho_4))} \updelta_{i_3 l_3} \updelta_{i_4 l_4} \enspace,
\end{align*}
using the form of the operators~$\rho(\zeta)$ for irreps~$\rho$ stated above.
Simplifying this expression, we get
\[
\hP = \sum_{\rho_1, \rho_2 \,:\, s(\rho_1) + s(\rho_2) = 0 \! \! \pmod{3}} \density{\rho_1} \tensor \id_{d_{\rho_1}}^{\tensor 2} \tensor \density{\rho_2} \tensor \id_{d_{\rho_2}}^{\tensor 2} \enspace.
\]

We similarly derive an explicit expression for~$\hK$. For irreps~$\rho_k$ and indices~$i_k, j_k \in [d_{\rho_k}]$, where~$k \in [4]$, by the orthogonality relations of irreps we have
\begin{align*}
\lefteqn{ \bra{\rho_1, i_1, j_1} \bra{\rho_2, i_2, j_2} \hK \ket{\rho_3, i_3, j_3} \ket{\rho_4, i_4, j_4} } \\
  & = \frac{1}{|G|^3} \sum_{u,v,w \in G} \sqrt{d_{\rho_1} d_{\rho_2}} \, \rho_1(u)(i_1, j_1) \, \rho_2(v)(i_2, j_2) \\
  & \qquad \mbox{} \times  \sqrt{d_{\rho_3} d_{\rho_4}} \, (\rho_3(uw)(i_3, j_3))^* (\rho_4(v w^{-1})(i_4, j_4))^* \\
  & = \sum_{l_3 \in [d_{\rho_3}]} \left[ \frac{1}{|G|} \sum_{u \in G} \sqrt{d_{\rho_1} d_{\rho_3}} \, \rho_1(u)(i_1, j_1) \, (\rho_3(u)(i_3, l_3))^* \right] \\
  & \qquad \mbox{} \times \sum_{l_4 \in [d_{\rho_4}]} \left[ \frac{1}{|G|} \sum_{v \in G} \sqrt{d_{\rho_2} d_{\rho_4}} \, \rho_2(v)(i_2, j_2) (\rho_4(v)(i_4, l_4))^* \right] \\
  & \qquad \mbox{} \times \left[ \frac{1}{|G|}  \sum_{w \in G} (\rho_3(w)(l_3, j_3))^* (\rho_4(w^{-1})(l_4, j_4))^* \right] \\
  & = \sum_{l_3 \in [d_{\rho_3}]} \sum_{l_4 \in [d_{\rho_4}]} \updelta_{\rho_1 \rho_3} \updelta_{i_1 i_3} \updelta_{j_1 l_3} \times \updelta_{\rho_2 \rho_4} \updelta_{i_2 i_4} \updelta_{j_2 l_4} \times \frac{1}{d_{\rho_3}} \updelta_{\rho_3 \rho_4} \updelta_{l_3 j_4} \updelta_{j_3 l_4} \enspace,
\end{align*}
as~$\rho_4( w^{-1})( l_4, j_4) = (\rho_4( w))^\adjoint ( l_4, j_4) = (\rho_4( w)( j_4, l_4))^*$.
Simplifying this expression, we get
\[
\hK \quad = \quad \sum_{\rho \in \hG} ~ \sum_{j_1, j_2 \in [d_{\rho}]} \frac{1}{d_{\rho}} \density{\rho} \tensor \id_{d_{\rho}} \tensor \ketbra{j_1}{j_2} \tensor \density{\rho} \tensor \id_{d_{\rho}} \tensor \ketbra{j_2}{j_1}.
\]

The expressions for~$\hP$ and~$\hK$ give us
\begin{align*}
(\id - \hP) \hK & \quad = \quad \hK - \sum_{\rho \,:\, 2 s(\rho) = 0} \frac{1}{d_{\rho}}  \sum_{j_1, j_2 \in [d_{\rho}]} \density{\rho} \tensor \id_{d_{\rho}} \tensor \ketbra{j_1}{j_2} \tensor \density{\rho} \tensor \id_{d_{\rho}} \tensor \ketbra{j_2}{j_1} \\
  & \quad = \quad \sum_{\rho \,:\, \rho(\zeta) \neq \id} \frac{1}{d_{\rho}} \sum_{j_1, j_2 \in [d_{\rho}]} \density{\rho} \tensor \id_{d_{\rho}} \tensor \ketbra{j_1}{j_2} \tensor \density{\rho} \tensor \id_{d_{\rho}} \tensor \ketbra{j_2}{j_1}.
\end{align*}
This operator has a block-diagonal form with blocks indexed by $\rho$, where the $\rho$-block is $\frac{1}{d_\rho}$ times a permutation operator. Thus
\[ \left\|(I-\widehat P)\widehat K\right\|_\infty = \max_{\rho:\rho(\zeta)\neq I}\frac{1}{d_\rho} = \frac{1}{D}, \]
as claimed.
\end{proof}

This lets us complete the proof of Lemma \ref{lem:G2lemma}. 
\begin{proof}[Proof of Lemma \ref{lem:G2lemma}]
We express the left hand side of \cref{eq-mixing-bound} as a quadratic form involving the operators~$K,P$. As a result it suffices to bound the norm of an expression in these operators to conclude \cref{eq-mixing-bound}.

Let~$x = g_0 h_0$, $y = g_1 h_1$, and~$z = h_0^{-1} h_1$. Then~$x,y,z$ are independent and uniformly distributed over~$G$. Defining~$\overline{Q}$ as the point-wise complex conjugate of~$Q$, we see that
\begin{align*}
\expct \big[ Q( g_0h_0, g_1h_1) R( g_0h_1, g_1h_0) \big] = \expct \big[ Q( x, y) R( x z, y z^{-1} ) \big]
    = \left. \! \left\langle \overline{Q} \right| KR \right\rangle,
\end{align*}
and
\begin{align*}
\expct \big[ Q( \zeta^\phi g_0h_0, \zeta^\phi g_1h_1) R( g_0h_1, g_1h_0) \big] = \expct \big[ Q( \zeta^\phi x, \zeta^\phi y) R( x z, y z^{-1} ) \big]
    = \left. \! \left\langle P \overline{Q} \right| KR \right\rangle.
\end{align*}
So their difference may be bounded as
\begin{align*}
\lefteqn{ \Big| \expct \big[ Q( g_0h_0, g_1h_1) R( g_0h_1, g_1h_0) \big]  - \expct \big[ Q( \zeta^\phi g_0h_0, \zeta^\phi g_1h_1) R( g_0h_1, g_1h_0) \big] \Big| } \\
  & = \left. |\! \left\langle (\id - P) \overline{Q} \right| KR \right\rangle |\\
  & = \left. |\! \left\langle \overline{Q} \right| (\id - P)  KR \right\rangle | & (\text{as } P \text{ is Hermitian}) \\
  & \le \norm{\ket{\overline{Q}}}_2 \norm{(\id - P)  K}_\infty \norm{\ket{R}}_2 & (\text{by Cauchy-Schwarz}) \\
  & \le \frac{1}{D} \enspace, & (\text{\Cref{lem-mixing-bound}})
\end{align*}
as~$\norm{Q}_\infty, \norm{R}_\infty \le 1$ imply that the (scaled) $\ell_2$-norms~$\norm{\ket{\overline{Q}}}_2, \norm{\ket{R}}_2 \le 1$. This proves the lemma.
\end{proof}

With the lemma in hand, we complete the proof of Theorem \ref{thm:ind-corr}. We start with the following lemma:
\begin{lemma} \label{lem:Misinvertible}
For~$x,y \in S = \bbZ_2^r$, let~$x \cdot y = \sum_{i=1}^r x_i y_i \pmod{2}$.
    Let $M$ be the $2^r \times 2^r$ matrix over $\mathbb{F}_3$ whose rows and columns are indexed by $x,y \in S$, with entries
    \[M_{x,y} = \begin{cases}1 & \text{if } x \cdot y = 0,\\ 0 & \text{if } x \cdot y = 1.\end{cases}\]
    Then $M$ is invertible.
\end{lemma}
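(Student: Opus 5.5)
The plan is to write $M$ as a rank-one perturbation of a Hadamard-type matrix and then invoke Lemma~\ref{lem:conditionforinvertibility}. Let $W$ be the $2^r\times 2^r$ matrix over $\mathbb{F}_3$ with entries $W_{x,y} = (-1)^{x\cdot y}$. Let $J = \mathbf{1}\mathbf{1}^T$ denote the all-ones matrix. Since $M_{x,y} = (1+(-1)^{x\cdot y})/2$ over the integers, and $2$ is invertible in $\mathbb{F}_3$ with $2^{-1} = 2$, we get over $\mathbb{F}_3$
\[
M = 2(J + W) = 2W + 2\,\mathbf{1}\mathbf{1}^T .
\]
So it suffices to show that $W' := W + \mathbf{1}\mathbf{1}^T$ is invertible over $\mathbb{F}_3$.

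\textbf{Invertibility of $W$.} The character orthogonality $\sum_{y\in S}(-1)^{x\cdot y}(-1)^{y\cdot z} = 2^r\,\updelta_{x,z}$ holds over the integers, hence also over $\mathbb{F}_3$. This gives $W^2 = 2^r I$. Since $2^r \equiv (-1)^r \not\equiv 0 \pmod 3$, $W$ is invertible with $W^{-1} = 2^{-r}W$.

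\textbf{Applying Lemma~\ref{lem:conditionforinvertibility}.} Take $A = W$ and $u=v=\mathbf{1}$. We need $\mathbf{1}^T W^{-1}\mathbf{1} \neq -1$. Now $\mathbf{1}^T W \mathbf{1} = \sum_{x,y}(-1)^{x\cdot y} = \sum_x 2^r\updelta_{x,0} = 2^r$, so
\[
\mathbf{1}^T W^{-1}\mathbf{1} = 2^{-r}\cdot 2^r = 1 .
\]
Since $1\neq -1$ in $\mathbb{F}_3$, $W + \mathbf{1}\mathbf{1}^T$ is invertible, and therefore so is $M$.

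The only step needing care is the bookkeeping in characteristic $3$. One must check that $2$ and $2^r$ are units, and that the final quantity $1$ differs from $-1 = 2$. That last check is exactly where the field $\mathbb{F}_3$ matters: over $\mathbb{F}_2$ the comparison would fail. No step is expected to be a serious obstacle.
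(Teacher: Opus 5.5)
Your proof is correct and follows the paper's argument. Both write $M = 2(J+W) = -(J+\sfH^{\otimes r})$ over $\mathbb{F}_3$ and apply Lemma~\ref{lem:conditionforinvertibility} with $u=v=\mathbf{1}$ to get $\mathbf{1}^T W^{-1}\mathbf{1} = 1 \neq -1$; the only cosmetic difference is that you obtain $W^{-1}=2^{-r}W$ and $\mathbf{1}^T W\mathbf{1}=2^r$ from character orthogonality, whereas the paper uses $\sfH^{-1}=2^{-1}\sfH$ and $\ket{1^{2^r}}=\ket{1^2}^{\otimes r}$.
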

\begin{proof}
    Observe that over $\mathbb{F}_3$, we have~$M_{x,y} = -(1 + (-1)^{x \cdot y})$ for all~$x,y \in S$.
    Hence, we have~$M = -(\mathsf{J} + \sfH^{\otimes r})$,
    where $\mathsf{J}$ is the all-ones matrix and $\sfH = \begin{pmatrix}1&1\\1&-1\end{pmatrix}$ is the unnormalized Hadamard matrix. The determinant of~$\sfH$ is $-2$, which is nonzero in $\mathbb{F}_3$. So $\sfH$ and $\sfH^{\otimes r}$ are invertible over $\mathbb{F}_3$.

    For each $n \geq 1$, we write $\ket{1^n} = (1,\ldots,1)^T \in \mathbb{F}_3^n$ for the all-ones column vector and $\bra{1^n}$ for its transpose. Thus $\mathsf{J} = \density{1^{2^r}}$. Lemma \ref{lem:conditionforinvertibility} says that $-M$ (and hence $M$) is invertible if and only if $-1 \neq \bra{1^{2^r}}(\sfH^{\otimes r})^{-1}\ket{1^{2^r}}$. Since $\sfH^{-1} = 2^{-1}\sfH$, we have
    \[\bra{1^2}\sfH^{-1}\ket{1^2} = 2^{-1}(1+1+1-1) = 1.\]
    Using $\ket{1^{2^r}} = \ket{1^2}^{\otimes r}$, we therefore obtain
    \[\bra{1^{2^r}}(\sfH^{\otimes r})^{-1}\ket{1^{2^r}} = \bigl(\bra{1^2}\sfH^{-1}\ket{1^2}\bigr)^r = 1 \neq -1\]
    over $\mathbb{F}_3$, proving the claim.
\end{proof}

\begin{proof}[Proof of Theorem~\ref{thm:ind-corr}]
    We fix an enumeration $t_1,\ldots,t_{2^r}$ of $S$ with $t_1=0^r$. We choose $\bm{\phi}'=(\phi'_s)_{s\in S}\in\mathbb{F}_3^S$ uniformly at random and independently of $\bg_1,\bg_2$. For $k\in[2^r]$ and $s\in S$, we set
    \[b_{k,s}=\sum_{j=1}^k\phi'_{t_j}M_{s,t_j}\]
    over $\mathbb{F}_3$, where $M$ is the matrix from Lemma~\ref{lem:Misinvertible}. We define
    \[\Corr_k(\bg_1,\bg_2,\bm{\phi}')=\prod_{s\in S}w_s\bigl(\zeta^{b_{k,s}}\bg_1\shift_s(\bg_2)\bigr).\]
    Since $M_{s,t_1}=1$ for every $s$, we have $b_{1,s}=\phi'_{t_1}$. For each fixed value of~$\phi'_{t_1}$, multiplication of every coordinate of $\bg_1$ by $\zeta^{\phi'_{t_1}}$ preserves the uniform distribution of~$\bg_1$. Consequently,
    \[\mathbb{E}[\Corr_1(\bg_1,\bg_2,\bm{\phi}')]=\mathbb{E}[\Corr(\bg_1,\bg_2)].\]
    At the other endpoint, $b_{2^r,s}=(M\bm{\phi}')_s$. Since $M$ is invertible by the previous lemma, $M\bm{\phi}'$ is uniformly random in $\mathbb{F}_3^S$ and remains independent of $\bg_1,\bg_2$. Thus
    \[\mathbb{E}[\Corr_{2^r}(\bg_1,\bg_2,\bm{\phi}')]=\mathbb{E}[\Corr_{\mathrm{ind}}(\bg_1,\bg_2,\bm{\phi})].\]
    We will prove that, for every $1\leq k\leq 2^r-1$,
    \begin{equation}\label{eq:phase-increment}
    \left|\mathbb{E}[\Corr_{k+1}(\bg_1,\bg_2,\bm{\phi}')]-\mathbb{E}[\Corr_k(\bg_1,\bg_2,\bm{\phi}')]\right|\leq\frac1D.
    \end{equation}
    To compare these two quantities, we first condition on the ``phases''~$\phi'_{t_j}$ occurring in the sum~$b_{k,s}$. This makes their ``rotations'' (the product with~$\zeta^{b_{k,s}}$) part of fixed functions of the shifted products~$\bg_1 \shift_s(\bg_2)$. The additional phase~$\phi'_{t_{k+1}}$ at step $k+1$ can then be treated using Lemma~\ref{lem:G2lemma}.

    We fix $k$ and condition on $\phi'_{t_1},\ldots,\phi'_{t_k}$. These values determine $b_{k,s}$ for every $s$. For these fixed values, we define
    \[\tw_s(\bh)=w_s(\zeta^{b_{k,s}}\bh),\qquad \bh\in G^S.\]
    Each $\tw_s$ is bounded in absolute value by $1$. The next phase $\phi'_{t_{k+1}}$ is still uniform and independent of $\bg_1,\bg_2$. We define $\ell:S\to\mathbb{Z}/2\mathbb{Z}$ by $\ell(s)=t_{k+1}\cdot s$, with the dot product computed modulo $2$. This linear form is nonzero because $t_{k+1}\ne 0^r$. By the definition of $M$,
    \[b_{k+1,s}=\begin{cases}b_{k,s}+\phi'_{t_{k+1}}&\text{if }\ell(s)=0,\\ b_{k,s}&\text{if }\ell(s)=1.\end{cases}\]
    Hence the new phase~$\phi'_{t_{k+1}}$ rotates exactly the arguments of the functions~$\tw_s$ whose shifts~$s$ satisfy $\ell(s)=0$.

    We choose $g_0,g_1,h_0,h_1\in G$ uniformly and independently of each other and of all previously sampled random variables. For each coordinate $x\in S$, we define
    \[\bu(x)=g_{\ell(x)},\qquad \bv(x)=h_{\ell(x)},\]
    and set
    \[\tbg_1(x)=\bg_1(x)\bu(x)^{-1},\qquad \tbg_2(x)=\bv(x)^{-1}\bg_2(x).\]
    For every fixed choice of the four auxiliary group elements~$g_0, g_1, h_0, h_1$, this is a bijective change of variables from $(\bg_1,\bg_2)$ to $(\tbg_1,\tbg_2)$. The conditional distribution of the modified arrays~$(\tbg_1,\tbg_2)$ is therefore uniform on $G^S\times G^S$ and does not depend on~$g_0, g_1, h_0, h_1$. In particular, the modified arrays~$(\tbg_1,\tbg_2)$, the four auxiliary elements~$g_0, g_1, h_0, h_1$, and the next phase~$\phi'_{t_{k+1}}$ are mutually independent, even after conditioning on the earlier phases.

    We now also condition on $\tbg_1,\tbg_2$. Since $\ell(x+s)=\ell(x)+\ell(s)$, with addition modulo $2$, the original shifted product has coordinates
    \begin{equation}\label{eq:shifted-coordinate}
    \bigl(\bg_1\shift_s(\bg_2)\bigr)(x)=\tbg_1(x)g_{\ell(x)}h_{\ell(x)+\ell(s)}\tbg_2(x+s).
    \end{equation}
    For $\ell(s)=0$, the middle product is either $g_0h_0$ or $g_1h_1$. For $\ell(s)=1$, it is either $g_0h_1$ or $g_1h_0$. To collect the two sets of shifts explicitly, we define, for $a_0,a_1\in G$,
    \begin{align*}
    R_1(a_0,a_1)&=\prod_{\substack{s\in S \,:\\\ell(s)=0}}\tw_s\!\left(\bigl(\tbg_1(x)a_{\ell(x)}\tbg_2(x+s)\bigr)_{x\in S}\right),\\
    R_2(a_0,a_1)&=\prod_{\substack{s\in S \,:\\\ell(s)=1}}\tw_s\!\left(\bigl(\tbg_1(x)a_{\ell(x)}\tbg_2(x+s)\bigr)_{x\in S}\right).
    \end{align*}
    Each parenthesized tuple is an array in $G^S$. The functions~$R_1, R_2$ are fixed under the conditioning, and $\|R_1\|_\infty,\|R_2\|_\infty\leq1$, since every factor is bounded by $1$ in absolute value.

    Equation~\eqref{eq:shifted-coordinate} gives the identity
    \[\Corr_k(\bg_1,\bg_2,\bm{\phi}')=R_1(g_0h_0,g_1h_1)R_2(g_0h_1,g_1h_0).\]
    Moreover, $\zeta$ commutes with every element of $G$. Multiplying both arguments of $R_1$ by $\zeta^{\phi'_{t_{k+1}}}$ therefore multiplies every coordinate of every array appearing in $R_1$ by that same element. This is exactly the additional rotation for shifts with $\ell(s)=0$; the factors in $R_2$ are unchanged. Consequently,
    \[\Corr_{k+1}(\bg_1,\bg_2,\bm{\phi}')=R_1(\zeta^{\phi'_{t_{k+1}}}g_0h_0,\zeta^{\phi'_{t_{k+1}}}g_1h_1)R_2(g_0h_1,g_1h_0).\]
    Under the conditioning, the four auxiliary group elements and $\phi'_{t_{k+1}}$ remain independent and uniform. Lemma~\ref{lem:G2lemma} therefore bounds the absolute difference between the conditional expectations of these last two displayed quantities by $1/D$. This bound holds for every fixed choice of the earlier phases and modified arrays. Averaging over those choices and using the triangle inequality proves~\eqref{eq:phase-increment}.

    Finally, the two endpoint identities and~\eqref{eq:phase-increment} imply
    \begin{align*}
    \lefteqn{ \Big| \mathbb E \big[\Corr(\bg_1,\bg_2) \big] - \mathbb E \big[ \Corr_{\mathrm{ind}}(\bg_1,\bg_2,\bm{\phi}) \big] \Big| } \\
    & \leq\sum_{k=1}^{2^r-1} \Big| \mathbb E \big[ \Corr_{k+1}(\bg_1,\bg_2,\bm{\phi}') \big] - \mathbb E \big[ \Corr_k(\bg_1,\bg_2,\bm{\phi}') \big] \Big|\leq\frac{2^r-1}{D},
    \end{align*}
    as claimed.
\end{proof}

\subsubsection{Finishing the proof of Theorem \ref{thm:intermediatequantumlowerbound}} \label{subsubsec:finishingproof}

 We now turn to completing the proof of Theorem \ref{thm:intermediatequantumlowerbound}. We will need the following theorem.
\begin{theorem}\label{thm:estimateforT}
    Take $d \geq 1$, and let $\{O_s\}_{s \in S}$ be a collection of maps $G^S \to \Herm(d)$ satisfying, for all $\bh \in G^S$, given a uniformly random $\phi \in \mathbb{F}_3$:
    \[
\|O_s(\bh)\|_\infty\leq1,
\qquad
\bbE_\phi[O_s(\zeta^\phi \bh)] = 0.
\]
    Define the map $T: G^S \times G^S \to \Herm(d)$ as:
    \[T(\bg_1, \bg_2) = \bbE_{s \in S}[O_s(\bg_1 \shift_s(\bg_2))]\]
    Then, for all even $p \geq 2$, 
    \[\bbE_{\bg_1, \bg_2}[\tr(T(\bg_1, \bg_2)^p)] \leq O\left(d \left(\frac{4p}{2^r}\right)^{p/2} + d^p\left(\frac{2^r-1}{D}\right)\right)\]
\end{theorem}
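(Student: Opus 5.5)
The plan is to compare $\bbE[\tr(T^p)]$ with the same quantity for the ``independent-phase'' operator
\[
\tT(\bg_1,\bg_2,\bm{\phi}) = \bbE_{s\in S}\bigl[O_s(\zeta^{\phi_s}\bg_1\shift_s(\bg_2))\bigr].
\]
The comparison will go through Theorem~\ref{thm:ind-corr}. The moment of $\tT$ will then be bounded by Corollary~\ref{cor:ind-moment}.

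\textbf{Step 1: moment of $\tT$.} Condition on $\bg_1,\bg_2$. The matrices $A_s = O_s(\zeta^{\phi_s}\bg_1\shift_s(\bg_2))$ are then independent across $s$. Each is Hermitian with $\|A_s\|_\infty\le 1$. Each also has mean zero, because $\bbE_{\phi}[O_s(\zeta^\phi\bh)]=0$ with $\bh=\bg_1\shift_s(\bg_2)$. Corollary~\ref{cor:ind-moment} with $n=2^r$ gives
\[
\bbE_{\bm{\phi}}\bigl[\tr\bigl((\textstyle\sum_s A_s)^p\bigr)\bigr]\le d(4p2^r)^{p/2}.
\]
Dividing by $2^{rp}$ yields $\bbE[\tr(\tT^p)]\le d(4p/2^r)^{p/2}$ for every $\bg_1,\bg_2$, and hence also on average.

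\textbf{Step 2: reduce the trace to scalar products.} Expand
\[
\tr(T^p)=2^{-rp}\sum_{s_1,\dots,s_p}\tr\bigl(O_{s_1}(\bh_{s_1})\cdots O_{s_p}(\bh_{s_p})\bigr),
\qquad \bh_s=\bg_1\shift_s(\bg_2).
\]
To fit the product form of Theorem~\ref{thm:ind-corr}, expand each trace entrywise in a fixed basis:
\[
\tr\bigl(O_{s_1}\cdots O_{s_p}\bigr)=\sum_{i_1,\dots,i_p\in[d]}\prod_{m=1}^p O_{s_m}(\bh_{s_m})_{i_m,i_{m+1}},
\qquad i_{p+1}=i_1.
\]
For fixed indices $(s_m,i_m)_m$, group the factors by shift $s$. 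This defines
\[
w_s(\bh)=\prod_{m:\,s_m=s}O_s(\bh)_{i_m,i_{m+1}},
\]
with $w_s\equiv 1$ when no $s_m$ equals $s$. Each entry of $O_s(\bh)$ has modulus at most $\|O_s(\bh)\|_\infty\le1$, so $\|w_s\|_\infty\le1$. The term is then exactly $\prod_{s\in S} w_s(\bh_s)$. The same indexing applied to $\tT$ gives $\prod_s w_s(\zeta^{\phi_s}\bh_s)$, since every occurrence of a given $s$ uses the same phase $\phi_s$.

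\textbf{Step 3: compare and sum.} Theorem~\ref{thm:ind-corr} bounds the difference of the two expectations by $(2^r-1)/D$ for each index tuple. There are $2^{rp}d^p$ index tuples, weighted by $2^{-rp}$. Summing therefore gives
\[
\bigl|\bbE\tr(T^p)-\bbE\tr(\tT^p)\bigr|\le d^p\,\frac{2^r-1}{D}.
\]
Combined with Step 1, this is the claimed bound.

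The main obstacle is Step 2, namely arranging the trace as a product of uniformly bounded scalar functions with one function per shift. Expanding into matrix entries resolves it, at the cost of the $d^p$ factor, which the statement allows. One might instead hope to bound each full trace monomial by $d$, but that does not split into per-$s$ factors when the same shift $s$ appears non-consecutively in the product. The entrywise expansion handles that case cleanly.
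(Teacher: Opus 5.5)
Your proposal is correct and follows the paper's proof essentially step for step. You bound the moment of the independent-phase operator $\tT$ via Corollary~\ref{cor:ind-moment}, conditioning on $\bg_1,\bg_2$ and using the mean-zero hypothesis. You control $\bigl|\bbE\tr(T^p)-\bbE\tr(\tT^p)\bigr|$ by expanding the trace entrywise, grouping factors by shift into bounded functions $w_s$, and applying Theorem~\ref{thm:ind-corr} to each of the $2^{rp}d^p$ index tuples, exactly as the paper does.
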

\begin{proof}
    For a given $\bm{\phi} \in \mathbb{F}_3^S$, write:
    \[\tT(\bg_1, \bg_2, \bm{\phi}) = \bbE_{s \in S}[O_s(\zeta^{\phi_s} \bg_1 \shift_s(\bg_2))]\]
    We will show that, if $\bg_1, \bg_2, \bm{\phi}$ are uniformly random, we have:
\begin{equation} \label{eq:bound-tildeT} 
    \bbE_{\bg_1, \bg_2, \bm{\phi}}[\tr(\tT(\bg_1, \bg_2, \bm{\phi})^p)] \leq O\left(d \left(\frac{4p}{2^r}\right)^{p/2}\right)
    \end{equation}
    \begin{equation} \label{eq:bound-diff} 
    \left|\bbE_{\bg_1, \bg_2}[\tr(T(\bg_1, \bg_2)^p)] - \bbE_{\bg_1, \bg_2, \bm{\phi}}[\tr(\tT(\bg_1, \bg_2, \bm{\phi})^p)]\right| \leq O\left(d^p\left(\frac{2^r-1}{D}\right)\right)
    \end{equation}
    These two equations can be seen to imply the result.

    To derive equation \eqref{eq:bound-tildeT}, note that for a given choice of $\bg_1, \bg_2$, the matrices $O_s(\zeta^{\phi_s} \bg_1 \shift_s(\bg_2))$ are all independent, and the hypotheses guarantee that they have spectral norm bounded by one and have expectation zero. The claimed bound then follows from Corollary \ref{cor:ind-moment} for every choice of $\bg_1, \bg_2$, and hence also for the average over them. 
   We now turn to deriving equation \eqref{eq:bound-diff}. To do this, we note that we can expand the trace of the product of $p$ matrices as
\[ \tr(A_1\ldots A_p) = \sum_{i_1,\ldots, i_p} (A_1)_{i_1i_2}(A_2)_{i_2i_3}\ldots (A_p)_{i_pi_1}. \]
Using this, we can expand the trace of the $p$-th power of $T$ as:
\begin{equation}\label{eq:Tp-expansion}
\tr\left(T(\bg_1,\bg_2)^p\right)= \frac{1}{|S|^p}
\sum_{s_1,\ldots,s_p\in S}
\sum_{i_1,\ldots,i_p=1}^{d}
\prod_{j=1}^p
\left[O_{s_j}\bigl(\bg_1\shift_{s_j}(\bg_2)\bigr)\right]_{i_j,i_{j+1}}.
\end{equation}
where $\left[O_{s_j}\bigl(\bg_1\shift_{s_j}(\bg_2)\bigr)\right]_{i_j,i_{j+1}}$ denotes the $(i_j,i_{j+1})$-th entry of the $d\times d$ matrix, and we have adopted the convention $i_{p+1}=i_1$. Similarly,
\begin{equation}\label{eq:tTp-expansion}
\tr\left(\tT(\bg_1,\bg_2, \bm{\phi})^p\right)= \frac{1}{|S|^p}
\sum_{s_1,\ldots,s_p\in S}
\sum_{i_1,\ldots,i_p=1}^{d}
\prod_{j=1}^p
\left[O_{s_j}\bigl(\zeta^{\phi_{s_j}}\bg_1\shift_{s_j}(\bg_2)\bigr)\right]_{i_j,i_{j+1}}.
\end{equation}

Fixing $\bs=(s_1,\ldots,s_p)$ and $\bi=(i_1,\ldots,i_p)$, we will now define a function that is the product of all the terms in one summand in \eqref{eq:Tp-expansion} that have their $s_j=s$, for some $s\in S$. Define
\[
w^{\bs,\bi}_s(\bh)=
\prod_{\substack{1\leq j\leq p\\s_j=s}}
[O_{s_j}(\bh)]_{i_j,i_{j+1}},
\]
where an empty product is $1$. \eqref{eq:Tp-expansion} and \eqref{eq:tTp-expansion} can now be rewritten in terms of these functions as
\begin{gather}
\tr\left(T(\bg_1,\bg_2)^p\right) = \frac{1}{|S|^p}  \sum_{\bs,\bi}\prod_{s\in S}w^{\bs,\bi}_s\bigl(\bg_1\shift_s(\bg_2)\bigr) \label{eq:Tp-ws} \\
\tr\left(\tT(\bg_1,\bg_2, \bm{\phi})^p\right) = \frac{1}{|S|^p}  \sum_{\bs,\bi}\prod_{s\in S}w^{\bs,\bi}_s\bigl(\zeta^{\phi_s}\bg_1\shift_s(\bg_2)\bigr). \label{eq:tTp-ws}
\end{gather}

We note that every matrix entry of the $O_s(\bh)$ matrix has absolute value at most $\|O_s(\bh)\|_\infty\leq 1$, so $|w^{\bs,\bi}_s(\bh)|\leq 1$, and we can apply Theorem~\ref{thm:ind-corr} to these functions. The summands in \eqref{eq:Tp-ws} and \eqref{eq:tTp-ws} are precisely the two expressions that Theorem~\ref{thm:ind-corr} compares, in expectation over $\bg_1,\bg_2$. The theorem gives us
\[ \left|\bbE_{\bg_1,\bg_2}\left[\prod_{s\in S}w^{\bs,\bi}_s\bigl(\bg_1\shift_s(\bg_2)\bigr)\right] - \bbE_{\bg_1,\bg_2,\bm{\phi}}\left[\prod_{s\in S}w^{\bs,\bi}_s\bigl(\zeta^{\phi_s}\bg_1\shift_s(\bg_2)\bigr)\right]\right| \leq \frac{2^r-1}{D}.\]
Thus, using the above equation and equations~\eqref{eq:Tp-ws} and \eqref{eq:tTp-ws} we get,
\begin{align*}
& \left|
\bbE_{\bg_1,\bg_2}\left[\tr\left(T(\bg_1,\bg_2)^p\right)\right]
-
\bbE_{\bg_1,\bg_2,\bm{\phi}}\left[\tr\left(\tT(\bg_1,\bg_2,\bm{\phi})^p\right)\right]
\right| \\
&\leq \frac{1}{|S|^p}  \sum_{\bs,\bi}\left|\bbE_{\bg_1,\bg_2}\left[\prod_{s\in S}w^{\bs,\bi}_s\bigl(\bg_1\shift_s(\bg_2)\bigr)\right] - \bbE_{\bg_1,\bg_2,\bm{\phi}}\left[\prod_{s\in S}w^{\bs,\bi}_s\bigl(\zeta^{\phi_s}\bg_1\shift_s(\bg_2)\bigr)\right]\right| \\
& \leq \frac{1}{|S|^p}  \sum_{\bs,\bi}\frac{2^r-1}{D} \\ 
&\leq \frac{2^r-1}{D}d^p 
\end{align*}
where the fact that there are $d^p$ choices for $\bi$ was used in the last line. This completes the derivation of equation \eqref{eq:bound-diff}, and hence finishes the proof.
\end{proof}

We can now complete the proof of Theorem \ref{thm:intermediatequantumlowerbound}.
\begin{proof}[Proof of Theorem \ref{thm:intermediatequantumlowerbound}]
    We take the promised input distribution $\mu$ to be the following. First, Alice's inputs $\bg_1, \bg_2$ are sampled uniformly in $G^S$, and $s$ is sampled uniformly in $S$. Set $\bh_0=  \bg_1\shift_s(\bg_2)$. Then, taking $\phi\in\bbF_3$ to be sampled with the following probabilities:
\[ \Pr[\phi=0] = \frac12, \qquad \Pr[\phi=1]=\Pr[\phi=2]=\frac14,\]
Bob's inputs are taken to be $s$ and $\bh=\zeta^\phi \bh_0$. Note that since $\zeta$ has order 3 by assumption, we have that $\ShiftEq_{G,r}((\bg_1, \bg_2), 
(\bh, s)) = 1$ if and only if $\phi = 0$. 

We now use the characterization of the optimal success probability of a quantum protocol which is given by Lemma \ref{lem:op-norm-bound}. Letting $\{M^{\bh, s}_b\}$ be a collection of $2^C$-dimensional binary POVMs indexed by Bob's inputs, we see that given Alice's inputs $\bg_1, \bg_2$, setting $\bh_0 = \bg_1(\shift_s(\bg_2))$, we have:
\begin{align*}
\bbE_{\phi}[M^{\bh, s}_{\ShiftEq_{G,r}((\bg_1, \bg_2), 
(\bh, s))}]  &= \frac{1}{2}M^{\bh_0, s}_1 + \frac{1}{4}M^{\zeta \bh_0, s}_0 + \frac{1}{4}M^{\zeta^2 \bh_0, s}_0    \\
            &= \frac\id2 +  \frac{1}{2}M^{\bh_0, s}_1 - \frac{1}{4}M^{\zeta \bh_0, s}_1 - \frac{1}{4}M^{\zeta^2 \bh_0, s}_1
\end{align*}
where we used that $M^{\bh, s}_0 + M^{\bh, s}_1 = \id$ since these are POVMs. Hence, if we set:
\[O_s(\bh_0)=\frac12 M^{\bh_0,s}_1-\frac14 M^{\zeta\bh_0,s}_1-\frac14 M^{\zeta^2\bh_0,s}_1\]
We see from Lemma \ref{lem:op-norm-bound} that, in order to bound the success probability of any protocol, we need to bound:
\[1/2 + \bbE_{\bg_1, \bg_2}[\|\bbE_s[O_s(\bh_0)]\|_\infty]\]
We will show that the functions $O_s$ satisfy the conditions of Theorem \ref{thm:estimateforT}, in which we take $d=2^C$. This then implies, for any even $p \geq 2$:
\begin{align*}
    \bbE_{\bg_1, \bg_2}[\|\bbE_s[O_s(\bh_0)]\|_\infty] &\leq \bbE_{\bg_1, \bg_2}[\tr(\bbE_s[O_s(\bh_0)]^p)^{1/p}]\\
    &\leq \bbE_{\bg_1, \bg_2}[\tr(\bbE_s[O_s(\bh_0)]^p)]^{1/p}\\
    &\leq O\left(d \left(\frac{4p}{2^r}\right)^{p/2} + d^p\left(\frac{2^r-1}{D}\right)\right)^{1/p}\\
    &\leq O\left(d^{1/p} \sqrt\frac{4p}{2^r} + d\left(\frac{2^r-1}{D}\right)^{1/p}\right), 
\end{align*}
where we used the fact that the spectral norm is upper bounded by every Schatten $p$-norm in the first line, Jensen's inequality was applied in the second, Theorem \ref{thm:estimateforT} in the third and subadditivity in the fourth. This gives the desired result. 

That $\|O_s(\bh)\|_\infty \leq 1$ for all $\bh$ is simple to see. Finally, since $\zeta^3=e$, the identity element of $G$, multiplying $\bh$ successively by $\zeta$ cycles through $\bh,\zeta\bh,\zeta^2\bh$. Consequently, each of the three corresponding POVM elements occurs with coefficients $1/2,-1/4,-1/4$ in the sum, giving
\[
\sum_{j=0}^2 O_s(\zeta^j\bh)
=\left(\frac12-\frac14-\frac14\right)\sum_{j=0}^2 M^{\zeta^j\bh,s}_1
=0. \qedhere
\]

\end{proof}

\section{Completing the proof of Theorem \ref{thm:maintheorem}}\label{sec:heis-group}

We now complete the proof of Theorem \ref{thm:maintheorem} by supplying a concrete instantiation of the $\ShiftEq$ problem that was investigated in the previous section. We will show: 

\begin{theorem}
\label{thm:heis-group}
For all $r \geq 1$, there exists a group $G_r$ of size $|G_r|=3^{2^{2r+1}+1}$ such that in the statement of Theorem \ref{thm:quantumlowerbound}, one may take $D$ to be~$3^{2^{2r}}$.
\end{theorem}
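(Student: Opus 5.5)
We take $G_r$ to be the generalized Heisenberg group over $\bbF_3$ of dimension $n=2^{2r}$. Concretely, $G_r=\bbF_3^n\times\bbF_3^n\times\bbF_3$ with multiplication
\[
(a,b,c)\cdot(a',b',c')=(a+a',\,b+b',\,c+c'+a\cdot b'),
\]
where $a\cdot b'$ is the standard dot product over $\bbF_3$. Its order is $3^{2n+1}=3^{2^{2r+1}+1}$, as required. We take $\zeta=(0,0,1)$. This element is central, since the cocycle term only involves the $a,b$ coordinates. It has order exactly $3$. It remains to show that every representation of $G_r$ in which $\zeta$ acts nontrivially has dimension at least $3^n$. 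Since any representation decomposes into irreps, and $\zeta$ is nontrivial in the sum only if it is nontrivial in some summand, it suffices to treat irreps. That gives exactly the hypothesis of Theorem~\ref{thm:quantumlowerbound} with $D=3^{2^{2r}}$.

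Let $\rho$ be an irrep with $\rho(\zeta)\neq\id$. By the Schur Lemma (Proposition~\ref{prop-schur}), $\rho(\zeta)=\upomega^{t}\id$ for some $t\in\{1,2\}$, where $\upomega=e^{2\pi\complexi/3}$. Consider the abelian normal subgroup $A=\{(a,0,c)\}\cong\bbF_3^{n+1}$. We restrict $\rho$ to $A$ and decompose it into characters $\chi_{u,t}(a,0,c)=\upomega^{u\cdot a+tc}$ for $u\in\bbF_3^n$; the $c$-part is forced to be $t$ by the central character.

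Now conjugate by $y_b=(0,b,0)$. A direct computation gives $y_b(a,0,c)y_b^{-1}=(a,0,c-a\cdot b)$. Hence if $v$ is a weight vector for $\chi_{u,t}$, then $\rho(y_b)v$ is a weight vector for $\chi_{u-tb,t}$, up to the sign convention. Because $t\neq0$ is invertible mod $3$, the map $b\mapsto u-tb$ hits all $3^n$ values of $u$. The weight spaces for distinct characters are linearly independent, so $d_\rho\ge3^n$.

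The main obstacle is only bookkeeping. One must verify the conjugation formula carefully, including the inverse $(a,b,c)^{-1}=(-a,-b,-c+a\cdot b)$, and check that the characters obtained are genuinely distinct. No other earlier results are needed beyond the Schur Lemma and the decomposition of representations into irreps.
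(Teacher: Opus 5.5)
Your proof is correct and is essentially the paper's argument. The paper takes $G_r=\rH_m(\bbF_3)$ with $m=2^{2r}$ and $\zeta=(0,0,1)$, picks a common eigenvector of the commuting operators $\rho(u,0,0)$, and uses the relation $\rho(u,0,0)\rho(0,v,0)=\upomega^{s\,u\cdot v}\rho(0,v,0)\rho(u,0,0)$ to produce $3^m$ eigenvectors with pairwise distinct joint eigenvalues. This is exactly your translation of weight spaces of $A=\{(a,0,c)\}$ under conjugation by $y_b$; the sign actually comes out as $u+tb$, which is immaterial.

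The paper additionally exhibits a $3^m$-dimensional irrep to show $D(\rH_m,\zeta)=3^m$ exactly, which the statement does not need. Your explicit reduction from arbitrary representations to irreps fills in a step the paper leaves implicit.
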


This lets us conclude \Cref{thm:maintheorem}.
\begin{proof}[Proof of Theorem \ref{thm:maintheorem} assuming Theorem \ref{thm:heis-group}]
    For $r \geq 1$, consider the communication problem $\ShiftEq_{G_r, r}$, with $G_r$ as in the statement of Theorem \ref{thm:heis-group}. 
    
    By \Cref{thm:reductiontoMEMB,thm:upperboundsubgroupmembership}, there is a one-way entanglement-assisted protocol for $\ShiftEq_{G_r, r}$ with $\Theta(r)$ bits of communication. Letting $\tG_r$ be the group promised by the statement of Theorem \ref{thm:reductiontoMEMB}, we see that $\log |\tG_r| = \Theta(2^{3r})$ shared EPR pairs suffice to perform the protocol promised by Theorem \ref{thm:upperboundsubgroupmembership}. On the other hand, in the entanglement-unassisted setting, Theorem \ref{thm:quantumlowerbound} gives:
    \[Q^{1, \text{pub}}_{1/3}(\ShiftEq_{G_r, r}) = \Omega(\min \{ 2^r, 2^{2r} (\log 3) / 2^r\}) = \Omega(2^r).\]
    A matching upper bound of $O(2^r)$ is provided by the unassisted classical SMP protocol for bounded-order subgroup membership in \Cref{thm:subgroup_classical_other}, combined with Theorem~\ref{thm:reductiontoMEMB}, using that the subgroup size is $k = 2^{r+1}$.
    
    We compute the length of Alice and Bob's inputs when represented as bit strings:
    \[\log |G^S \times G^S| = 2 |S| \log|G| = \Theta(2^{3r}), \qquad \text{and} \]
    \[\log |G^S \times S| = |S|\log|G| + \log |S| = \Theta(2^{3r}). \]
    Let $n_r$ be the maximum of the two input lengths above, rounded up to an integer. By padding if necessary, choose surjective maps from $\{0,1\}^{n_r}$ onto Alice's and Bob's respective input sets, and define $f_{n_r}$ by applying $\ShiftEq_{G_r,r}$ to the decoded inputs. Since the maps are surjective, the communication lower bounds for $\ShiftEq_{G_r,r}$ also apply to $f_{n_r}$, while the upper bounds are unchanged. Finally, standard padding defines $f_n$ for all $n$, without changing the asymptotic bounds. This finishes the proof of Theorem \ref{thm:maintheorem}.
\end{proof}


We return to the construction of a suitable family of groups as required for \Cref{thm:heis-group}.

\begin{proof}[Proof of \Cref{thm:heis-group}]
For $m\geq 1$, the generalized Heisenberg group $\rH_m$ over a field $F$ has group elements of the form $(u, v, w)$, where $u,v \in F^m$, and $w\in F$. Its group operation is defined as
\[
(u_1, v_1, w_1)(u_2, v_2, w_2) = (u_1 + u_2, v_1 + v_2, w_1 + w_2 + u_1 \cdot v_2).  
\]
where $x \cdot y = \sum_{i = 1}^m x_i y_i $ is the standard scalar product over $F^m$.

To understand the group operation in $\rH_m(F)$, we look at its standard representation, which we denote by $H$, on the $(m+2)$-dimensional vector space over $F$:
\[ H((u,v,w)) = \begin{pmatrix} 1 & u^\transpose & w \\ 0 & \id_m & v \\ 0 & 0 & 1\end{pmatrix}. \]
Then,
\begin{align*}
H((u_1,v_1,w_1))H((u_2,v_2,w_2)) & = \begin{pmatrix} 1 & u_1^\transpose & w_1 \\ 0 & \id_m & v_1 \\ 0 & 0 & 1\end{pmatrix}\begin{pmatrix} 1 & u_2^\transpose & w_2 \\ 0 & \id_m & v_2 \\ 0 & 0 & 1\end{pmatrix} \\
& = \begin{pmatrix} 1 & u_1^\transpose + u_2^\transpose & w_1+w_2+u_1\cdot v_2 \\ 0 & \id_m & v_1+v_2 \\ 0 & 0 & 1\end{pmatrix} \\
& = H((u_1,v_1,w_1)(u_2,v_2,w_2)),
\end{align*}
which means that the group operation is just matrix multiplication of upper triangular matrices.

We will take $F=\bbF_3$, and consider the group $\rH_m(\bbF_3)$. Below, we omit the field underlying $\rH_m$ for ease of notation.

Let~$\zeta = (0,0,1)$. We may verify that~$\zeta \in Z(\rH_m)$ and has order~$3$. 

We bound the smallest dimension~$D(\rH_m, \zeta)$ of a unitary irrep~$\rho$ of~$\rH_m$ such that~$\rho(\zeta) \neq \id$.
\begin{lemma}
\label{lemma-irrep-bound}
$D(\rH_m, \zeta) = 3^m$.
\end{lemma}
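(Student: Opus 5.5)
We will prove the two inequalities $D(\rH_m,\zeta)\ge 3^m$ and $D(\rH_m,\zeta)\le 3^m$ separately. Both rest on the central character. Let $\rho$ be a unitary irrep of $\rH_m$ with $\rho(\zeta)\neq\id$. Since $\zeta$ is central of order $3$, the argument at the start of the proof of Lemma~\ref{lem-mixing-bound} (Schur's Lemma) gives $\rho(\zeta)=\upomega^{c}\id$ for some $c\in\bbF_3\setminus\{0\}$. More generally, $\rho((0,0,w))=\upomega^{cw}\id$, because $(0,0,w)=\zeta^w$.

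For the lower bound, we examine the restriction of $\rho$ to the abelian normal subgroup $A=\{(0,v,w)\}\cong\bbF_3^{m+1}$. This restriction decomposes into characters $\chi_{a}(0,v,w)=\upomega^{a\cdot v+cw}$ with $a\in\bbF_3^m$, where the $w$-part is forced to be $\upomega^{cw}$ by the previous paragraph. Next we compute conjugation by $x=(u,0,0)$. From the group law,
\[
(u,0,0)(0,v,w)(u,0,0)^{-1}=(0,v,w+u\cdot v).
\]
Hence if $\psi$ is a vector on which $A$ acts by $\chi_a$, then $\rho(x)\psi$ is a vector on which $A$ acts by $\chi_{a+cu}$ (or $\chi_{a-cu}$, depending on the direction of conjugation). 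Since $c\neq 0$, as $u$ ranges over $\bbF_3^m$ all $3^m$ characters $\chi_{a'}$ occur. Eigenvectors for distinct characters are linearly independent, so $d_\rho\ge 3^m$.

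For the upper bound, it suffices to exhibit one irrep of dimension $3^m$ with $\rho(\zeta)\neq\id$. We take the Schr\"odinger-type representation on $\complex^{\bbF_3^m}$:
\[
\rho(u,v,w)\ket{x}=\upomega^{\,w+x\cdot v}\ket{x+u}.
\]
The normalization must be chosen so that it matches the group law $(u_1,v_1,w_1)(u_2,v_2,w_2)=(u_1+u_2,v_1+v_2,w_1+w_2+u_1\cdot v_2)$. If the formula above fails the homomorphism check, we will replace the phase by $\upomega^{w+(x+u)\cdot v}$; one of these two variants composes correctly, which is a routine check. This representation is unitary and has $\rho(\zeta)=\upomega\id\neq\id$. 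Irreducibility follows from the lower bound already proved: decompose $\rho$ into irreps; the central element acts by $\upomega\id$ on each summand, so each summand has dimension at least $3^m$ and is therefore all of $\rho$. Alternatively, one can use the character norm: the trace of $\rho$ vanishes off the centre and equals $3^m\upomega^{w}$ on the centre, so $\langle\chi,\chi\rangle=3\cdot 3^{2m}/|\rH_m|=1$, using $|\rH_m|=3^{2m+1}$.

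We expect the main obstacle to be bookkeeping rather than any real difficulty. The sign and direction conventions in the conjugation computation and in the explicit representation must be consistent with the stated multiplication rule. Once they are, the two inequalities together give $D(\rH_m,\zeta)=3^m$. Taking $m=2^{2r}$ then gives Theorem~\ref{thm:heis-group}, with $|G_r|=3^{2m+1}=3^{2^{2r+1}+1}$.
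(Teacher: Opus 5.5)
Your lower bound is correct. It is the paper's argument with the roles of $u$ and $v$ exchanged. The paper takes a common eigenvector of the commuting operators $\rho(u,0,0)$ and moves it with $\rho(0,v,0)$ via $X_uY_v=\upomega^{s\,u\cdot v}Y_vX_u$. You take a character $\chi_a$ of the normal abelian subgroup $\{(0,v,w)\}$ and move it by conjugating with $(u,0,0)$; indeed $\rho(u,0,0)$ carries the $\chi_a$-eigenspace to the $\chi_{a-cu}$-eigenspace. Your irreducibility argument is also valid and shorter than the paper's transitivity argument: every irreducible summand has $\zeta$ acting by $\upomega\id$, hence has dimension at least $3^m$. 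The character-norm computation works as well.

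The gap is in the upper bound. Neither of your candidate formulas is a representation for the law $w_1+w_2+u_1\cdot v_2$, so the claim that one of them ``composes correctly'' is false. As written, you have no witness for $D(\rH_m,\zeta)\le 3^m$. Write $g_i=(u_i,v_i,w_i)$.
\begin{itemize}
\item For $\rho(u,v,w)\ket{x}=\upomega^{w+x\cdot v}\ket{x+u}$, a direct computation gives $\rho(g_1)\rho(g_2)=\upomega^{u_2\cdot v_1-u_1\cdot v_2}\rho(g_1g_2)$. In fact $\rho(g_1)\rho(g_2)=\rho(g_2g_1)$, so this is an anti-homomorphism.
\item For the variant $\upomega^{w+(x+u)\cdot v}\ket{x+u}$, one gets $\rho(g_1)\rho(g_2)=\upomega^{u_1\cdot v_2}\rho(g_1g_2)$.
\end{itemize}
Concretely, take $m=1$, $g_1=(1,0,0)$, $g_2=(0,1,0)$, so $g_1g_2=(1,1,1)$. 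Under either formula $\rho(g_1)\rho(g_2)\ket{0}=\ket{1}$. However, $\rho(g_1g_2)\ket{0}$ is $\upomega\ket{1}$ for the first formula and $\upomega^2\ket{1}$ for the second.

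The repair is a sign: $\rho(u,v,w)\ket{x}=\upomega^{\,w-(x+u)\cdot v}\ket{x+u}$ is a homomorphism. So is the paper's $\sigma(u,v,w)\ket{x}=\upomega^{w+u\cdot x}\ket{x+v}$, in which $v$ shifts and $u$ supplies the phase. Either one is unitary, sends $\zeta\mapsto\upomega\id$, and has trace $3^m\upomega^{w}$ on $\{(0,0,w)\}$ and $0$ elsewhere. With it, the rest of your argument goes through unchanged.
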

\begin{proof}
Let~$\rho \in \hrH_m$ be a unitary irrep of~$\rH_m$ such that $\rho(\zeta)\neq I$, with dimension~$d_\rho$. As stated before, since~$\zeta$ is in the center of~$\rH_m$, the operator~$\rho(\zeta)$ commutes with~$\rho(g)$ for every~$g \in \rH_m$. Hence, by the Schur Lemma, we have~$\rho(\zeta) = \lambda \id$ for some complex number~$\lambda$. Since~$\zeta$ has order three, we have~$\lambda = \upomega^{s(\rho)}$ for some~$s(\rho) \in \{1,2\}$, where~$\upomega = \exp(2 \pi \complexi / 3)$.

For~$u,v \in \finitefield_3^m$, let~$X_u = \rho(u, 0 , 0)$ and~$Y_v = \rho(0, v, 0)$. Since
\[
(u, 0, 0)( 0, v, 0) = (u, v, u \cdot v) = (0, 0, u \cdot v) ( 0, v, 0)(u, 0, 0),
\]
we have~$X_u Y_v = \upomega^{s\, u\cdot v} Y_v X_u$. Since the elements~$(u,0,0)$ all commute, the operators~$(X_u : u \in \finitefield_3^m)$ all commute. The operators~$(X_u)$ are therefore simultaneously diagonalizable. Let~$\ket{\phi}$ be a common eigenvector of all the operators~$(X_u)$. We claim that the vectors~$\{ Y_v \ket{\phi} : v \in \finitefield_3^m \}$ are mutually orthogonal. 

Fix~$v \in \finitefield_3^m$. Observe that~$X_u Y_v \ket{\phi} = \upomega^{s\, u\cdot v} Y_v X_u \ket{\phi} = \lambda_u \upomega^{s\, u\cdot v} Y_v \ket{\phi}$, where~$\lambda_u$ is the eigenvalue of~$X_u$ corresponding to~$\ket{\phi}$. Since~$Y_v$ is unitary, we have~$Y_v \ket{\phi} \neq 0$. Hence~$Y_v \ket{\phi}$ is also a simultaneous eigenvector of~$X_u$ for all~$u$, and the corresponding eigenvalue of~$X_u$ is~$\lambda_u \upomega^{s\, u\cdot v}$.

Consider~$w \in \finitefield_3^m$ distinct from~$v$, and let~$j \in [m]$ be a coordinate in which~$v$ and~$w$ differ. Let~$e \in \finitefield_3^m$ be the element such that~$e_i = 0$ in all coordinates~$i \neq j$, and~$e_j = 1$. Then~$Y_v \ket{\phi}$ and~$Y_w \ket{\phi}$ are both eigenvectors of~$X_e$ with eigenvalues~$\lambda_e \upomega^{s v_j}$ and~$\lambda_e \upomega^{s w_j} $, respectively. Since~$v_j \not\equiv w_j \pmod{3}$ and $s \not\equiv 0 \pmod{3}$, we have~$\upomega^{s v_j} \neq \upomega^{s w_j}$. Since~$X_u$ is unitary, we have~$\abs{\lambda_u} = 1$ for all~$u$. So the eigenvectors~$Y_v \ket{\phi}$ and~$Y_w \ket{\phi}$ of~$X_e$ have distinct eigenvalues. Since~$X_e$ is unitary, the eigenvectors~$Y_v \ket{\phi}$ and~$Y_w \ket{\phi}$ are mutually orthogonal. Altogether, we conclude that the vectors~$\{ Y_v \ket{\phi} : v \in \finitefield_3^m \}$ span a space of dimension at least~$3^m$, i.e.,~$D(\rH_m, \zeta) \ge 3^m$. 

Conversely, we exhibit an irrep that has dimension~$3^m$. Let~$W = \complex^{\finitefield_3^m} = \Span \big\{ \ket{x} : x \in \finitefield_3^m \big\}$. Define a map~$\sigma : \rH_m \rightarrow \linear(W)$ by
\[
\sigma(u,v,z) \ket{x} = \upomega^{z + u\cdot x} \ket{x + v} \qquad \text{for all } x \in \finitefield_3^m,
\]
with~$\upomega$ as above. We may verify that~$\sigma$ is a group homomorphism and that~$\sigma(\zeta) = \upomega \id$. By construction, $\sigma$ has dimension~$3^m$. The operators~$(\sigma(u,0,0) : u \in \finitefield_3^m)$ are simultaneously diagonalizable in the standard basis~$(\ket{x} :  x \in \finitefield_3^m)$, with each joint eigenspace being one-dimensional. However, the operators~$(\sigma(0,v,0) : v \in \finitefield_3^m)$ act transitively on these eigenspaces, as for any~$x,x' \in \finitefield_3^m$, we have~$\sigma(0,x'-x,0) \ket{x} = \ket{x'}$. So no proper subset of the joint eigenspaces of~$(\sigma(u,0,0))_u$ is preserved by~$(\sigma(0,v,0))_v$ and the representation~$\sigma$ is irreducible. The dimension of~$\sigma$ is~$3^m$ by construction.
\end{proof}

We obtain a suitable family of groups~$G_r$ by an appropriate choice of the parameter~$m$. In order to obtain the largest possible bound from \Cref{thm:quantumlowerbound}, we would like~$(\log D(\rH_m, \zeta))/2^r \ge 2^r$. We take~$m = 2^{2r}$, and set~$G_r = \rH_m$ to conclude the theorem.
\end{proof}

\begin{remark}
The group $H_m$ we have picked above is optimal in terms of the value of $D(G,\zeta)$ relative to $|G|$. For the central $\zeta$, it can be shown $D(G,\zeta)^2 \leq \frac{|G|}{|Z(G)|}\leq \frac{|G|}{3}$ (since by the definition of $\zeta$, $Z(G)$ contains at least three distinct elements, $\zeta, \zeta^2, \zeta^3$). This inequality is saturated by the Heisenberg group.
\end{remark}

\printbibliography

\appendix

\end{document}